\def\bra#1{\left\langle#1\right|}
\def\ket#1{\left|#1\right\rangle}
\def\kc#1{\left(#1\right)}
\def\kd#1{\left[#1\right]}
\def\ke#1{\left\{#1\right\}}
\def\be{\begin{equation}}       \def\ee{\end{equation}}
\def\bea{\begin{eqnarray}}      \def\eea{\end{eqnarray}}
\def\ba{\begin{array}}
\def\ea{\end{array}}
\def\bnum{\begin{enumerate} }
\def\enum{\end{enumerate}}
\def\bpm{\begin{pmatrix}[0.5]}
\def\epm{\end{pmatrix}}
\def\bbm{\begin{bmatrix}[0.5]}
\def\ebm{\end{bmatrix}}
\def\=>{\Rightarrow}
\def\>{\rightarrow}
\def\eye2{Fathbb{I}}
\def\m{\textrm{matter}}
\def\Tr{\mathrm{Tr}}
\renewcommand{\>}{\rangle}
\newtheorem{definition}{Definition}
\newtheorem{theorem}{Theorem}
\newtheorem{lemma}{Lemma}
\renewcommand*\env@matrix[1][\arraystretch]{
  \edef\arraystretch{#1}
  \hskip -\arraycolsep
  \let\@ifnextchar\new@ifnextchar
  \array{*\c@MaxMatrixCols c}}
\begin{document}

\newcommand{\IHEP}{\affiliation{Institute of High Energy Physics, Chinese Academy of Sciences, Beijing 100049, China}}
\newcommand{\UCAS}{\affiliation{School of Physics, University of Chinese Academy of Sciences, Beijing 100049, China}}
\newcommand{\ITP}{\affiliation{Institute of Theoretical Physics, Chinese Academy of Science, Beijing 100190, China}}

\title{Tensor chain and constraints in tensor networks}
\author{Yi Ling}
\email{lingy@ihep.ac.cn}
\author{Yuxuan Liu}
\email{liuyuxuan@ihep.ac.cn}
\IHEP\UCAS
\author{Zhuo-Yu Xian}
\email{xianzy@itp.ac.cn}
\IHEP\UCAS\ITP
\author{Yikang Xiao}
\email{ykxiao@ihep.ac.cn}
\IHEP\UCAS

\begin{abstract}
    This paper accompanies with our recent work on quantum error
        correction (QEC) and entanglement spectrum (ES) in tensor networks
        (arXiv:1806.05007). We propose a general framework for planar
        tensor network state with tensor constraints as a model for
        $AdS_3/CFT_2$ correspondence, which could be viewed as a
        generalization of hyperinvariant tensor networks recently proposed
        by Evenbly. We elaborate our proposal on tensor chains in a tensor
        network by tiling $H^2$ space and provide a diagrammatical
        description for general multi-tensor constraints in terms of
        tensor chains, which forms a generalized greedy algorithm. The
        behavior of tensor chains under the action of greedy algorithm is
        investigated in detail. In particular, for a given set of tensor
        constraints, a critically protected (CP) tensor chain can be
        figured out and evaluated by its average reduced interior angle.
        We classify tensor networks according to their ability of QEC and
        the flatness of ES. The corresponding geometric description of
        critical protection over the hyperbolic space is also given.

\end{abstract}
\maketitle

\tableofcontents

\section{Introduction}\label{SectionInt}
Tensor network as a powerful tool for building the ground state of
a many-body system has been greatly investigated in recent
    years \cite{Vidal:2007hda}. One remarkable feature of tensor
    network states is the intuitive description of quantum
    entanglement among local degrees of freedom. For a subsystem
    composed of some uncontracted edges in a tensor network, its entanglement entropy is vividly bounded by the minimal cuts disconnecting this subsystem and its complementarity. This scenario can be viewed as
    the discretized description of Ryu-Takayanagi (RT) formula in
    holographic approach \cite{Ryu:2006bv}. Inspired by this, people find that a
    holographic space can emerge from entanglement renormalization of
    a many-body system \cite{Swingle:2009bg,Swingle:2012wq}. It has
    further been conjectured in \cite{VanRaamsdonk:2010pw} and \cite{Maldacena:2001kr} that the
    classical connectivity of spacetime arises by entangling the
    degrees of freedom in two components. As a bridge between quantum
    entanglement and the structure of spacetime, tensor networks have
    been providing a practical framework for exploring the emergence
    of spacetime in the context of gauge/gravity duality
    \cite{Nozaki:2012zj,Qi:2013caa}.

Another property of entanglement enjoyed by holographic duality is
quantum error correction (QEC) \cite{Schumacher:1996dy}. Based on
sub system duality, operators in the bulk can be reconstructed by
the operators supported on a sub system of the boundary
\cite{Hamilton:2006az,Dong:2016eik,Harlow:2018fse,Cotler:2017erl}.
In other words, there are subspaces of the Hilbert space in the
bulk which can still be reconstructed even if an amount of
information on the boundary is erased
\cite{Mintun:2015qda,Freivogel:2016zsb,Harlow:2016vwg,Almheiri:2014lwa}.
Great progress has also been made in the realization of QEC by
virtue of tensor networks
\cite{Pastawski:2015qua,Freivogel:2016zsb,Mintun:2015qda,Bhattacharyya:2016hbx,Hayden:2016cfa,Qi:2018shh}.
In this framework, sub system duality is reflected by the
isometry between two sub Hilbert spaces associated with sub tensor
networks.

The above properties of entanglement have been addressed in
various tensor networks, including the multiscale entanglement
renormalization ansatz (MERA) \cite{Swingle:2009bg,Vidal:2007hda},
perfect tensor networks \cite{Pastawski:2015qua,Yang:2015uoa},
random tensor networks
\cite{Hayden:2016cfa,Han:2017uco,Bhattacharyya:2016hbx,Qi:2018shh,Chirco:2017wgl},
 hyperinvariant tensor
networks \cite{Evenbly:2017htn}, as well as spin networks
\cite{Han:2016xmb}.

Currently it is still a key issue whether tensor networks, or what
kind of tensor networks could produce all the aspects of
holography in the context of AdS/CFT correspondence. Taking
$AdS_3$/$CFT_2$ as an example, we pick up some important
properties that a tensor network is desired to possess.
\begin{itemize}
    \item Such a tensor network is a discretization of
    2-dimensional hyperbolic space ($H^2$ space), which is a time
    slice of an $AdS_3$ spacetime in global coordinate system.
    Correspondingly, the tensor network is endowed with a symmetry
    described by a discrete subgroup of $SL(2,R)$, which is the
    isometry of $H^2$ space.
    \item Such a tensor network respects RT formula and the entanglement entropy is characterized by a logarithmic
    law. Moreover, the entanglement spectrum (ES) of the ground state
    should be non-flat such that one can reproduce the Cardy-Calabrese
    formula of Renyi entropy for a $CFT_2$ with large central charge
    $c$, namely \cite{Calabrese:2004eu,Calabrese:2009qy}
    \be\label{CardyFormula}
    S_n(A)=\kc{1+\frac1n}\frac c6\ln l_A,
    \ee where $A$ is a spatial interval on the boundary and $l_A$ is
    its length with the unit of UV cutoff.
    \item Such a tensor network has the function of QEC as AdS spacetime enjoys.
    \item Such a tensor network can reproduce the behavior of Green's function in
    AdS$_3$/CFT$_2$.
\end{itemize}
Of course, all these properties may not be independent of one
another.

One candidate for capturing above holographic features of AdS
    is hyperinvariant tensor networks, recently proposed by Evenbly in
    \cite{Evenbly:2017htn}. It is composed of identical polygons by
    uniformly tiling hyperbolic space. The key idea is to impose
    constraints on the product of multiple tensors to form isometric
    mappings. It turns out that this sort of networks may combine the
    advantages of multiscale entanglement renormalization ansatz
    (MERA)
    \cite{Vidal:2007hda,Swingle:2009bg,Swingle:2012wq,Kim:2016wby,Bao:2015uaa}
    which is characterized by non-flat ES and the network composed of
    perfect
    tensors\cite{Pastawski:2015qua,Yang:2015uoa,Donnelly:2016qqt}
    which is usually endowed with the function of QEC.

But one key issue arises in this approach. That is, what kind
    of multi-tensor constraints could endow such features to a given
    tensor networks? or more quantitatively, is there any criteria to
    justify the ability of QEC and the non-flatness of ES for a given
    tensor networks with multi-tensor constraints? In
    \cite{Ling:2018qec} we have provided affirmative answers to these
    issues with the proposal for critical protection on tensor chains.
    In this paper we intend to elaborate our proposal and present the
    detailed analysis on tensor chains and constraints in tensor
    networks and prove the statements on the classification of tensor
    networks in \cite{Ling:2018qec}.

We organize the paper as follows. In next section we will propose
a generalized framework for the tensor networks with multi-tensor
constraints in the tiling of $H^2$ space. To classify different
types of multi-tensor constraints efficiently and describe the
behavior of tensor contractions during the evaluation of ES, we
introduce the notion of tensor chain to describe the contraction
of tensor products. Moreover, we will introduce a quantity, called
the average reduced interior angle, to characterize the geometric
structure of CP chain. Based on this structure we will introduce
the concept of critical protection in Section III, which should be
viewed as the core concept in our paper, because it plays an
essential role in measuring the quality of QEC as well as the
non-flatness of ES in a quantitative manner. As the first
consequence, we will immediately see that once the ES becomes
non-flat under the multi-tensor constraints as proposed in
\cite{Evenbly:2017htn}, then the ability of QEC from bulk to
boundary has to be weakened. Among this sort of networks, we find
that most of the perfect tensor networks as the limit case have
the strongest ability of QEC, while they are always accompanied by
a flat ES. Therefore, in order to construct tensor networks with a
non-flat ES as AdS spacetime, one has to pay the price of
sacrificing the ability of QEC. All above investigation is based
on a tensor networks embedded into $H^2$ space which can be
viewed as a discretization of the hyperbolic geometry.
Correspondingly, we may also describe QEC and ES over the geometry
of $H^2$ directly, which involves the notion of geodesics and the
curves of constant curvature, etc. We present the description
based on $H^2$ geometry in Section IV and the relevant backgrounds
are given in Appendix \ref{SectionH2}. Keep going on, to
intuitively understand the role of critical protection in the
evaluation of QEC and ES, in Section V we present some specific
examples of tensor networks and demonstrate how the realization of
QEC could be reflected by the structure of CP tensor chain, and
how the flatness of ES can be reflected by the region of critical
protection. Moreover, we develop a generalized description of
greedy algorithm by imposing multi-tensor constraints on tensor
chains. After that we classify tensor networks with constraints by
their properties of QEC and ES. We firstly study the relation
between CP and QEC in Section VI, presenting a criteria for the
existence of QEC, and then focus on the relation between CP and ES
in Section VII, with detailed proofs of the propositions on
various bounds for the flatness of ES. Section VIII is the
conclusion and outlook.

\section{Tensor chains in a tensor network}

In this section we will present a general framework for tensor
networks based on the tiling of hyperbolic space. We define a
notion of tensor chain whose skeleton forms a polyline in a
network. Associated with each tensor chain, the reduced interior
angle can be defined, which in some sense could be viewed as the
discrete description of the curvature of the ployline.

\subsection{Tiling of $H^2$ space}
In the global coordinate system of $AdS_3$ spacetime, the
isochronous surface is a $H^2$ space \be\label{H2Metric}
ds^2=L^2(d\rho^2+\cosh^2\rho d\tau^2), \ee where $L$ is the radius
of $H^2$ geometry, the unique dimensional quantity introduced
in this paper. So we are free to set $L=1$.

Firstly, we intend to discretize $H^2$ space in a uniform
version, which can be realized by the tiling of $H^2$ space with
identical polygons. Consider many identical polygons composed of
$b$ edges in a $2$ dimensional surface, then put them together by
gluing their edges such that $a$ edges share the same node. We
call such discretization as the $\{b,a\}$ tiling of $H^2$ space.
In a space with
negative curvature, because the sum of interior angles of a triangle is less than $2\pi$, one can realize a $\{b,a\}$ tiling of
$H^2$ space only if
\be\label{abconstraint}
\frac1a+\frac1b<\frac12.
\ee
Obviously, $a\geq3,b\geq3$.

When a tiling of $H^2$ is specified by $\{b,a\}$, the geometry is
determined (up to the radius $L$). We call the polygon with $b$
edges as the elementary polygon, while the union of several
elementary polygons as a composite polygon. The length of each
edge of the elementary polygon is \be\label{lengthp} P=2\text{
arccosh}\left(\frac{\cos \left(\frac{\pi }{b}\right)}{\sin
\left(\frac{\pi }{a}\right)}\right). \ee

\subsection{Tensor networks with $\{b,a\}$ tiling}
Now we construct a tensor network based on a $\{b,a\}$ tiling.
Associated with each node, we assign a rank-a tensor $T$, each
index of which is specified to an edge jointed at the node
respectively. The elements of the tensor $T$ are denoted as
$T^{i_1i_2\cdots i_a}$, where all indexes have the same dimension
$d$ and $d>1$. Associated with each edge, we also assign a rank-2
tensor $E$, whose elements are $E_{i_1i_2}$. We call the above
indexes associated with tensors $T$ and $E$ as basic indexes,
which are labelled by lowercase letters. As examples, two tensor
networks with $\ke{7,3}$ tiling and $\ke{4,5}$ tiling are
illustrated in Fig.\ref{Fig73Tiling}, respectively.

Because of the rotational invariance of $H^2$ space, we further
demand that the indexes of tensor $T$ and $E$ have cyclic
symmetry \footnote{Notice that the perfect tensor originally
    defined in \cite{Pastawski:2015qua} has no rotation symmetry as
    required. Here we further require it and will show the existence
    of such states in Appendix \ref{SectionExistence}.}
\be T^{i_1i_2\cdots i_a}=T^{i_2i_3\cdots i_ai_1},\quad
E_{i_1i_2}=E_{i_2i_1}. \ee In this paper, we adopt the convention
that the index of tensor $T$ can be lowered by contracting it
with a tensor $E$ \be T_{j_1}{}^{i_2\cdots i_a}\equiv
\sum_{i_1}E_{j_1i_1}T^{i_1i_2\cdots i_a}. \ee Correspondingly, the
edge connecting two nodes represents the index contraction of two
tensors $T$ by a tensor $E$, namely, \be
\sum_{i_1j_1}T^{i_1i_2\cdots i_a}E_{i_1j_1}T^{j_1j_2\cdots
j_a}=\sum_{i_1}T^{i_1i_2\cdots i_a}T_{i_1}{}^{j_2\cdots
    j_a}.
\ee Therefore, given a tensor network with $\ke{b,a}$ tiling, we
can define a quantum state $\Psi$ consisting of two sorts of
tensors $T$ and $E$ by tensor products and contractions. For later
convenience, we require that for a tensor network state $\Psi$,
all the indexes of tensors $T$ (with full upper indexes) should be
contracted and those uncontracted indexes should only belong to
tensors $E$, as shown in Fig.\ref{Fig73Tiling}.

A tensor network $\Psi$ can define a state $\ket{\Psi}$ in the
Hilbert space on those uncontracted edges. In this paper, we will
investigate the algorithms of QEC and the entanglement of $\Psi$
by manipulating tensor networks.

\begin{figure}
    \centering
    \subfigure[]{
    \includegraphics[width=0.4\linewidth]{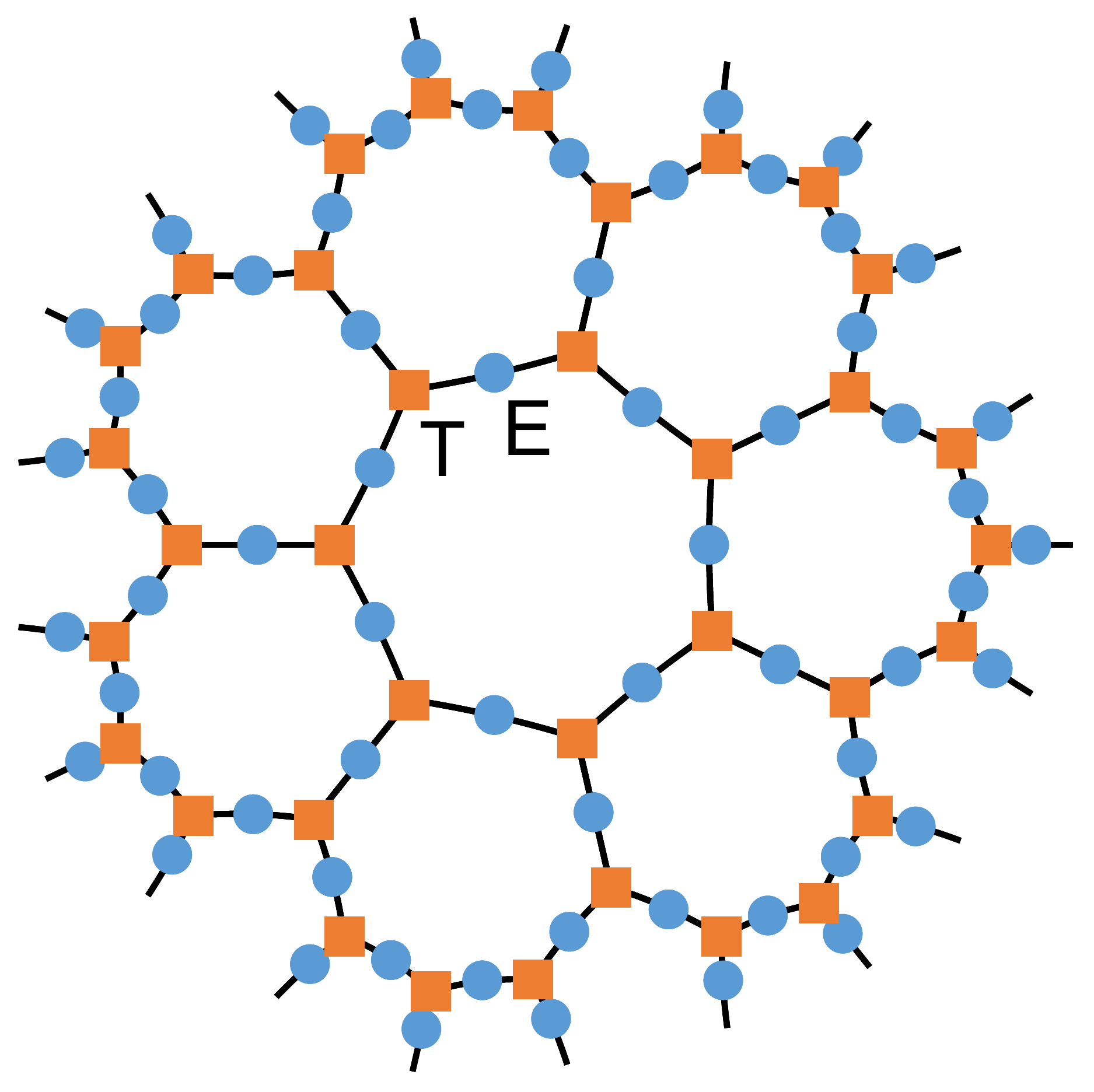}
    \label{Fig73Tiling}}
    \subfigure[]{
    \includegraphics[width=0.4\linewidth]{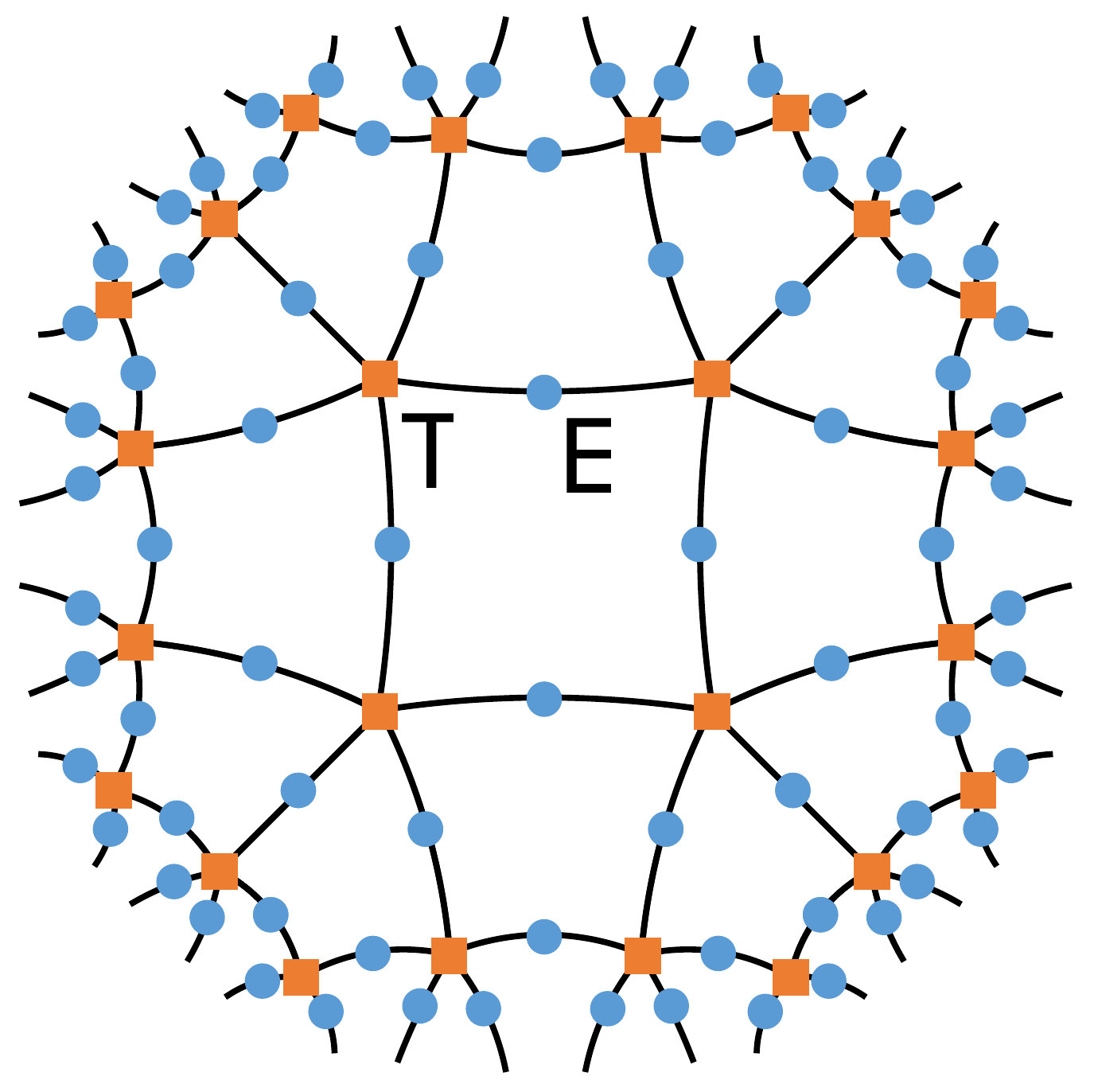}
    \label{Fig45Tiling}
    }
\caption{Two tensor networks with (a) $\ke{7,3}$ tiling and (b)
$\ke{4,5}$ tiling, where a tensor $T$ is marked by an orange
square while a tensor $E$ is marked by a blue dot. Only finite
layers are plotted. For convenience, all the tensors $T$ and
$E$ will not be manifestly displayed in the latter figures of
tensor network except necessary.}
\end{figure}

\subsection{Tensor chain}
Given a tensor network by the $\{b,a\}$ tiling, we intend to
introduce a notion of tensor chain to depict the product structure
of multi-tensors with index contractions, which will be convenient
for us to impose a tensor constraint and quantitatively describe
its geometric properties. Firstly, in order to define a tensor
chain in an efficient way, we adopt a compact form to denote a
single tensor $T$ of rank-$a$ which is subject to rotational
symmetry. We divide all its indexes into four groups {\it in
order} and label each group with an abstract index, which is
called collected index and labelled by a capital letter. Then
the component of $T$ can be generally written as $T^{ABCD}$. For
instance, for a tensor $T^{i_1i_2\cdots i_5}$ with $5$ indexes, we
may collect them as $i_1i_2=A,\,i_3=B,\,i_4=C,\,i_5=D$ or
$i_1i_2=A,\,i_3i_4=B,\,i_5=C,\,\varnothing=D$, while the
collection $i_1i_2=A,\,i_3=C,\,i_4=B,\,i_5=D$ is prohibited.
Furthermore, those collected indexes may also be lowered by tensor
$E$, such as \be T_{A'}{}^{BCD}\equiv
T_{i_1'i_2'}{}^{i_3i_4i_5}\equiv
\sum_{i_1i_2}E_{i_1'i_1}E_{i_2'i_2}T^{i_1i_2\cdots i_5}\equiv
\sum_{A}E_{A'A}T^{ABCD}, \ee where
$i_1'i_2'=A',\,i_1i_2=A,\,i_3=B,\,i_4=C,\,i_5=D$ and
$E_{A'A}\equiv E_{i_1'i_1}E_{i_2'i_2}$. We further define $\#(A)$
as the number of basic indexes in a collected index $A$. For
example, in the above collection, $\#(A)=2$ and $\#(B)=1$.

Now we can construct a tensor chain $M$ by contracting $k$
tensors $T$ with tensors $E$
\be\label{TensorChain}
M^A_B=M^{A_1A_2\cdots A_k}_{B_1B_2\cdots B_k}
=\sum_{C_1C_2\cdots C_{k},C_1=C_{k+1}}\prod_{i=1}^k T^{A_iC_i}{}_{B_iC_{i+1}}
\ee where $A=A_1A_2\cdots A_k$ and $B=B_1B_2\cdots B_k$ are
uncontracted indexes, while $C_1,C_2,\cdots C_k$ are specified as
the indexes which are contracted in the chain. Moreover, since
each tensor $T$ occupies a node in the tiling, we call $k$ the
number of nodes in tensor chain $M$ as well. The sum index $i$ in
(\ref{TensorChain}) runs from $1$ to $k$, counting the number of nodes in $M$. Alternatively, a tensor chain $M$ can be viewed as a mapping from the Hilbert space on uncontracted indexes $A$ to the Hilbert space on uncontracted indexes $B$.

Obviously, in a $\{b,a\}$ tiling any two nodes are only possibly connected by single edge, otherwise they are not connected directly. Thus, here we only consider tensor chain with $\#(C_i)=1$ for $i=2,3,\cdots,k$.
Furthermore, if $\#(C_1)=\#(C_{k+1})=1$, then we call $M$ as a
closed tensor chain; if $\#(C_1)=\#(C_k)=0$, we call $M$ as an
open tensor chain. Two typical samples of tensor chain are
illustrated in Fig.\ref{ChainMatrix}.

Since a diagram of tensor chain can be specified by the number of
uncontracted edges in the tensor product, we propose a notation
$\bbm m_1&m_2&\cdots&m_k\\ n_1&n_2&\cdots&n_k \ebm$ to denote an
open tensor chain $M$ with elements $M^{A_1A_2\cdots
    A_k}_{B_1B_2\cdots B_k}$, where $m_i\equiv\#(A_i)$ and
$n_i\equiv\#(B_i)$. Obviously, we have
\be\label{Opensummn}
m_i,n_i\geq0, \quad m_i+n_i=a-2+\delta_{i1}+\delta_{i,k},\quad i=1,2,\cdots,k.
\ee Similarly, we use $\bpm m_1&m_2&\cdots&m_k\\
n_1&n_2&\cdots&n_k \epm$ to denote a closed tensor chain with
\be\label{Closedsummn} m_i,n_i\geq0, \quad m_i+n_i=a-2,\quad
i=1,2,\cdots,k.
\ee
Since one can reconstruct $m_i$ from $n_i$
or vice versa according to (\ref{Opensummn})(\ref{Closedsummn}),
some time for convenience we abbreviate
either $m_i$ or $n_i$ to $*$. For instance, $\bbm *&*&\cdots&*\\
n_1&n_2&\cdots&n_k \ebm \equiv \bbm m_1&m_2&\cdots&m_k\\
*&*&\cdots&* \ebm \equiv \bbm m_1&m_2&\cdots&m_k\\
n_1&n_2&\cdots&n_k \ebm$ with (\ref{Opensummn}) and $\bpm
*&*&\cdots&*\\ n_1&n_2&\cdots&n_k \epm \equiv \bpm
m_1&m_2&\cdots&m_k\\ *&*&\cdots&* \epm \equiv \bpm
m_1&m_2&\cdots&m_k\\ n_1&n_2&\cdots&n_k \epm$ with
(\ref{Closedsummn}).

We provide the following definition to describe some
relations among tensor chains.
\begin{definition}
    Given a tensor network with $\{b,a\}$ tiling and an open tensor chain $M=\bbm m_1&m_2&\cdots&m_k\\ *&*&\cdots&* \ebm$, we define its sub tensor chain $M_{p,q}$ as $\bbm m_p&m_{p+1}&\cdots&m_{q}\\ *&*&\cdots&* \ebm$ for $1\leq p\leq q\leq k$. We say $M'\sqsubseteq M$ if $M'$ is a sub tensor chain of $M$. Furthermore, we say an open tensor chain $\bbm m_1'&m_2'&\cdots&m_l'\\ *&*&\cdots&*\ebm\preceq  M$ where $1\leq l\leq k$, if $\exists p\in \ke{1,2,\cdots,k-l+1}$ {\it s.t.} $m_i' \leq m_{p+i-1}$ for $i\in \ke{1,l}$ and $m_i'= m_{p+i-1}$ for $1<i<l$.
    
    Given a closed tensor chain $M=\bpm m_1&m_2&\cdots&m_k\\ *&*&\cdots&* \epm$, we define its sub tensor chain $M_{p,q}$ for $1\leq p\leq k$, $1\leq q\leq k$ as $\bbm m_1''&m_2''&\cdots&m_l''\\ *&*&\cdots&*  \ebm$ where $m_i''=m_{p+i-1 \text{ mod } k}$ for $1\leq i\leq l$ and $1\leq l=q-p+1<k$. We say $M'\sqsubseteq M$ if $M'$ is a sub tensor chain of $M$. Furthermore, we say an open tensor chain  $\bbm m_1'&m_2'&\cdots&m_l'\\ *&*&\cdots&*  \ebm \preceq M$ where $1\leq l<k$ if $\exists p$ {\it s.t.} $m_i'\leq m_{p+i-1 \text{ mod } k}$ for $i=\ke{1,l}$ and $m_i'=m_{p+i-1 \text{ mod } k}$ for $1<i<l$.
\end{definition}
For example, in $\ke{4,5}$ tiling, $\bbm 2&1\\2&3 \ebm\sqsubseteq \bbm 2&1&2\\2&2&2 \ebm$ and $\bbm 2&1&1\\2&2&3 \ebm\preceq \bbm 2&1&2\\2&2&2\ebm$.

We always require that the subscripts $p,q$ in $M_{p,q}$ belong to
integers and satisfy $1\leq p\leq q\leq k$, where $k$ is the
number of nodes in $M$.

\begin{figure}
    \centering
    \includegraphics[width=150pt]{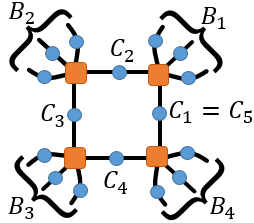}\quad\quad\quad\quad
    \includegraphics[width=150pt]{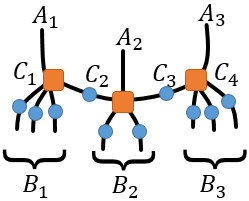}
    \caption{A closed tensor chain $\bpm 0&0&0&0\\ 3&3&3&3 \epm$ and an open tensor chain $\bbm 1&1&1 \\ 3&2&3 \ebm$.}\label{ChainMatrix}
\end{figure}

\begin{figure}
    \centering
    \includegraphics[width=150pt]{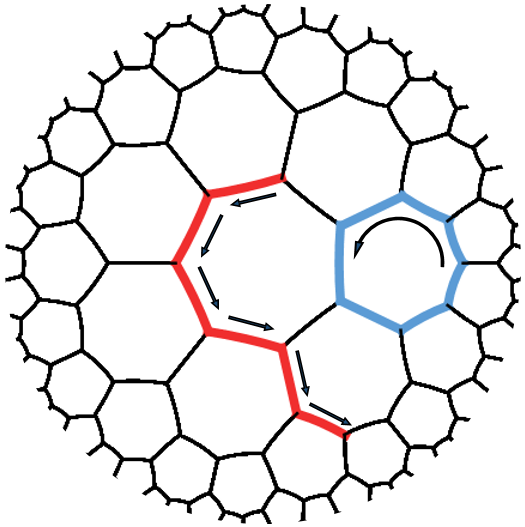}
    \includegraphics[width=150pt]{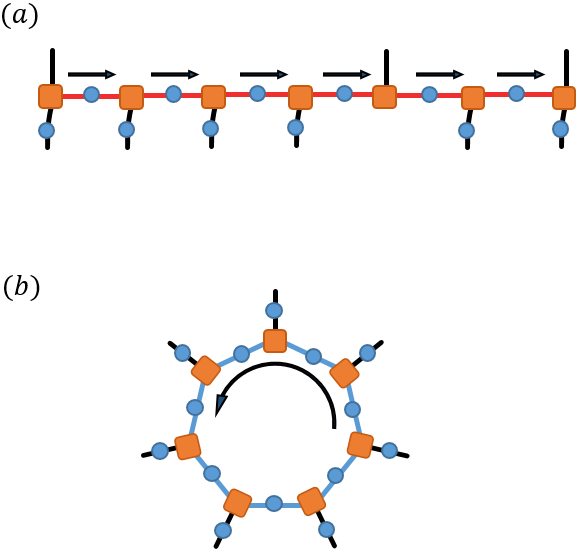}
    \caption{Two polylines and their tensor chains in a tensor network with $\ke{7,3}$ tiling. The tensor chain of red polyline is shown in (a) and the tensor chain of blue polyline is shown in (b). }\label{FigPolyline}
\end{figure}

\subsection{The reduced interior angle}

When a tensor chain is embedded into a tensor network in $H^2$
space, its skeleton can be marked by a directed polyline
concisely, as shown in Fig.\ref{FigPolyline}. Along the direction
of the polyline, we require that the sequence number of nodes
increases and the edges on the left (right) hand side of the
polyline are always associated with the upper (lower) indexes
of the tensor chain. For a closed tensor chain, conventionally
the direction of the closed polyline is specified to be
anticlockwise, so the inward or left-handed (outward or
right-handed) edges of the polyline are associated with the upper
(lower) indexes of the tensor chain. We remark that a closed
polyline with clockwise direction can be analyzed in parallel,
with the requirement that its inward (outward) edges are
associated with the lower (upper) indexes.

The curvature of a polyline at the $i$th node can be captured by
its interior angle $\theta_i$, which is defined as the angle on
the left hand side of the polyline and is a multiple of $2\pi/a$.
We further define the reduced interior angle as $s_i=\theta_i/\frac{2\pi}{a}$, which is an integer. Obviously,
the reduced interior angle is related to the number of upper
edges at each node and we intend to give the following definition.
\begin{definition}\label{definitions}
    For a closed tensor chain $M=\bpm
    m_1&m_2&\cdots&m_k\\ *&*&\cdots&* \epm$, the reduced interior
    angle of the i-th tensor is $s_i=m_i+1$. For an open tensor chain
    $M=\bbm m_1&m_2&\cdots&m_k\\ *&*&\cdots&* \ebm$, the reduced
    interior angles are $s_i=m_i+1-\delta_{i1}$.
\end{definition}
For later convenience, we further introduce several quantities
based on reduced interior angles to evaluate the curvature of
a tensor chain $M$. Specially, we define the prime tensor chain,
which is the core notion for the construction of the algebra of
tensor constraints in next subsection.
\begin{definition}
    Given a tensor chain $M$ with $k$ nodes, the average reduced interior angle $\kappa(M)$ is defined as
    \be\label{kappa}
    \kappa(M)=\frac{1}{k}\sum_{i=1}^{k} s_i.
    \ee
    The sub reduced interior angle $\kappa_{p,q}(M)$ from the
    p-th tensor to the q-th tensor is defined as
    \be
    \kappa_{p,q}(M)=\kappa(M_{p,q}), \quad 1\leq p\leq q\leq k,
    \ee
    and the maximal reduced interior angle $\kappa_{p,q}(M)$
    is defined as
    \be
    \kappa_{max}(M)=\max_{p,q} \kappa_{p,q}(M).
    \ee
\end{definition}

\begin{definition}
    Given an open tensor chain $M$ with finite $k$ nodes, when
    that $\kappa_{p,q}(M)=\kappa_{max}(M)$ if and only if $p=1$ and
    $q=k$, then we say tensor chain $M$ is prime.
\end{definition}

Based on above definitions, we have following theorems for
tensor chains.

\begin{theorem}\label{theoremkappamaxpreceq}
    Given two tensor chains $M$ and $M'$, if $M\preceq M'$, then
    $\kappa_{max}(M)\leq\kappa_{max}(M')$.
\end{theorem}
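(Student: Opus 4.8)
The plan is to reduce the statement to a single pointwise comparison between the reduced interior angles of matched sub-chains. Note first that the left operand of $\preceq$ is by definition an open chain, so $M$ is open, while $M'$ may be open or closed; let $k$ and $k'$ be their numbers of nodes, and write $m_i$ ($i=1,\dots,k$), $m'_j$ ($j=1,\dots,k'$) for the respective upper-edge numbers. Unwinding Definition~\ref{definitions} together with the definition of $\kappa$, for any window $1\le p\le q\le k$ with $\ell=q-p+1$ one has
\be
\kappa_{p,q}(M)=\frac1\ell\sum_{j=1}^{\ell}s_j(M_{p,q})
=\frac1\ell\left(m_p+\sum_{i=p+1}^{q}(m_i+1)\right)
=1-\frac1\ell+\frac1\ell\sum_{i=p}^{q}m_i,
\ee
and the same formula holds for every sub-chain of $M'$; thus the sub reduced interior angle of any sub-chain depends only on its length and the sum of its entries.

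Next I would invoke $M\preceq M'$: by the definition of $\preceq$ there is an offset $p_0$ (with $1\le p_0\le k'-k+1$ if $M'$ is open, and read modulo $k'$ if $M'$ is closed, in which case $k<k'$) such that $m_i\le m'_{p_0+i-1}$ at the endpoints $i\in\{1,k\}$ and $m_i=m'_{p_0+i-1}$ in the interior $1<i<k$; in particular $m_i\le m'_{p_0+i-1}$ for \emph{every} $i\in\{1,\dots,k\}$. Now fix a window $1\le p\le q\le k$ and set $p'=p_0+p-1$, $q'=p_0+q-1$. This is a legitimate window of $M'$ of the same length $\ell$ (for closed $M'$ one checks $\ell\le k<k'$, so the extracted open sub-chain is well defined). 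The $j$-th node of $M'_{p',q'}$ carries the entry $m'_{p_0+(p+j-1)-1}\ge m_{p+j-1}$, the entry at the $j$-th node of $M_{p,q}$. Since both $M_{p,q}$ and $M'_{p',q'}$ are open sub-chains of length $\ell$ whose first nodes sit at matched positions, Definition~\ref{definitions} gives $s_j(M'_{p',q'})=m'_{p_0+(p+j-1)-1}+1-\delta_{j1}\ge m_{p+j-1}+1-\delta_{j1}=s_j(M_{p,q})$ for all $j$, the $\delta_{j1}$ correction cancelling. Averaging over $j$ yields $\kappa_{p,q}(M)\le\kappa_{p',q'}(M')\le\kappa_{\max}(M')$, and taking the maximum over all admissible $p,q$ gives $\kappa_{\max}(M)\le\kappa_{\max}(M')$.

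I do not expect any genuine mathematical difficulty here; the work is entirely in the bookkeeping. One must keep track of the endpoint correction $\delta_{i1}$ in Definition~\ref{definitions}, observe that $\preceq$ only asserts equality on the interior of the matched window (which is harmless, since equality is stronger than the inequality that is actually used), check that single-node and two-node windows are subsumed by the general argument, and in the closed case read all index shifts modulo $k'$ while verifying that every extracted sub-chain is short enough ($\ell\le k<k'$) to be a genuine open sub-chain. This index bookkeeping is the only, and minor, obstacle.
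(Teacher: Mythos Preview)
The paper states this theorem without proof, so there is no argument to compare against. Your proof is correct and is the natural elementary one: the relation $M\preceq M'$ yields a termwise inequality $m_i\le m'_{p_0+i-1}$ on \emph{all} nodes of $M$, and since every open sub-chain of length $\ell$ has $\kappa=1-\tfrac1\ell+\tfrac1\ell\sum m_i$, each window of $M$ is dominated by the matched window of $M'$; taking the maximum finishes. The bookkeeping you flag---the $\delta_{j1}$ endpoint correction cancelling on both sides, the mod-$k'$ indexing in the closed case, and the length bound $\ell\le k<k'$ ensuring the extracted sub-chain of a closed $M'$ is a legitimate open chain---is handled correctly.
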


\begin{theorem}\label{theorempreceqprime}
    Given a tensor chains $M$ and a prime tensor chain $M'$, if $M\preceq M'$ but $M\neq M'$, then
    $\kappa_{max}(M)<\kappa(M')$.
\end{theorem}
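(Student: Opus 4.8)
The plan is to reduce the statement to an arithmetic comparison of windows of the top row, after putting $\kappa$ of an open chain in closed form. First I would record that for an open tensor chain with top row $n_1,\dots,n_l$, Definition~\ref{definitions} gives reduced interior angles $s_1=n_1$ and $s_i=n_i+1$ for $i\ge 2$, so $\sum_{i=1}^l s_i=\bigl(\sum_{i=1}^l n_i\bigr)+(l-1)$ and hence
\be\label{kappaopenform}
\kappa=1+\frac{\bigl(\sum_{i=1}^l n_i\bigr)-1}{l}.
\ee
The virtue of (\ref{kappaopenform}) is that $\kappa$ of any sub-window depends on that window only through the sum of its top entries and its length, and those are exactly the data that the relation $\preceq$ controls.

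Next I would set up the comparison. Write $M'$ with top row $m_1,\dots,m_k$, and $M\preceq M'$ with top row $m_1',\dots,m_l'$ realized at position $p$, so that $m_i'=m_{p+i-1}$ for $1<i<l$, $m_1'\le m_p$ and $m_l'\le m_{p+l-1}$; in particular $m_i'\le m_{p+i-1}$ for every $i$. Then for any sub-chain $M_{a,b}$ of $M$, the sum of its top entries is at most the sum of the top entries of the window $[p+a-1,\,p+b-1]$ of $M'$, and the two windows have equal length, so (\ref{kappaopenform}) gives $\kappa_{a,b}(M)\le\kappa_{p+a-1,\,p+b-1}(M')$. Since $M'$ is prime we have $\kappa(M')=\kappa_{1,k}(M')=\kappa_{\max}(M')$, so Theorem~\ref{theoremkappamaxpreceq} already delivers the non-strict bound $\kappa_{\max}(M)\le\kappa(M')$; the work is to rule out equality.

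The decisive step is a case split on the image window $[p+a-1,\,p+b-1]$, whose length $b-a+1$ is at most $l\le k$. If this window is a \emph{proper} sub-window of $[1,k]$, primality of $M'$ gives $\kappa_{p+a-1,p+b-1}(M')<\kappa_{\max}(M')=\kappa(M')$, hence $\kappa_{a,b}(M)<\kappa(M')$. Otherwise the image window is $[1,k]$ itself; matching lengths then forces $l=k$ and $[a,b]=[1,l]$ (so $p=1$), meaning $M$ has the same length as $M'$, their interior rows coincide, and $m_1'\le m_1$, $m_k'\le m_k$; since $M\ne M'$, at least one of these last two inequalities is strict, so $\sum_i m_i'<\sum_i m_i$ and, the lengths being equal, (\ref{kappaopenform}) gives $\kappa(M)<\kappa(M')$ directly. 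Thus $\kappa_{a,b}(M)<\kappa(M')$ for every sub-chain of $M$, and taking the maximum over the finitely many pairs $(a,b)$ yields $\kappa_{\max}(M)<\kappa(M')$.

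The main obstacle I anticipate is not any estimate but the boundary bookkeeping: because $\preceq$ guarantees equality only on \emph{interior} entries, sub-chains of $M$ that abut an endpoint of $M$ yield only a weak inequality, and I must be careful that every such window is sent to a proper sub-window of $M'$ — so that the strict inequality built into primality applies — the sole exception being the full window of a same-length $M$, which is exactly where the hypothesis $M\ne M'$ is used. The one point that needs attention is the length argument showing that the image window can be all of $[1,k]$ only for $[a,b]=[1,l]$ with $l=k$; everything else follows from (\ref{kappaopenform}) together with Theorem~\ref{theoremkappamaxpreceq}.
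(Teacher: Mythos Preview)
Your argument is correct. The closed-form expression $\kappa=1+\bigl(\sum m_i-1\bigr)/l$ for an open chain reduces the comparison to sums of top entries over windows of equal length; the inequality $m_i'\le m_{p+i-1}$ for all $i$ (with equality on the interior) then gives $\kappa_{a,b}(M)\le\kappa_{p+a-1,\,p+b-1}(M')$, and your case split---proper image window handled by the strict inequality built into primality, full image window forcing $l=k$, $p=1$, and then $M\ne M'$ giving a strict drop at an endpoint---cleanly finishes the job. The length bookkeeping that pins down $p=1$, $a=1$, $b=l=k$ in the second case is exactly as you say.

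The paper itself states this theorem without proof, so there is no argument to compare yours against; your proof fills that gap. One purely cosmetic remark: in your first paragraph you call the top-row entries $n_i$, which clashes with the paper's convention that $n_i$ denotes the bottom row. Since you immediately switch to $m_i$ and $m_i'$ in the main argument, nothing is affected, but for consistency with the surrounding text you may want to use $m_i$ throughout.
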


\begin{theorem}\label{theoremprime}
    {\bf The uniqueness theorem of prime tensor chain.} Given a
    $\{b,a\}$ tiling and a rational number  \be\label{kappaRange}
    \kappa\in \kd{1,\frac a2},
    \ee the prime tensor chain $M$ with $\kappa(M)=\kappa$ exists and
    is unique. Furthermore, if $\kappa=u/v$ where $u$ and $v$ are
    coprime integers, then $M=\bbm
    m_1&m_2&\cdots&m_v\\n_1&n_2&\cdots&n_v \ebm$ with
    \be\label{constructm}
    m_i=[i\kappa]-[(i-1)\kappa]-1+\delta_{i1},
    \ee
    where $[x]$ is the floor function. So, $M$ has reversal
    symmetry, namely \be\label{reversalsymmetry}
    m_i=m_{v-i+1}.
    \ee The reduced interior angles of $M$ are given by
    \be\label{constructs}
    s_i=[i\kappa]-[(i-1)\kappa].
    \ee
\end{theorem}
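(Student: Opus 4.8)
The plan is to reduce the whole statement to elementary estimates on the partial sums of the reduced interior angles. For a tensor chain with angles $s_1,\dots,s_k$ write $S_0=0$ and $S_q=\sum_{i=1}^{q}s_i$. Using Definition~\ref{definitions} together with the definition of the sub tensor chain, I would first record the two identities $\kappa_{1,q}(M)=S_q/q$ and, for $p\ge 2$, $\kappa_{p,q}(M)=(S_q-S_{p-1}-1)/(q-p+1)$; the extra $-1$ in the second one comes precisely from the $\delta_{i1}$ correction in $s_i=m_i+1-\delta_{i1}$, since the first node of a sub chain $M_{p,q}$ with $p\ge2$ becomes an endpoint and loses one unit of angle. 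Because $\sum_{i=1}^{k}s_i$ telescopes to $S_k$, the hypothesis $\kappa(M)=\kappa=u/v$ with $\gcd(u,v)=1$ forces $S_k=k\kappa\in\mathbb{Z}$, hence $v\mid k$.

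For existence I would take $k=v$ and set $s_i=[i\kappa]-[(i-1)\kappa]$, $m_i=s_i-1+\delta_{i1}$. A short check using $1\le\kappa\le a/2$ and $a\ge3$ gives $0\le m_i$ (the lower bound uses $\kappa\ge1$, so $[i\kappa]>[(i-1)\kappa]$) and $m_i\le a-2+\delta_{i1}+\delta_{i,k}$, so this is a genuine open tensor chain, and $\kappa(M)=S_v/v=[u]/v=u/v$. Primality then splits into two cases. For $(1,q)$ with $q<v$ one has $\kappa_{1,q}(M)=[q\kappa]/q<\kappa$ because $q\kappa=qu/v\notin\mathbb{Z}$ when $1\le q<v$. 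For $p\ge2$ one has $\kappa_{p,q}(M)=([q\kappa]-[(p-1)\kappa]-1)/(q-p+1)$, and combining $[q\kappa]\le q\kappa$ with $[(p-1)\kappa]>(p-1)\kappa-1$ shows the numerator is $<(q-p+1)\kappa$, so again $\kappa_{p,q}(M)<\kappa$. Hence $\kappa_{max}(M)=\kappa$ is attained only at $(p,q)=(1,v)$, i.e. $M$ is prime. Reversal symmetry $m_i=m_{v+1-i}$ then follows from $[(v+1-i)\kappa]=u-\lceil(i-1)\kappa\rceil$ (using $[-x]=-\lceil x\rceil$) together with $\lceil j\kappa\rceil=[j\kappa]+1$ for $1\le j\le v-1$, treating $i=1$ and $1<i<v$ separately; the case $v=1$ is a single node and is trivial.

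For uniqueness, let $M$ be any prime chain with $\kappa(M)=\kappa$ and write $k=vt$. If $t\ge2$, apply the prime strict inequality to the sub chain $(1,v)$ to get $S_v<v\kappa=u$, and to the sub chain $(v+1,k)$ to get $S_k-S_v-1<(k-v)\kappa=u(t-1)$, whence $S_k-S_v\le u(t-1)$ and $S_v=S_k-(S_k-S_v)\ge ut-u(t-1)=u$, a contradiction; so $k=v$. Finally, for each $1\le q\le v-1$ the prime inequality at $(1,q)$ gives $S_q<q\kappa$, hence $S_q\le[q\kappa]$ since $q\kappa\notin\mathbb{Z}$, while the prime inequality at $(q+1,v)$ gives $S_v-S_q-1<(v-q)\kappa$, hence $S_q>q\kappa-1$ and so $S_q\ge[q\kappa]$. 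Thus $S_q=[q\kappa]$ for every $q$, which forces $s_i=[i\kappa]-[(i-1)\kappa]$ and $m_i=[i\kappa]-[(i-1)\kappa]-1+\delta_{i1}$, completing uniqueness.

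I expect the main obstacle to be a matter of care rather than depth: pinning down the sub chain angle formula, and in particular the endpoint/$\delta_{i1}$ correction that produces the $-1$ for $p\ge2$, and keeping every floor inequality strict in the right direction, since the argument lives or dies on those. The coprimality of $u$ and $v$ enters exactly to rule out $q\kappa\in\mathbb{Z}$ for $0<q<v$, which is what makes all the relevant inequalities strict, and the degenerate single node case $v=1$ (when $\kappa$ is an integer) should be dispatched by hand.
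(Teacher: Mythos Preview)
Your proposal is correct and follows essentially the same approach as the paper: both arguments extract two-sided bounds on the partial sums from the prime condition, use coprimality of $u$ and $v$ to force $k=v$ and to turn the inequalities into equalities $S_q=[q\kappa]$, and then read off the explicit formula for $m_i$. Your organization is slightly cleaner in that you work with the angles $s_i$ and their partial sums $S_q$ (which absorbs the $\delta_{i1}$ bookkeeping into a single $-1$ shift for $p\ge2$) and you separate existence and uniqueness explicitly, but the underlying mechanism is identical to the paper's.
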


\begin{proof}
    Set a prime tensor chain $M=\bbm m_1&m_2&\cdots&m_k\\n_1&n_2&\cdots&n_k \ebm$ satisfying $\kappa(M)=\kappa$.
    Then, for $l=1,2,\cdots,k-1$,
    \bea
    \frac1k\sum_{i=1}^k m_i+k-1&=&\kappa,  \label{kappak}\\
    \frac1l\sum_{i=1}^l m_i+l-1&<&\kappa,  \label{kappal}\\
    \frac1{k-l}\sum_{i=1}^{k-l} m_i+k-l-1&<&\kappa.  \label{kappakl}
    \eea
    On one hand, from (\ref{kappak})(\ref{kappal})(\ref{kappakl}),
    one has
    \be\label{kappalk}
    \sum_{i=1}^l m_i-1<l(\kappa-1)<\sum_{i=1}^l m_i.
    \ee
    Then
    \be
    l(\kappa-1)\notin\mathbb{Z}\Rightarrow l/v\notin\mathbb{Z},\quad
    l=1,2,\cdots,k-1.
    \ee
    Thus
    \be
    v\geq k.
    \ee
    On the other hand, from (\ref{kappal}), one has
    \be
    k\kappa\in\mathbb{Z}\Rightarrow k/v\in\mathbb{Z}.
    \ee
    Combining above two statements, we have
    \be
    k=v.
    \ee
    From (\ref{kappal})(\ref{kappalk}), one has
    \be
    \sum_{i=1}^l m_i=[l(\kappa-1)]+1, \quad  l=1,2,\cdots,v.
    \ee
    So
    \be
    m_l=\sum_{i=1}^l m_i-\sum_{i=1}^{l-1} m_i=[l\kappa]-[(l-1)\kappa]-1, \quad
    l=2,3,\cdots,v.
    \ee Specially, when $l=1$, \be
    m_1=[1*\kappa]-[0*\kappa]-1+1.
    \ee In summary, we have proved (\ref{constructm}). Moreover,
    using identities
    \be\label{trickfloor}\begin{split}
        [-x]&=-[x]-1+\delta_{x,[x]},    \\
        [x+n]&=[x]+n,\quad n\in \mathbb{Z},
    \end{split}\ee
    we can derive (\ref{reversalsymmetry}). Then according to
    Definition \ref{definitions}, we have (\ref{constructs}). At
    last, from (\ref{constructs}) we find \be
    m_{1}=[\kappa] \text{ or } [\kappa]-1.
    \ee
    It leads to the conclusion that conditions in
    (\ref{Opensummn}) can always be satisfied if we require
    (\ref{kappaRange}). So the prime tensor chain exists and is
    unique.
\end{proof}

\begin{theorem}\label{theoremcentralM}
    For a tensor chain $M$, $\exists$ unique prime tensor chain $M'$ satisfying $M'\sqsubseteq M$ and $\kappa(M')=\kappa_{max}(M)$.
\end{theorem}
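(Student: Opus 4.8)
The plan is to establish existence and uniqueness separately, with the uniqueness half resting on the uniqueness theorem of prime tensor chains (Theorem~\ref{theoremprime}).

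For existence I would work with the finite family $\mathcal S$ of all sub tensor chains $M_{p,q}$ of $M$ (intervals if $M$ is open, proper arcs if $M$ is closed) that attain the maximum, i.e.\ with $\kappa_{p,q}(M)=\kappa_{max}(M)$; this family is non-empty because there are only finitely many pairs $(p,q)$, so the maximum is attained. Choose $M'=M_{p,q}\in\mathcal S$ with the fewest nodes, say $l$ of them. The key observation is that $\kappa_{max}(M')=\kappa_{max}(M)$: every sub tensor chain of $M'$ is again a sub tensor chain of $M$, hence $\kappa_{max}(M')\le\kappa_{max}(M)$, while $\kappa(M')=\kappa_{p,q}(M)=\kappa_{max}(M)$ gives the reverse inequality; in particular $\kappa_{1,l}(M')=\kappa_{max}(M')$, which verifies one half of the definition of primeness. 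If the other half failed, some $M'_{p',q'}$ with $(p',q')\neq(1,l)$ would also attain $\kappa_{max}(M')=\kappa_{max}(M)$; but then $M'_{p',q'}$ is a sub tensor chain of $M$ with strictly fewer nodes than $M'$, so it belongs to $\mathcal S$, contradicting minimality. Hence $M'$ is a prime sub tensor chain of $M$ with $\kappa(M')=\kappa_{max}(M)$.

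For uniqueness, let $M'$ be any prime sub tensor chain of $M$ with $\kappa(M')=\kappa_{max}(M)=:\kappa$. As a ratio of non-negative integers $\kappa$ is rational. If $\kappa\in[1,a/2]$, Theorem~\ref{theoremprime} asserts that the prime tensor chain with average reduced interior angle $\kappa$ is unique (and is pinned down by (\ref{constructm})), so $M'$ is completely determined as a tensor chain. If $\kappa<1$ the situation is degenerate and can be handled directly: for every node $i$ the one-node sub chain $M_{i,i}$ has $\kappa_{i,i}(M)=m_i\le\kappa<1$, so $m_i=0$ for all $i$; then every sub tensor chain equals $\bbm 0&\cdots&0\ebm$ with $\kappa=(l-1)/l$ expressed through its length $l$, so $\kappa_{max}(M)$ is attained exactly by the sub chains of maximal admissible length ($l=k$ if $M$ is open, $l=k-1$ if $M$ is closed), all of which are the single abstract chain $\bbm 0&\cdots&0\ebm$, again unique. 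In either case the prime sub tensor chain realizing $\kappa_{max}(M)$ is unique.

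The point deserving the most care is the sense in which $M'$ is unique: it is unique \emph{as a tensor chain}, i.e.\ its sequence $(m_i)$ is determined, but not unique as a located piece of $M$ --- a chain whose $m$-sequence is $(5,0,5)$, for instance, contains the same prime sub chain $\bbm 5\ebm$ at two disjoint positions. This is exactly why the uniqueness half has to pass through Theorem~\ref{theoremprime}, which classifies prime chains by their $\kappa$-value: a direct overlap argument settles the case of two nested or overlapping prime sub chains (the intersection would then be a proper sub chain of one of them attaining the same maximum, contradicting primeness), but not that of two disjoint ones. The residual gap is that Theorem~\ref{theoremprime} only covers $\kappa\in[1,a/2]$; for $\kappa_{max}(M)>a/2$ one must either check that such chains do not arise in the networks of interest or re-run the explicit construction of Theorem~\ref{theoremprime}, which I expect to go through unchanged as long as (\ref{Opensummn}) can still be satisfied.
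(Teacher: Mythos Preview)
Your argument is correct and follows essentially the same route as the paper: pick a sub chain attaining $\kappa_{max}(M)$ that is minimal (the paper phrases this as choosing $(p,q)$ so that no strictly smaller sub-interval attains the maximum, you phrase it as minimal length---these are equivalent), conclude it is prime, and then invoke Theorem~\ref{theoremprime} for uniqueness. Your discussion of the degenerate range $\kappa<1$ and the caveat about $\kappa>a/2$, together with the explicit remark that uniqueness holds only as an abstract tensor chain rather than as a located sub chain, is more careful than the paper's terse proof and is a genuine improvement.
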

\begin{proof}
    $\exists p,q$ {\it s.t.} $\kappa_{p,q}(M)=\kappa_{max}(M)$ and $\kappa_{p',q'}(M)<\kappa_{max}(M)$, $\forall p',q'$ satisfying $p< p'\leq q'\leq q$ or $p\leq p'\leq q'< q$. Then $M'=M_{p,q}$ is prime, whose $\kappa(M')=\kappa_{p,q}(M)=\kappa_{max}(M)$. Because of Theorem \ref{theoremprime}, $M'$ is unique.
\end{proof}

\section{Tensor constraints and Critically Protected (CP) tensor chains}\label{SectionCP}

In this section we propose a notion of critical protection to
describe the behavior of tensor networks under the
contractions of tensor product which are subject to tensor
constraints.

\begin{figure}
    \centering
    \includegraphics[width=200pt]{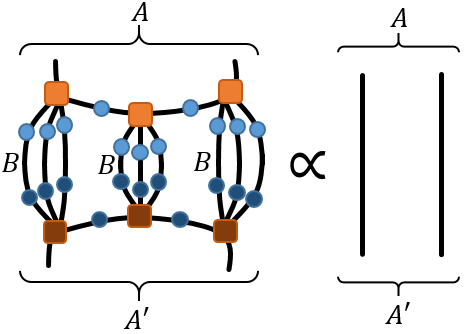}\\
    \caption{Multi-tensor constraint constructed by open tensor chain $\bbm 1&0&1\\3&3&3\ebm$.}\label{Figisoeq}
  \end{figure}

\subsection{Tensor constraint}\label{SubSectionTensorConstraint}
The notion of tensor chain provides us a convenient way to
describe a general constraint on the product of tensors $T$ and
$E$, which plays an essential role in pushing operators through
nodes or edges in network in the context of QEC. Usually we impose
the constraint requiring that some contraction of tensors
should be proportional to an isometry. Of course the
contraction of tensors may or may not form a tensor chain. Here
for simplicity, we only consider imposing tensor constraints on
tensor chains which can be concisely written as \be\label{IsoE}
    \sum_{B}  M^A_B (M_B^{A'})^*   \propto\delta^{AA'},
\ee where $M=\bbm m_1&m_2&\cdots&m_k\\n_1&n_2&\cdots&n_k \ebm$ is
an open tensor chain with $n_i$ to be the number of edges that are
contracted with its conjugate tensor at the $i$th node, as
illustrated in Fig.\ref{Figisoeq}. Notice that the contraction on
two lower indexes $B$ in (\ref{IsoE}) implies that it involves in
$\#(B)$ contractions of $\sum_j E_{ij}(E_{jk})^*$. For
convenience, in the remainder of this paper, when we say a tensor
constraint $M$, we actually refer to the constraint in terms of
tensor chain $M$ which is subject to (\ref{IsoE}). Obviously, a
non-trivial constraint $M$ requires $\sum_{i=1}^k m_i\geq1$.
Moreover, an isometry can be realized only if the number of
degrees of freedom in $A$ is less than or equal to that in $B$,
namely \be\label{freedomless}
    \sum_{i=1}^k m_i=\#(A)\leq\#(B)=\sum_{i=1}^k n_i.
\ee
Thus, any non-trivial tensor constraint $M$ should satisfy
\be\label{kappaMRange}
    \kappa(M)\in \kd{1,\frac a2}.
\ee

One may immediately find that for a set of tensor constraints some
of them may not be logically independent. In general, there are
four fundamental operations to derive new constraints from given
tensor constraints, which can be listed as follows.
\begin{description}
  \item[Reversal] If $\bbm m_1&m_2&\cdots&m_k\\n_1&n_2&\cdots&n_k \ebm$ is a tensor constraint, then $\bbm m_k&m_{k-1}&\cdots&m_1\\n_k&n_{k-1}&\cdots&n_1 \ebm$ is a tensor constraint as well.
  \item[Contraction] If $\bbm m_1&m_2&\cdots&m_k\\n_1&n_2&\cdots&n_k \ebm$ is a tensor constraint, then $\bbm m_1-1&m_2&\cdots&m_k\\n_1+1&n_2&\cdots&n_k \ebm$ is a tensor constraint as well.
  \item[Reduction] If $\bbm m_1&m_2&\cdots&m_k\\n_1&n_2&\cdots&n_k \ebm$ and $\bbm *&*&\cdots&*\\n_{p}&n_{p+1}&\cdots&n_q \ebm$ are tensor constraints, where $2\leq p\leq q\leq k$, then $\bbm m_1&m_2&\cdots&m_{p-1}\\ *&*&\cdots&* \ebm$ is a tensor constraint as well.
  \item[Combination] If $\bbm *&*&\cdots&*\\n_1&n_2&\cdots&n_k \ebm$ and $\bbm m_1'&m_2'&\cdots&m_l'\\ *&*&\cdots&* \ebm$ are tensor constraints, then \\ $\bbm *&*&\cdots&*&m_1'&m_2'&\cdots&m_l'\\n_1&n_2&\cdots&n_k&*&*&\cdots&* \ebm$ is a tensor constraint as well.
\end{description}
As an example, we demonstrate the derivation of new constraints
by contraction and reduction in Fig.\ref{Figlogicallyderivation}.

\begin{figure}
    \centering
    \includegraphics[height=200pt]{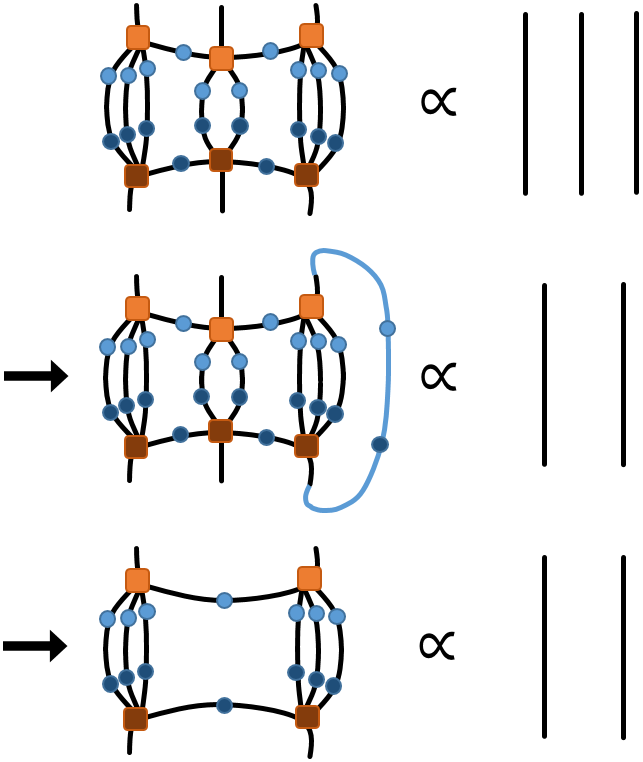}\\
    \caption{Logical derivation of tensor constraint $\bbm 1&1\\3&3\ebm$ from $\left\{\bbm 1\\4\ebm, \bbm 1&1&1\\3&2&3\ebm\right\}$.}\label{Figlogicallyderivation}
  \end{figure}

We remark that the strength of a tensor constraint $M$ can be
quantified by its maximal reduced interior angle
$\kappa_{max}(M)$. By comparing $\kappa_{max}$ of new derived
constraints with those of original constraints, we have the
following theorem.
\begin{theorem}\label{theoremkappamaxnotincrease}

Any tensor constraint $M''$ derived from $M$ and $M'$ through
above ways satisfies
    \be\label{kappamaxnotincrease}
    \kappa_{max}(M'')\leq \max\kc{\kappa_{max}\kc{M},\kappa_{max}\kc{M'}}.
    \ee
where $M=\bbm m_1&m_2&\cdots&m_k\\n_1&n_2&\cdots&n_k \ebm,M'=\bbm m_1'&m_2'&\cdots&m_l'\\n_1'&n_2'&\cdots&n_l' \ebm$.
\end{theorem}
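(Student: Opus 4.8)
The plan is to establish \eqref{kappamaxnotincrease} one operation at a time: it is enough to prove the bound for a single application of each of Reversal, Contraction, Reduction and Combination, since a general derivation is a finite sequence of such steps and the bound then propagates by induction on the length of the derivation, each step producing $M''$ from one or two constraints whose $\kappa_{max}$ is inductively at most $\max(\kappa_{max}(M),\kappa_{max}(M'))$. The one computational fact I will use throughout is that for an open tensor chain $N=\bbm \mu_1&\cdots&\mu_r\\ *&\cdots&* \ebm$ Definition~\ref{definitions} and \eqref{kappa} give $\kappa(N)=\tfrac1r\big(\sum_{i=1}^r\mu_i+r-1\big)$; in particular $\kappa(N)$, and hence $\kappa_{max}(N)$, depends on the upper multiplicities only through the sums over contiguous blocks, and is insensitive to their ordering within the chain.

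For Reversal, reversing a chain permutes its sub tensor chains among themselves while preserving each block sum and block length, so $\kappa_{max}$ is unchanged. For Contraction, $M''$ differs from $M$ only by replacing $m_1$ with $m_1-1$: a sub tensor chain of $M''$ not meeting node $1$ is a sub tensor chain of $M$, while one meeting node $1$ has a strictly smaller block sum than the corresponding sub tensor chain of $M$, so $\kappa_{max}(M'')\le\kappa_{max}(M)$. For Reduction, $M''=\bbm m_1&\cdots&m_{p-1}\\ *&\cdots&* \ebm$ is literally the sub tensor chain $M_{1,p-1}$ of $M$, so every sub tensor chain of $M''$ is one of $M$ and $\kappa_{max}(M'')\le\kappa_{max}(M)$; the auxiliary constraint on $n_p,\dots,n_q$ plays no role in this particular inequality. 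In each of these three cases \eqref{kappamaxnotincrease} is immediate.

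The substantive case is Combination. With $M=\bbm *&\cdots&*\\ n_1&\cdots&n_k \ebm$, $M'=\bbm m_1'&\cdots&m_l'\\ *&\cdots&* \ebm$, and writing $m_1,\dots,m_k$ for the upper multiplicities of the standalone chain $M$, I would first record that the combined chain $M''$ has upper multiplicities $\mu_i=m_i$ for $i<k$, $\mu_k=m_k-1$, and $\mu_{k+j}=m_j'$ for $1\le j\le l$: only the junction node $k$ changes, because it is an endpoint of $M$ but an interior node of $M''$, so one upper edge there is reassigned to the new contracted edge (this is precisely the $\delta_{ik}$ term in \eqref{Opensummn}). Then I would split the sub tensor chains $M''_{p,q}$ into those with $q\le k$, those with $p\ge k+1$, and the straddling ones with $p\le k<k+1\le q$. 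The first family satisfies $\kappa(M''_{p,q})\le\kappa(M_{p,q})\le\kappa_{max}(M)$, with equality unless $q=k$, in which case $\mu_k=m_k-1$ only lowers the block sum; the second coincides with sub tensor chains of $M'$ and gives $\kappa(M''_{p,q})\le\kappa_{max}(M')$. For a straddling chain with $r_1=k-p+1$ nodes in the first block and $r_2=q-k$ in the second, a short computation in which the $-1$ at node $k$ cancels the surplus endpoint correction carried by a single chain relative to two yields
\be
\kappa(M''_{p,q})=\frac{r_1\,\kappa(M_{p,k})+r_2\,\kappa(M'_{1,q-k})}{r_1+r_2},
\ee
so the right-hand side is a convex combination of $\kappa(M_{p,k})$ and $\kappa(M'_{1,q-k})$, hence at most $\max(\kappa_{max}(M),\kappa_{max}(M'))$. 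Taking the maximum over all three families gives \eqref{kappamaxnotincrease}.

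I expect the main obstacle to lie in the bookkeeping for Combination: one must keep careful track of the fact that the upper multiplicities of $M$ and of $M'$ as they sit inside $M''$ are not quite those of the standalone chains, and getting the single $-1$ at the junction right is exactly what makes the weighted-average identity hold as an equality rather than up to an error term; it is this exactness that converts the obvious heuristic into a genuine bound. Once the order-insensitivity of $\kappa_{max}$ in the upper multiplicities has been isolated, Reversal, Contraction and Reduction are routine.
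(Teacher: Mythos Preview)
Your proof is correct and follows essentially the same route as the paper: both dispatch Reversal, Contraction and Reduction quickly (the paper via $M''\preceq M$ and Theorem~\ref{theoremkappamaxpreceq}, you by the equivalent direct block-sum comparison), and both handle Combination by splitting sub chains into left, right and straddling pieces and bounding the straddling $\kappa_{p,q}(M'')$ by a weighted average of $\kappa_{max}(M)$ and $\kappa_{max}(M')$. Your exact convex-combination identity for the straddling case is in fact the same computation the paper does with inequalities, just organized slightly more cleanly.
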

\begin{proof}
  For reversal, $\kappa_{max}(M'')=\kappa_{max}(M)$, then we have (\ref{kappamaxnotincrease}).    \\
  For contraction and reduction, $M''\preceq M$. Thanks to Theorem \ref{theoremkappamaxpreceq}, we have (\ref{kappamaxnotincrease}). \\
  For combination, $\exists p,q$ {\it s.t.} $\kappa_{p,q}(M'')=\kappa_{max}(M'')$. In the case of $q\leq k$ or $k<p$, similarly, we have (\ref{kappamaxnotincrease}).
  In the case of $p\leq k<q$, we observe that
  \bea
    1+\frac{\sum_{i=p}^k m_i-1}{k-p+1}&\leq& \kappa_{max}(M),  \\
    1+\frac{\sum_{i=1}^{k-q} m_i'-1}{q-k}&\leq& \kappa_{max}(M'),
  \eea
which leads to
  \be
    \sum_{i=p}^k m_i+\sum_{i=1}^{q-k} m_i'\leq 2+(\kappa_{max}(M)-1)(k-p+1)+(\kappa_{max}(M')-1)(q-k).
  \ee
  So,
  \be\begin{split}
    \kappa_{p,q}(M'')
    &=1+\frac{\sum_{i=p}^k m_i+\sum_{i=1}^{q-k} m_i'-2}{q-p+1}  \\
    &\leq \frac{k-p+1}{q-p+1}\kappa_{max}(M)+\frac{q-k}{q-p+1}\kappa_{max}(M')    \\
    &\leq \max\kc{\kappa_{max}\kc{M},\kappa_{max}\kc{M'}}.
  \end{split}\ee
In summary, we have (\ref{kappamaxnotincrease}).
\end{proof}

Next we intend to study a set of tensor constraints with the form
\be\label{GeneralSet} S=\left\{M_s, M_1,M_2,\cdots\right\}, \ee
where $M_s=\bbm 1\\a-1\ebm$ is called step tensor chain and other
tensor constraints are not specified. Step tensor chain has the
minimal average reduce interior angle $\kappa(M_s)=1$. Generally,
those tensor constraints in $S$ may not be mutually independent.
With the help of step tensor chain, we have the following theorem
for the relation of tensor constraints.
\begin{theorem}\label{theorempreceq}
    Any tensor constraint $M$ satisfying $M\preceq M'$ can be derived
    from $S=\left\{M_s, M'\right\}$.
\end{theorem}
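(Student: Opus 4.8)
The plan is to isolate one elementary move---``peeling off a boundary node''---and iterate it. Write $M'=\bbm p_1&p_2&\cdots&p_k\\ *&*&\cdots&*\ebm$ and $M=\bbm q_1&q_2&\cdots&q_l\\ *&*&\cdots&*\ebm$; unwinding $M\preceq M'$ produces an offset $t$ with $0\le t\le k-l$ such that $q_1\le p_{t+1}$, $q_l\le p_{t+l}$, and $q_i=p_{t+i}$ for $1<i<l$. So it is enough to first derive from $S$ the window $\widetilde M:=\bbm p_{t+1}&\cdots&p_{t+l}\\ *&\cdots&*\ebm$ (i.e. the standalone sub tensor chain $M'_{t+1,\,t+l}$) and then to lower its two end entries to $q_1$ and $q_l$.

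The core step is a \emph{Peeling Lemma}: for any tensor constraint $X$ with $r\ge2$ nodes and entries $x_1,\dots,x_r$, both $X_{1,r-1}$ and $X_{2,r}$ are derivable from $\{M_s,X\}$. To obtain $X_{1,r-1}$, use Reversal together with Contraction to replace the last entry $x_r$ of $X$ by $0$, reaching a constraint $Y$ with entries $(x_1,\dots,x_{r-1},0)$; since $m_r+n_r=a-1$ at the terminal node of an open chain with $r\ge2$ nodes, the lower index there is $a-1$. Hence the tail constraint $\bbm *\\ a-1\ebm$ that Reduction requires at $p=q=r$ reconstructs, via (\ref{Opensummn}), to precisely the step tensor chain $M_s=\bbm 1\\ a-1\ebm\in S$, so Reduction applied to $Y$ and $M_s$ removes node $r$ and returns $X_{1,r-1}$; the claim for $X_{2,r}$ follows by conjugating with Reversal. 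This is really the only content of the proof: contracting a boundary entry to zero is exactly what forces the discarded ``stump'' to coincide with the step chain, because $M_s$ sits at the maximal admissible lower index $a-1$ of a single boundary node. (Only Reversal, Contraction and Reduction enter; Combination is not needed for this direction.)

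The rest is bookkeeping. Starting from $M'$, apply the Peeling Lemma on the right $k-(t+l)$ times to reach $M'_{1,\,t+l}$---at each step one contracts only the entry about to be deleted, so the surviving entries stay equal to the original $p_i$'s, and every intermediate chain is still derived from $\{M_s,M'\}$---and then on the left $t$ times (using the $X_{2,r}$ form of the lemma) to reach $\widetilde M=M'_{t+1,\,t+l}$. Finally perform $p_{t+1}-q_1$ Contractions on the first entry, then a Reversal, then $p_{t+l}-q_l$ Contractions, then a last Reversal; these are legitimate because $q_1\le p_{t+1}$ and $q_l\le p_{t+l}$, and since the interior entries of $\widetilde M$ already coincide with those of $M$, the output is exactly $M$. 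Degenerate cases are harmless: if $t=0$ and $l=k$ no peeling is needed and one merely shrinks the two ends of $M'$; if $l=1$ the two ends coincide and a single run of Contractions suffices.

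The step I expect to be the real obstacle is spotting the Peeling Lemma---recognizing that the composite ``Contraction down to zero, then Reduction against $M_s$'' deletes a boundary node. Once that move is in hand the theorem reduces to careful index tracking, with the open-chain relations (\ref{Opensummn}) automatically guaranteeing that every peeled object is again a bona fide open tensor chain.
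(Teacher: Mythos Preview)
Your proof is correct. The paper states Theorem~\ref{theorempreceq} without proof, so there is nothing to compare against; your argument supplies exactly the elementary derivation the paper leaves implicit. The Peeling Lemma is the right observation: reversing, contracting the terminal upper entry to zero so that the terminal lower entry becomes $a-1$, and then invoking Reduction against $M_s=\bbm 1\\ a-1\ebm$ at $p=q=r$ is a clean way to strip one boundary node while preserving the remaining upper entries. The index accounting is sound---in particular, the Reduction rule outputs $\bbm m_1&\cdots&m_{p-1}\\ *&\cdots&*\ebm$, so the surviving upper entries are untouched and the lower ones are automatically rebuilt from (\ref{Opensummn}), which is what makes the iterated peel land exactly on $M'_{t+1,t+l}$. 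The final end-lowering by Contraction (with Reversal to access the right endpoint) then yields $M$ because $\preceq$ forces the interior entries to already agree. The edge cases $t=0$, $l=k$, and $l=1$ are handled as you say; the only constraint on the peel ($r\ge2$) is always met since each peel reduces the length by one and you stop at length $l\ge1$.
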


For any general set of tensor constraints, we can find a unique
set of tensor constraints $S_c$ which is logically equivalent to
$S$ and only contains two elements \be\label{CentralSet}
    S_c=\left\{M_s, M_t\right\}.
\ee $S_c$ is called central set. $M_t$ is called top tensor chain,
which should be prime and satisfy the condition
(\ref{kappaMRange}). The equivalence between $S$ and $S_c$ require
$\kappa(M_t)=\max\limits_{M\in S} \kappa_{max}(M)$, as proved in
Theorem (\ref{theoremSet}). Without loss of generality, we will
only consider central set $S_c$ hereinafter.

Thanks to Theorem \ref{theoremprime}, given a $\ke{b,a}$ tiling,
we have a one-to-one mapping between all the possible top tensor
chains $M_t$ and the rational numbers in $\kd{1,\frac a2}$. Thus,
we have classified all the general sets of tensor constraints with
the form in (\ref{GeneralSet}) by the rational number
$\kappa(M_t)\in [1,\frac a2]$. Given a $\kappa(M_t)$, with the use
of (\ref{constructm}), we can directly construct top tensor chain
$M_t$ as well as those tensor chains $M$ satisfying $M\preceq
M_t$.

\begin{theorem}\label{theoremkappaleq}
    Given a central set $S_c=\left\{M_s,
    M_t\right\}$, we define a set $S_D=\ke{M|\kappa_{max}(M)\leq \kappa(M_t)}$.
 A tensor constraint $M$ can be derived from $S_c$ if and only if $M\in S_D$.
\end{theorem}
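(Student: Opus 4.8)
The plan is to prove the two inclusions separately: (a) every constraint derivable from $S_c$ lies in $S_D$, and (b) every $M\in S_D$ is derivable from $S_c$.

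For (a) I would show that $S_D$ contains $S_c$ and is closed under the four derivation moves, so that it contains the whole derivation closure of $S_c$. It contains $M_s$, because $\kappa_{max}(M_s)=\kappa(M_s)=1\in[1,\frac{a}{2}]$ by (\ref{kappaRange}); it contains $M_t$, because $M_t$ is prime and hence $\kappa_{max}(M_t)=\kappa(M_t)$. Closure under Reversal, Contraction, Reduction and Combination is exactly Theorem~\ref{theoremkappamaxnotincrease}: a constraint produced from $M,M'\in S_D$ has $\kappa_{max}\le\max(\kappa_{max}(M),\kappa_{max}(M'))\le\kappa(M_t)$. An induction on the length of the derivation finishes (a).

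For (b) I would induct on the number of nodes $k$ of $M$. When $k=1$, $M=\bbm m_1\\ *\ebm$ and $\kappa(M)=m_1$ is an integer with $m_1\le\kappa(M_t)$, so $m_1\le[\kappa(M_t)]$, which by (\ref{constructm}) is the first entry of $M_t$; hence $M\preceq M_t$ and Theorem~\ref{theorempreceq} derives $M$ from $\{M_s,M_t\}\subseteq S_c$. When $k\ge2$, the strategy is to produce $M$ by cutting it at a suitable interior node $j$ into two pieces, each with fewer than $k$ nodes, obtained from the sub-chains $M_{1,j}$ and $M_{j+1,k}$ of $M$ by relocating one contracted leg onto the node at the seam; one derives the two pieces by the inductive hypothesis and then reassembles $M$ with a Combination (or, in the same spirit, extracts $M$ from a slightly longer manifestly derivable chain by a Reduction, the way $\bbm 1&1\\3&3\ebm$ is obtained in Fig.~\ref{Figlogicallyderivation}), using Contractions and Reversals to bring the pattern into the right ``phase'' at the seam. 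The piece coming from $M_{j+1,k}$ carries the same reduced interior angles, hence the same $\kappa_{max}$, as $M_{j+1,k}$, so it lies in $S_D$ for free; the work is to choose $j$ so that the other piece lies in $S_D$ as well.

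The main obstacle is precisely this: relocating a leg onto the seam node raises its reduced interior angle by one, so a sub-chain of that piece ending at the seam has $\kappa$ exceeding the corresponding $\kappa_{p,j}(M)$ by $1/(j-p+1)$, and for a non-integer $\kappa(M_t)$ a careless cut can push this above $\kappa(M_t)$. I would locate the needed ``curvature slack'' via the prime-core structure: by Theorem~\ref{theoremcentralM} there is a unique prime $M^{*}\sqsubseteq M$ with $\kappa(M^{*})=\kappa_{max}(M)$, and by uniqueness of prime chains any sub-chain of $M$ that avoids $M^{*}$ has $\kappa$ strictly below $\kappa_{max}(M)$; Theorems~\ref{theoremkappamaxpreceq} and~\ref{theorempreceqprime} turn this strict inequality into a quantitative gap, and choosing $j$ at a low-$m_j$ node off the core (for $M$ prime, at an interior minimum of the floor pattern) makes the gap at least the required $1/(j-p+1)$. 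The explicit formula (\ref{constructm}) for $M_t$ is used to carry out the mediant/phase arithmetic at the seam. The genuinely delicate residual case is $M$ prime: if $\kappa(M)=\kappa(M_t)$ then $M=M_t$ by the uniqueness Theorem~\ref{theoremprime} and there is nothing to do, while $1<\kappa(M)<\kappa(M_t)$ requires a separate construction that exhibits $M$ as built from a single application of $M_t$ together with copies of $M_s$, reading the block structure off (\ref{constructm}). Collecting the cases closes the induction and, with (a), identifies $S_D$ with the family of constraints derivable from $S_c$.
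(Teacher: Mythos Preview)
Your direction (a) is correct and coincides with the paper's argument via Theorem~\ref{theoremkappamaxnotincrease}.

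For (b), your inductive framework and base case match the paper's. The gap is in the inductive step: you propose to locate the seam $j$ by examining the internal structure of $M$ (its prime core, a ``low-$m_j$ node off the core'', an ``interior minimum of the floor pattern''), and you assert that Theorems~\ref{theoremkappamaxpreceq} and~\ref{theorempreceqprime} supply the quantitative slack $1/(j-p+1)$. They do not: those theorems give only strict inequalities, and your handling of the residual prime case (``exhibits $M$ as built from a single application of $M_t$ together with copies of $M_s$'') is a promissory note rather than a construction. Concretely, if $M$ is a proper prefix of $M_t$ (which is prime with $\kappa_{max}(M)<\kappa(M_t)$, so it lies in $S_D$), then for \emph{every} interior cut $j$ the left piece $\bbm m_1&\cdots&m_j{+}1\\ *&\cdots&*\ebm$ has $\kappa_{1,j}=1+\tfrac{1}{j}\sum_{i\le j}m_i=\tfrac{1}{j}([j\kappa(M_t)]+1)>\kappa(M_t)$ by~(\ref{constructm}), so your scheme as stated cannot even start on such $M$.

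The paper's key move is to compare $M$ with $M_t$ rather than with itself. One scans $M$ against $M_t$ from the left and cuts at the first position $r$ where they disagree; formula~(\ref{constructm}) forces $m_r'<m_r$ there (otherwise $\kappa_{1,r}(M)>\kappa(M_t)$), so the left piece $\bbm m_1'&\cdots&m_{r-1}'&m_r'{+}1\\ *&\cdots&*&*\ebm$ satisfies $\preceq M_t$ outright and is disposed of by Theorem~\ref{theorempreceq} with no appeal to the inductive hypothesis and no need to bound its $\kappa_{max}$. Only the strictly shorter right piece $M_{r+1,k}\preceq M$ uses the induction. The two remaining cases, where $M$ agrees with $M_t$ on the first $\min(k,\text{len}\,M_t)$ entries, are handled separately: either $M\sqsubseteq M_t$ and one is done, or one peels a full copy of $M_t$ off the front and bounds $\kappa_{max}$ of the remainder by a short telescoping calculation before inducting.
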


\begin{proof}
    Let $M_t=\bbm m_1&m_2&\cdots&m_k\\n_1&n_2&\cdots&n_k \ebm,
    M=\bbm m_1'&m_2'&\cdots&m_{k'}'\\n_1'&n_2'&\cdots&n_{k'}' \ebm$.
    We denote the proposition that $M$ can be derived from $S_c$ as $P1$.

    If  $P1$ is true, thanks to Theorem \ref{theorempreceqprime} and Theorem
    \ref{theoremkappamaxnotincrease}, we have $\kappa_{max}(M)\leq\kappa(M_t)$, thus $M\in S_D$.

Next we will apply the method of induction on $k'$ to prove that
if $\kappa_{max}(M)\leq\kappa(M_t)$ then  $P1$ is true.

Firstly, for the simplest case with $k'=1$, $\kappa_{max}(M)=m_1'$
is an integer. Because of Theorem \ref{theoremprime},
$m_1=[\kappa(M_t)]$. Then $\kappa_{max}(M)\leq
\kappa(M_t)\Rightarrow m_1'\leq m_1\Rightarrow M\preceq M_t$.
Because of Theorem \ref{theorempreceq}, $P1$ is true.

Now assume that when $1\leq k'<l$ $P1$ is true, we are going to
prove that for $k'=l$, $P1$ is also true.

At current stage, the length of tensor chain $M$, namely $l$,
could be either longer or shorter than the length of $M_t$, namely
$k$. In either case, we will compare the number of upper basic
indexes at each node within the parts with the same length,
$\min(k,l)$. To describe the difference of this part in two tensor
chains, it is convenient to define the proposition that $\exists
j\in\ke{1,2,\cdots,\min(k,l)}$ {\it s.t.} $m_j'\neq m_j$ as
 $P2$. Then for $k'=l$, We split the situation into the
following three cases.
    \begin{enumerate}
\item  $P2$ is false and $l\leq k$. It means that the tensor chain
$M$ is shorter or has equal length, and has the same number of
upper basic indexes as $M_t$ at each node. Then obviously one has
$M\sqsubseteq M_t$, so  $P1$ is true. \item $P2$ is false and
$l>k$. It means the tensor chain $M$ has the same number of upper
basic indexes as $M_t$ at first $k$ nodes but it is longer. One
can pick out the extra part of $M$
by setting $M'=\bbm m_{k+1}'+1&m_{k+2}'&\cdots&m_l'\\
*&*&\cdots&* \ebm$. For $p>1$, one can easily derive that
$\kappa_{p,q}(M')=\kappa_{p+k,q+k}(M)\leq
\kappa_{max}(M)\leq\kappa(M_t)$. While for $p=1$, one has
$\kappa_{1,q}(M')=\frac1q \ke{(k+q)\kappa_{1,k+q}(M)-k\kappa(M_t)}
\leq \frac1q \ke{(k+q)\kappa(M_t)-k\kappa(M_t)}=\kappa(M_t)$. So
$\kappa_{max}(M')\leq \kappa(M_t)$. Because of the assumption of
induction and $l-k<l$, $M'$ can be derived from $S_c$. Now since
$M$ can be derived from $M'$ and $M_t$ by reversal and
combination,  $P1$ must be true. \item  $P2$ is true. It means the
number of upper basic indexes at some nodes are different in two
tensor chains. Set the minimal $j$ satisfying $m_j'\neq m_j$ as
$r$. If $m_r'>m_r$, then $m_r'\geq m_r+1$. So $\kappa_{1,r}(M)
\geq \frac1r \kc{m_1+\cdots+m_r+1+r-1} =\frac1r
\ke{\sum_{i=1}^r([i\kappa(M_t)]-[(i-1)\kappa(M_t)]-1+\delta_{i1})+r}
=\frac1r \kc{[r\kappa(M_t)]+1} >\kappa(M_t)$, where
(\ref{constructm}) is used. It is contradictory to the condition
$\kappa_{max}(M)\leq\kappa(M_t)$. So only $m_r'<m_r$ is possible.
Now we set $M'=\bbm m_1'&m_2'&\cdots&m_{r-1}'&m_r'+1\\
*&*&\cdots&*&* \ebm$ and $M''=\bbm m_{r+1}'&m_{r+2}'&\cdots&m_l'\\
*&*&\cdots&* \ebm$ (If $r=1$, set $M'=\bbm m_1'+1\\ * \ebm$), then
$M'\preceq M_t$ and $M''\preceq M$. So $M'$ can be derived from
$S_c$ and $\kappa_{max}(M'')\leq \kappa(M)$. Because of the
assumption of induction and $l-r<l$, $M''$ can be derived from
$S_c$ as well. Since $M$ can be derived from $M'$ and $M''$ by
combination, $P1$ is true.
    \end{enumerate}
    In conclusion, $P1$ is true when $\kappa_{max}(M)\leq\kappa(M_t)$.
\end{proof}

\begin{theorem}\label{theoremSet}
Any general set of tensor constraints $S=\ke{M_s,M_1,M_2,\cdots}$
is logically equivalent to a unique central set of tensor
constraints $S_c=\ke{M_s,M_t}$, where $M_t$ is specified by
$\kappa(M_t)=\max\limits_{M\in S} \kappa_{max}(M)$.
\end{theorem}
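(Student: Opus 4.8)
The plan is to reduce the claim to the structure theorems on prime tensor chains already established, organised into three steps: (a) the chain $M_t$ is well defined and $S_c$ is a legitimate central set; (b) $S$ and $S_c$ have the same set of logical consequences; (c) no other central set does.

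For (a) I would set $\kappa^\ast=\max_{M\in S}\kappa_{max}(M)$. This number is rational, since each $\kappa_{max}(M)$ is by definition $\kappa$ of a sub tensor chain of $M$, and it lies in $[1,\frac a2]$: the lower bound is immediate from $\kappa_{max}(M)\ge\kappa(M)\ge1$, and for the upper bound I would invoke Theorem \ref{theoremcentralM} to produce a prime sub chain of $M$ realising $\kappa_{max}(M)$; a sub tensor chain of an open chain is in particular $\preceq M$, so by Theorem \ref{theorempreceq} it is itself a non-trivial tensor constraint and therefore obeys (\ref{kappaMRange}). Then Theorem \ref{theoremprime} produces a unique prime chain $M_t$ with $\kappa(M_t)=\kappa^\ast$, which automatically satisfies (\ref{kappaMRange}), so $S_c=\{M_s,M_t\}$ is a central set. (For the statement to be meaningful the maximum must be attained; I take this as implicit in the setup, following the paper's phrasing.)

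For (b) I would prove the two inclusions of derivable sets separately. \emph{Everything derivable from $S_c$ is derivable from $S$:} since $M_s\in S$ it suffices to derive $M_t$ itself from $S$. Choose $M^\ast\in S$ attaining $\kappa_{max}(M^\ast)=\kappa^\ast$; by Theorem \ref{theoremcentralM} there is a unique prime $M'\sqsubseteq M^\ast$ with $\kappa(M')=\kappa_{max}(M^\ast)=\kappa(M_t)$, and the uniqueness clause of Theorem \ref{theoremprime} forces $M'=M_t$. Since $M_t=M'\sqsubseteq M^\ast$ gives $M_t\preceq M^\ast$, Theorem \ref{theorempreceq} derives $M_t$ from $\{M_s,M^\ast\}\subseteq S$. \emph{Everything derivable from $S$ is derivable from $S_c$:} it suffices to derive each generator $M_i$ of $S$ from $S_c$; but $\kappa_{max}(M_i)\le\kappa^\ast=\kappa(M_t)$ by the very definition of $\kappa^\ast$, so $M_i$ belongs to the set $S_D$ of Theorem \ref{theoremkappaleq}, which hands us a derivation of $M_i$ from $S_c$. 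Transitivity of the ``derivable from'' relation then gives that $S$ and $S_c$ have the same logical closure.

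For (c), let $S_c'=\{M_s,M_t'\}$ be any central set logically equivalent to $S$, hence to $S_c$. Then $M_t'$ is derivable from $S_c$, so $\kappa_{max}(M_t')\le\kappa(M_t)$ by Theorem \ref{theoremkappaleq}; since $M_t'$ is prime, $\kappa_{max}(M_t')=\kappa(M_t')$, whence $\kappa(M_t')\le\kappa(M_t)$, and the symmetric argument (using $M_t$ derivable from $S_c'$, $M_t$ prime) gives the reverse inequality. Hence $\kappa(M_t)=\kappa(M_t')$, and Theorem \ref{theoremprime} forces $M_t=M_t'$, i.e. $S_c=S_c'$. The step I expect to need the most care is the bookkeeping around the ``derivable from'' relation — transitivity of chained derivations and the precise $\sqsubseteq/\preceq$ data of sub tensor chains — together with the attainment and range of $\kappa^\ast$ flagged in (a); once those are pinned down, the argument is a direct assembly of Theorems \ref{theoremprime}, \ref{theorempreceq}, \ref{theoremkappaleq} and \ref{theoremcentralM}.
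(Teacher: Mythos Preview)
Your proof is correct and follows essentially the same route as the paper: both pick a maximiser $M^\ast\in S$, extract the prime sub chain via Theorem~\ref{theoremcentralM} to obtain $M_t$, use Theorem~\ref{theorempreceq} for the $S\Rightarrow S_c$ direction, and Theorem~\ref{theoremkappaleq} for $S_c\Rightarrow S$. Your step~(c) on uniqueness is actually more explicit than the paper's---the paper simply asserts ``$\exists$ unique prime $M_t$'' at the end, relying implicitly on Theorem~\ref{theoremprime}, whereas you spell out the two-sided inequality $\kappa(M_t)=\kappa(M_t')$ via Theorem~\ref{theoremkappaleq}.
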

\begin{proof}
    $\exists M\in S$ {\it s.t.} $\kappa_{max}(M)=\max\limits_{M'\in S} \kappa_{max}(M')$. Because of Theorem \ref{theoremcentralM}, $\exists$ unique prime tensor chain $M_t$ satisfying $M_t\sqsubseteq M$ and $\kappa(M_t)=\kappa_{max}(M)$. Because of Theorem \ref{theorempreceq}, $S\Rightarrow \ke{M_s,M} \Rightarrow S_c$ and $\kappa(M_t)=\max\limits_{M\in S} \kappa_{max}(M)$.

    On the other hand, because of Theorem \ref{theoremkappaleq}, $S_c\Rightarrow \ke{M_s,M}$, $\forall M\in S$. Thus, $S_c\Rightarrow S$.

    Thus, $\exists $ unique prime $ M_t$ {\it s.t.} $S \Leftrightarrow S_c$ and $\kappa(M_t)=\max\limits_{M\in S} \kappa_{max}(M)$.
\end{proof}

One may ask whether there always exist tensor $T$ and tensor
$E$ satisfying the tensor constraints $S_c$. We do not have a
general proof about the existence here. Nevertheless, for some
specific tensor constraints, we can actually solve them by
constructing specific tensors indeed. Some examples are
demonstrated in Appendix \ref{SectionExistence} and see more in \cite{Ling:2018qec}.

\subsection{Protection}

In this subsection we describe the behavior of tensor chains
under the action of tensor constraints. For this purpose we first
give the following definitions.
\begin{definition}
The transpose of an open tensor chain $M=\bbm m_1&m_2&\cdots&m_k\\
n_1&n_2&\cdots&n_k \ebm$ is $M^T=\bbm
n_1&n_2&\cdots&n_k\\m_1&m_2&\cdots&m_k \ebm$. The transpose of a
closed tensor chain $M=\bpm m_1&m_2&\cdots&m_k\\
n_1&n_2&\cdots&n_k \epm$ is $M^T=\bpm
n_1&n_2&\cdots&n_k\\m_1&m_2&\cdots&m_k \epm$.
\end{definition}

\begin{definition}\label{definitionProtect}
Given a set of tensor constraints $S_c=\left\{M_s, M_t\right\}$,
say a tensor chain $M$ is unprotected, if $\exists M'\in S_D$ such
that $M'^T\sqsubseteq M^T$. Otherwise, we say that $M$ is
protected.
\end{definition}
The notion of protection can be intuitively understood as
following. If we find such $M'$ in $ S_D$ satisfying
$M'^T\sqsubseteq M^T$, then the contraction \be\label{GATC}
    \sum_{B_{i}B_{i+1}\cdots B_j} M^{A_1A_2\cdots A_k}_{B_1B_2\cdots B_k} \left(M'{}_{B_{i}B_{i+1}\cdots B_j}^{D_{i}D_{i+1}\cdots D_j}\right)^*,\quad 1\leq i\leq j\leq k
\ee can be simplified under the constraint
$S_c=\left\{M_s,M_t\right\}$. Diagrammatically, the tensor chain
$M$ becomes disconnected under the contraction with the tensor
chain $M'^*$, as illustrated in Fig.\ref{Figunprotected}. In other
words, when we say a tensor chain is protected, it means that one
can not factorize it by contracting its lower indexes with any
$M\in S_D$ derived from $S_c$. Actually, the condition in
Definition \ref{definitionProtect}, namely ``$\exists M'\in
S_D$'', can be simplified as ``$\exists M'\preceq M_t$''.

\begin{figure}
  \centering
  \includegraphics[width=500pt]{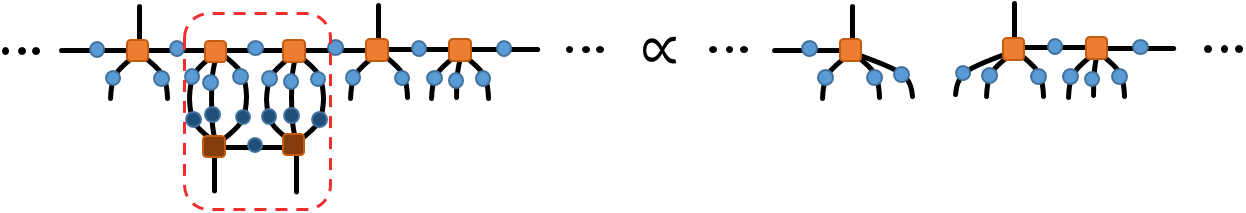}\\
\caption{A unprotected tensor chain $\bpm \cdots&1&0&0&1&0&\cdots
\\ \cdots&2&3&3&2&3&\cdots \epm$ becomes disconnected when
contracting it with the conjugation of $\bbm 1&1\\3&3\ebm$
in $S_c=\left\{\bbm 1\\4\ebm, \bbm
1&1\\3&3\ebm\right\}$.}\label{Figunprotected}
\end{figure}

\subsection{CP tensor chains}\label{SubSectionCP}
In this subsection we point out that given a tiling and $S_c$,
there exists a tensor chain which is critically
protected. We notice that whether a tensor chain $M$ is protected
or not can  be reflected by the value of interior angles, which
roughly speaking measures the curvature of the skeleton of the
tensor chain. Specifically, the larger is $\kappa_{max}(M^T)$,
the easier it is for $M$ to become unprotected. Therefore, there
is a critical value for $\kappa$ at which tensor chain is
critically protected.
\begin{definition}\label{definitionPeriod}
Given an open tensor chain $M=\bbm m_1&m_2&\cdots&m_{k-1}&m_k\\
n_1&n_2&\cdots&n_{k-1}&n_k \ebm$, we can define a periodic tensor chain by joining infinitely many $M$s as
\be
M_{period}=\kc{ \begin{array}{cccccccccccc}
\cdots&m_1-1&m_2&\cdots&m_{k-1}&m_k&m_1-1&m_2&\cdots&m_{k-1}&m_k&\cdots \\
\cdots&n_1&n_2&\cdots&n_{k-1}&n_k-1&n_1&n_2&\cdots&n_{k-1}&n_k-1&\cdots
\end{array} }
\ee
with a loop body $\bpm m_1-1&m_2&m_3&\cdots&m_k\\
n_1&n_2&n_3&\cdots&n_k-1\epm$
\footnote{Here we have exceptionally
    used the notation of closed tensor chain to denote a periodic tensor chain and a loop body, because both of them satisfy (\ref{Closedsummn}).}.
$k$ is called the period of $M_{period}$.
\end{definition}
Obviously, $\kappa(M_{period})=\kappa(M)$.
\begin{definition}\label{definitionCPTC}
Given a tiling and the set $S_c$, we define the critically
protected (CP) tensor chain $M_c$ as the periodic tensor chain
generated by $M_t$. We further define the CP reduced interior
angle as $\kappa_c=\kappa(M_c)=\kappa(M_t)$.
\end{definition}

We demonstrate the construction of $M_c$ with an example in
Fig.\ref{Figloopbody}.

The exact meaning of critical protection is characterized by the following theorem.
\begin{theorem}\label{theoremkappac}
With a given $\{b,a\}$ tiling and a given $S_c$, an open tensor chain $M=\bbm *&*&\cdots&*\\n_1&n_2&\cdots&n_k \ebm$ is unprotected if and only if $\exists p,q$ satisfying $1\leq p\leq q\leq k$ such that
\be\label{hless}
    \sum_{i=p}^{q} (a-1-n_i)\leq(q-p+1)\kappa_c-1.
\ee
A closed tensor chain $M=\bpm *&*&\cdots&*\\n_1&n_2&\cdots&n_k \epm$
is unprotected if and only if $\exists p,h$ satisfying $h\leq k$ such that
\be
\sum_{i=1}^{h} (a-1-n_{(p+i-1)\mod k})\leq h\kappa_c-1.
\ee
\end{theorem}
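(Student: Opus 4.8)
The plan is to translate the combinatorial definition of ``unprotected'' directly into the inequality (\ref{hless}) using the characterizations already established. Recall that by Definition \ref{definitionProtect} (together with the simplification noted right after it), a tensor chain $M$ is unprotected iff there exists $M'\preceq M_t$ with $M'^T\sqsubseteq M^T$. The first step is to unpack $M'^T\sqsubseteq M^T$ in terms of the lower indices $n_i$ of $M$: writing $M'=\bbm m_1'&\cdots&m_h'\\ n_1'&\cdots&n_h' \ebm$, the condition $M'^T\sqsubseteq M^T$ says precisely that for some $p$ the block $(n_p,n_{p+1},\dots,n_q)$ with $q=p+h-1$ equals the upper row of $M'^T$, i.e.\ $n_{p+i-1}=n_i'$ for all $i$ (for the open case; one must be careful that the endpoints of a sub-chain inherit the $+\delta$ shifts correctly, which is exactly where the $a-1-n_i$ rather than $a-2-n_i$ bookkeeping enters). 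Then the remaining freedom is the choice of $M'\preceq M_t$; by Theorem \ref{theoremkappaleq} such an $M'$ exists (as a derivable constraint, which is what ``unprotected'' really needs) iff $\kappa_{max}(M')\le\kappa(M_t)=\kappa_c$, and for a \emph{prime-candidate} block of fixed length $h$ the minimal possible $\kappa$ is governed by the sum of its reduced interior angles.

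Concretely, the second step is to compute $\kappa$ of the relevant sub-chain. For the open sub-chain on nodes $p,\dots,q$, the reduced interior angles are $s_i = m_i' + 1 - \delta_{i1}$ and, via (\ref{Opensummn}) applied to the length-$h$ chain, $m_i' = (a-1-n_i') + (\text{endpoint corrections})$ — equivalently $s_i$ for the \emph{interior} of the chain equals $a-1-n_{p+i-1}$ while the two end nodes pick up the extra unit of interior angle. Summing, $\sum_{i} s_i = \sum_{i=p}^{q}(a-1-n_i) + (\text{a correction of }+1\text{ or so})$, and the existence of a derivable $M'$ with this lower row is equivalent to $\kappa \le \kappa_c$, i.e.\ $\sum_{i}s_i \le h\kappa_c$, which after moving the correction term is exactly $\sum_{i=p}^{q}(a-1-n_i)\le (q-p+1)\kappa_c - 1$. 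The ``only if'' direction is then immediate: an unprotected $M$ supplies such a $p,q$; the ``if'' direction uses that given $p,q$ satisfying (\ref{hless}) we can \emph{construct} a suitable $M'$ — take the prime tensor chain of the appropriate $\kappa$ (Theorem \ref{theoremprime}) with a compatible lower row, verify $M'\preceq M_t$ via Theorem \ref{theoremkappaleq}, and check $M'^T\sqsubseteq M^T$ by matching rows on the block.

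For the closed case the argument is parallel: a closed chain $M=\bpm *&\cdots&*\\ n_1&\cdots&n_k \epm$ is unprotected iff some cyclic sub-block $(n_{p},n_{p+1},\dots,n_{p+h-1})$ (indices mod $k$), now read as an \emph{open} sub-chain of length $h\le k$ (cutting the loop makes both endpoints ``end nodes''), admits a derivable $M'$ — and the same reduced-interior-angle count gives $\sum_{i=1}^{h}(a-1-n_{(p+i-1)\bmod k}) \le h\kappa_c - 1$. The only subtlety versus the open case is that all $h$ nodes of the block are treated with the open-chain endpoint conventions regardless of where the cut is made; the cyclic symmetry of the closed chain is what makes any starting point $p$ legitimate.

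The main obstacle I anticipate is the endpoint bookkeeping: keeping straight the $\delta_{i1}$ and $\delta_{i,k}$ shifts in (\ref{Opensummn}) and Definition \ref{definitions} as one passes between ``$M'$ as a standalone open tensor constraint'' and ``$M'^T$ embedded as a sub-chain of $M^T$,'' and verifying that these shifts conspire to produce exactly the $a-1-n_i$ summand and the $-1$ on the right-hand side (rather than $a-2-n_i$ or a different additive constant). Once that accounting is pinned down, the equivalence with the existence of a derivable $M'$ of small enough $\kappa$ is a direct appeal to Theorems \ref{theoremprime}, \ref{theorempreceq}, and \ref{theoremkappaleq}, and the proof is short.
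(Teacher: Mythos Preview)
Your outline has the right skeleton but glosses over the one genuinely nontrivial step. Once you fix a block $(p,q)$, the chain $M'$ with $M'^T\sqsubseteq M^T$ is \emph{uniquely determined}: its lower row must be $(n_p,\dots,n_q)$, and then (\ref{Opensummn}) forces the upper row. So you cannot ``take the prime tensor chain of the appropriate $\kappa$'' --- $M'$ is handed to you. The computation you sketch indeed shows that (\ref{hless}) at $(p,q)$ is exactly $\kappa(M'_{p,q})\le\kappa_c$; but by Theorem~\ref{theoremkappaleq} the condition $M'_{p,q}\in S_D$ is $\kappa_{max}(M'_{p,q})\le\kappa_c$, which is strictly stronger. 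Your asserted equivalence ``derivable $M'$ with this lower row $\Leftrightarrow\kappa\le\kappa_c$'' is therefore false at a fixed block, and the ``if'' direction does not follow from what you wrote.

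The missing ingredient is a minimality argument: pass to a block $(p,q)$ of shortest length satisfying (\ref{hless}) and show that \emph{for this} block $\kappa_{max}(M'_{p,q})\le\kappa_c$. The paper does this by using minimality to squeeze each partial sum $\sum_{i=p}^{p+l-1}(a-1-n_i)$ between $l\kappa_c-1$ and $l\kappa_c$, forcing it to equal $[l\kappa_c]$; comparing with (\ref{constructm}) then exhibits $M'_{p,q}\preceq M_t$ directly. (One can alternatively argue by descent: if some $\kappa_{r,s}(M'_{p,q})>\kappa_c$ with $(r,s)\neq(1,h)$, splitting $X=\sum(a-1-n_i)$ at $r$ and $s$ shows (\ref{hless}) already holds on a strictly shorter sub-block, contradicting minimality.) Either way, this step --- not the endpoint bookkeeping you flagged --- is where the work lies. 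The ``only if'' direction, by contrast, really is immediate once the $\kappa$ bookkeeping is done, since $\kappa\le\kappa_{max}$.
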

\begin{proof}
We will present the proof for the case of open tensor chain in
detail and claim that it can be applied to closed tensor chain in
parallel. The main difference will be mentioned in the end of
proof.

We first prove the proposition: if $\exists p,q$ satisfying $1\leq p\leq q\leq k$ such that (\ref{hless}) is true, then $M$ is unprotected.
Without lose of generality, we start with the assumption that $\not\exists
p',q'$ with $p< p'\leq q'\leq q $ or $p\leq p'\leq q'< q $ satisfying
$\sum_{i=p'}^{q'} (a-1-n_i)\leq(q'-p'+1)\kappa_c-1$. Otherwise, we
just replace $p,q$ by $p',q'$.

Let $h=q-p+1$ and $x_i=a-1-n_i$, $\forall i$.
There are two cases:

1. If $h=1$, then $a-n_p\leq \kappa_c\Rightarrow a-n_p\leq [\kappa_c]$. Because of
(\ref{constructm}), $\bbm
a-n_p\\n_p \ebm \preceq \bbm [\kappa_c] \\ a-[\kappa_c]\ebm \preceq
M_t$. We know $\bbm
a-n_p\\n_p \ebm^T \sqsubseteq M^T$, then $M$ is unprotected.

2. If $h\geq 2$, then $\forall l\in\{1,2,\cdots,h-1\}$, we have
\bea
    \sum_{i=p}^{p+l-1} x_i > l \kappa_c - 1,   \label{lgreater1} \\
    \sum_{i=p+l}^{q} x_i > (h-l) \kappa_c - 1 .\label{lgreater2}
\eea
From (\ref{hless})(\ref{lgreater1})(\ref{lgreater2}), we have
\bea
    l\kappa_c-1 < \sum_{i=p}^{p+l-1} x_i < l \kappa_c & \Rightarrow & \sum_{i=p}^{p+l-1} x_i=[l\kappa_c] ,\label{sumsFloorkappac1}\\
    h\kappa_c-2 < \sum_{i=p}^{q} x_i \leq h \kappa_c-1 &\Rightarrow & \sum_{i=p}^{q} x_i = [h \kappa_c]-1 .\label{sumsFloorkappac2}
\eea Let the rational number $\kappa_c=u/v$, where $u,v\in
\mathbb{N^+}$ and $u,v$ are coprime. From
(\ref{sumsFloorkappac1}), we know $l\kappa_c \notin \mathbb{Z}$,
then $l/v\not\in \mathbb{Z},\forall l\in \{ 1,2,...,h-1\}$. Thus
$v\geq h$. From (\ref{sumsFloorkappac1})(\ref{sumsFloorkappac2}),
we have $x_{l+p-1}=[l\kappa_c]-[(l-1)\kappa_c]-\delta_{lh}$. We define $M'=\bbm
m_1'&m_2'&\cdots&m_h'\\n_1'&n_2'&\cdots&n_h' \ebm$ where
$n_i'=n_{i+p-1}$. So $M'^T\sqsubseteq M^T$ and $m_i'=[i\kappa_c]-[(i-1)\kappa_c]-\delta_{i1}-1$. Comparing $M'$ with (\ref{constructm}), we have $M'\preceq M_t$ such that $M$ is unprotected.

Now we prove the converse proposition: If $M$ is unprotected, then
$\exists p,q$ satisfying $1\leq p\leq q\leq k$ such that
(\ref{hless}) is satisfied. Suppose that the top tensor chain
$M_t=\bbm m_1'&m_2'&\cdots&m_v'\\n_1'&n_2'&\cdots&n_v' \ebm$,
and $M$ is unprotected when its tensors from the $p$th to the
$q$th are acted on by a tensor constraint $M'$. Since
$M'^T\sqsubseteq M^T$ and $M'\preceq M_t$, $\exists c,d$
satisfying $1\leq c\leq d\leq v$ and $d-c+1=q-p+1=h$ such that
$x_{p+l-1}\leq m_{c+l-1}'+1 - \delta_{l1} -\delta_{lh}$, $\forall
l\in\{1,2,\cdots,h\}$. By using (\ref{constructm}), we further
have \be\begin{split}
  \sum_{i=p}^q x_i  &\leq h-2+ \sum_{i=c}^d m_i' \\
  &= h-2+\sum_{i=c}^d \kc{ [i\kappa_c]-[(i-1)\kappa_c]-1+\delta_{i1}}\\
  &= [d\kappa_c]-[(c-1)\kappa_c]-2+\delta_{c1}  \\
  &=[d \kappa_c]+[(1-c)\kappa_c]-1  \\
  &\leq [h\kappa_c]-1   \\
  &\leq h\kappa_c-1,
\end{split}\ee
where (\ref{trickfloor}) is used.
So (\ref{hless}) is satisfied.

As far as closed tensor chain is concerned, the only difference is
that it has cyclic symmetry with modular $k$, namely $\bpm
*&*&\cdots&*&*\\n_1&n_2&\cdots&n_{k-1}&n_k \epm = \bpm
*&*&\cdots&*&*\\n_2&n_3&\cdots&n_k& n_1 \epm$, then above $x_i$ is
nothing but reduced interior angles, namely $s_i=a-1-n_i=x_i$. The
proposition can be proved with the same algebra.
\end{proof}

From above theorem, we claim that any tensor chain $M$
satisfying $M^T \preceq M_c^T$ is protected; while any tensor
chain $M$ satisfying $M\preceq M_c$ and $M\neq M_c$ is unprotected.

Similarly, thanks to Theorem \ref{theoremprime}, we have a
one-to-one mapping between CP tensor chain $M_c$ and CP
reduced interior angle $\kappa_c$.

Critical protection characterizes the limit of mapping the
information from one side (with upper indexes) to another side
(with lower indexes) with full fidelity. So the physical
correspondence of CP tensor chain is the maximal boundary of the
region where the interior information can be mapped to the
boundary without loss.

\begin{figure}
  \centering
  \includegraphics[height=40pt]{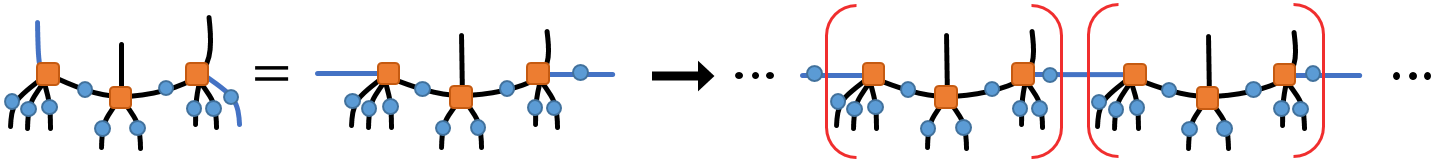}
  \caption{By using top tensor chain $\bbm 1&1&1\\3&2&3 \ebm$, we can define loop body $\bpm 0&1&1\\3&2&2 \epm$ and construct a CP tensor chain $\bpm \cdots&0&1&1&0&1&1&\cdots \\ \cdots&3&2&2&3&2&2&\cdots \epm$.}\label{Figloopbody}
\end{figure}

\section{Geometric description}\label{SectionGeometry}

In this section we elaborate some geometric properties of the
tensor network with $\{b,a\}$ tiling in $H^2$ space, which will
be essential for us to provide a quantitative description for the
QEC and ES in tensor networks. The isometry group of $H^2$ space
is $SL(2,R)$. In $H^2$ space, the curves of constant curvature
(CCC) include circles, hypercircles and horocircles, depending
on their values of geodesic curvature. Geodesic is a kind of
hypercircle \footnote{In many literature, `geodesic' in tensor
    network often refers to the polyline with minimal cuts. However,
    it may not always coincide with the geometrical geodesic in $H^2$
    space. We will not adopt `geodesic' to describe the polyline
    with minimal cuts through this paper.}.
A brief review on $SL(2,R)$ and CCC is given in Appendix \ref{SectionH2}.

\subsection{The curve of constant curvature corresponding to a periodic polyline}\label{SubSectionCCCPL}
The $\ke{b,a}$ tiling
breaks the isometry group $SL(2,R)$ into a discrete subgroup
$G_\text{tiling}$, which is the set of all transformations
preserving the tiling. We are interested in two specific
generators $V,S$ of $G_\text{tiling}$, where $V$ is the
anticlockwise rotation around a node by an angle $2\pi/a$ and $S$
is the clockwise rotation around the midpoint of an edge linked to
this node by $\pi$. $V,S$ should satisfy the following equations
\bea
S^2=V^a=(VS)^b=-1,\\
\Tr(S)=0,\\
\Tr(V)=2\cos(\pi/a),\\
\Tr(VS)=2\cos(\pi/b).
\eea
The solutions up to $SL(2,R)$ are
\be
S=\left(
\begin{array}{cc}
    0 & -1 \\
    1 & 0 \\
\end{array}
\right),\quad
V=\left(
\begin{array}{cc}
    \cos \left(\frac{\pi }{a}\right) & e^{\frac P2} \sin \left(\frac{\pi }{a}\right) \\
    -e^{-\frac P2} \sin \left(\frac{\pi }{a}\right) & \cos \left(\frac{\pi }{a}\right) \\
\end{array}
\right).
\ee
where length $P$ is given in (\ref{lengthp}).

Recall that a tensor chain can be embedded into a tensor network
in $H^2$ space, as discussed at the beginning of Subsection
\ref{SubSectionTensorConstraint}. Similarly, a periodic tensor
chain can also be embedded, whose skeleton forms an endless
and periodic polyline.
When the scale of a chain is much greater than the period of a
polyline, the roughness of the skeleton can be zoomed out such
that it looks like a CCC in $H^2$ space, whose geodesic curvature
$\lambda$ is a constant. In general, given the embedding of a
periodic tensor chain, we can define a unique CCC corresponding to
this chain by specific operation. Next we will firstly present the
procedures to locate such a CCC and finally discuss some
exceptional cases that such CCC could not be defined.

The periodic tensor chain $M$ can be constructed from an open
tensor chain according to Definition \ref{definitionPeriod}. We
choose a node of $M$ and number it by $i$. The loop body beginning
at the $(i+1)$th node is $\bpm
m_{i+1}&m_{i+2}&\cdots&m_{i+k}\\n_{i+1}&n_{i+2}&\cdots&n_{i+k}
\epm$, where $k$ is the period of $M$. Now we choose the center of
the rotation generated by $V$ as the $i$th node and the center of
the rotation generated by $S$ as the midpoint of the edge between
the $i$th node and the $(i+1)$th node. Then we further define a
transformation preserving the structure of periodic polyline as
\be\label{GenM} W= V^{s_{i+k}}S V^{s_{i+k-1}}S \cdots V^{s_{i+2}}S
V^{s_{i+1}}S, \quad s_j=m_j+1, \quad \forall j, \ee which maps
each period in the polyline to the next period along the direction
of the polyline. Starting from a point $q$ in $H^2$ space, the set
of all the points generated by $W^n$, namely $Q_W=\{W^nq|n\in
\mathbb{Z}\}$, will be located on a CCC. When $|Q_W|\geq3$, the
CCC can be uniquely determined.

We are interested in the case that point $q$ is the midpoint of
one edge with lower index in $M$. With some algebra we finally
derive that the geodesic curvature $\lambda$ of this kind of CCC
can be calculated by \be\label{Lambda} \lambda^2=\frac{[\Tr (\pi_q
W)]^2}{(\Tr W)^2+[\Tr(\pi_qW)]^2-4}, \ee where $\pi_q$ is the
matrix of clockwise rotation by angle $\pi$ around point $q$,
which belongs to $SL(2,R)$. $\pi_q$ can be generated by the
generators $V$ and $S$, according to the relative position between
point $q$ and the $i$th node.

Obviously, different choices of point $q$ generate different CCCs
and $\lambda$s. To determine the unique CCC corresponding to $M$,
we remark that one just need to choose the point $q$ which
minimizes $|\lambda-1|$ in (\ref{Lambda}). The process of
generating CCC from a periodic tensor chain is illustrated in
Fig.\ref{FigTCCCC}.

\begin{figure}
    \centering
    \includegraphics[width=0.5\linewidth]{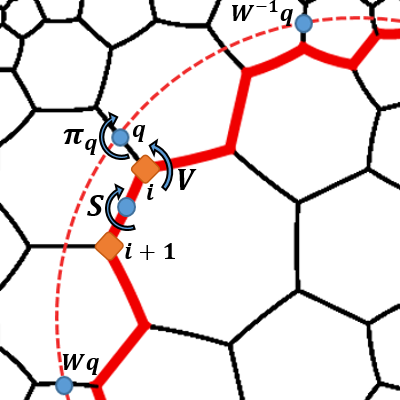}
    \caption{Generate a CCC from a periodic tensor chain with loop body $\bpm 0&1&0 \\ 1&0&1 \epm$ in $\ke{7,3}$ tiling, where $\pi_q=V^{-1}SV$.}
    \label{FigTCCCC}
\end{figure}

Once the CCC corresponding to a periodic tensor chain can be
uniquely determined, the classification of CCC in
(\ref{ClassifyCCC}) can be reformulated by the trace of $W$,
\be
\begin{array}{ll}
    |\Tr(W)|<2,& \text{circle} \\
    |\Tr(W)|=2,& \text{horocircle}\\
    |\Tr(W)|>2,& \text{hypercircle}\\
\end{array}
\ee

From Theorem \ref{theoremprime} and Definition
\ref{definitionPeriod}, for a given rational number
$\kappa\in[1,\frac a2]$, one can construct a unique prime tensor
chain and its periodic tensor chain $M$ whose $\kappa(M)=\kappa$.
As a result, one can further figure out the corresponding CCC as
well as its geodesic curvature $\lambda$. So one can define a
mapping from $\kappa$ to $\lambda$, as illustrated in Fig.\ref{Figkappavslambda}.

It may be noticed that not all the $\lambda$ can be inversely
mapped to $\kappa$, because $\lambda$ is a positive number while
$\kappa$ is a rational number. Nevertheless, horocircle, whose
curvature $\lambda=1$, is a special kind of CCC in $H^2$ space. It
corresponds to a closed tensor chain (closed polyline) in large
radius limit. Furthermore, given a $\{b,a\}$ tiling, we can prove
that the average reduced interior angle $\kappa_h$ of such a
closed tensor chain is \be\label{kappah} \kappa_h=\frac a2 -
\frac{a-2}2\sqrt{\frac{ab-2a-2b}{ab-2a-2b+4}}. \ee Since this
quantity plays a crucial role in classifying the tensor networks,
we provide the detailed proof as follows.
\begin{proof}
    Horocircle is the limit of a circle with
    infinitely long radius in $H^2$ space. To construct the polyline corresponding to
    horocircle in a tensor network, we may consider the process of
    increasing the scale of a closed polyline. In a network with
    $\{b,a\}$ tiling, we consider a closed polyline $M$ with $k$ nodes
    and its total reduced interior angle is $l$. So the average
    reduced interior angle of $M$ is $\kappa=l/k$. Now it is
    helpful for us to plot the Poincare dual of the network with
    $\{b,a\}$ tiling, which is a network with $\{a,b\}$ tiling and all
    the nodes located at centers of dual polygons, as illustrated in
    Fig.\ref{Poincaredual}.

    Given a polyline $M$, its outer adjoint polyline $M'$ in Poincare
    dual network can be constructed step by step: (1) find out
    those elementary polygons in Poincare dual network whose center
    is located at the node of $M$; (2) pick out the edges of
    those polygons outside of $M$; (3) link those edges in order
    and then obtain a closed polyline which is just $M'$.

    The relation between $M$ and $M'$ is illustrated in Fig.\ref{Poincaredual}. One can find that $M'$ has $(a-1)k-l$ nodes and its total reduced interior
    angle is $ak-l$. Keep going on, one can find the outer adjoint
    polyline $M''$ of polyline $M'$ will fall back into the
    network with $\{b,a\}$ tiling. Similarly, polyline $M''$ has
    $(b-1)((a-1)k-l)-ak+l$ nodes and its total reduced interior angle
    is $b((a-1)k-l)-ak+l$. So its average reduced interior angle is
    \be
    \kappa''=\frac{b((a-1)k-l)-ak+l}{(b-1)((a-1)k-l)-ak+l}
    =\frac{(1-b)\kappa+(b-2)c_{1}(a-c_{1})}{(2-b)\kappa+ab-2a-b+1}
    \equiv f(\kappa),
    \ee where \be c_{1}\equiv\frac a2 -
    \frac{a-2}2\sqrt{\frac{ab-2a-2b}{ab-2a-2b+4}}. \ee The number of
    nodes enclosed by $M''$ is always more than that enclosed by
    $M$. Thus, if
    we begin with an elementary closed polyline with $b$ nodes and
    total reduced interior angle $b$ and continuously find the outer
    adjoint polyline, we will approach the polyline corresponding to a
    horocircle. Thus we have \be
    \kappa_h=\lim_{n\to\infty}f^n(1),
    \ee
    where $f^n$ represents $f$ applied $n$ times. To evaluate
    above limit, we observe that
    \be
    \frac{\kappa''-c_{1}}{\kappa''+c_{1}-a}=c_{2} \frac{\kappa-c_{1}}{\kappa+c_{1}-a},
    \ee
    where
    \be
   c_{2}\equiv\frac{1-b-2c_{1}+bc_{1}}{1-b+(2-b)(c_{1}-a)},\quad 0<c_{2}<1.
    \ee
    Thus
    \be
    \lim_{n\to\infty}\frac{f^n(1)-c_{1}}{f^n(1)+c_{1}-a}=0.
    \ee
    Since $f^n(1)$ must be finite, we finally have
    \be
    \kappa_h=\lim_{n\to\infty}f^n(1)=c_{1}.
    \ee
\end{proof}

\begin{figure}
        \includegraphics[width=300pt]{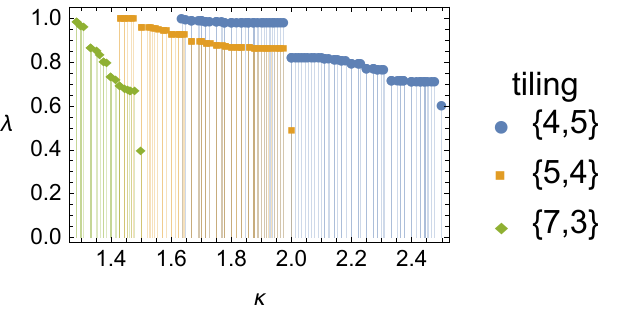}
        \caption{$\lambda$ as a function of $\kappa$ in different tilings, where only some rational numbers of $\kappa\in(\kappa_h,\kappa_0]$ are shown.}
        \label{Figkappavslambda}
\end{figure}

Now we discuss some exceptional cases. The first case is
$|Q_W|\leq 2$, namely the number of generated points less than
two, so CCC can not be uniquely defined through the above process.
 The second case is that $M_{period}$ bends in an irregular way such
that the embedding of periodic tensor chain may lead to a
self-crossed polyline and no CCC could form, such as the
$M_{period}$ with loop body $\bpm 1&1&1\\ 5&4&5 \epm$ in the
tensor network with $\ke{3,7}$ tiling. Such two cases only happen for some $\kappa\in [1,\kappa_h)$.

\subsection{CP curves}
The CCC corresponding to
a CP tensor chain is called CP curve, whose geodesic curvature
is called CP curvature $\lambda_c$. CP curve is a generalization of
the greedy geodesic in \cite{Pastawski:2015qua}.

Given a tiling, CP curvature $\lambda_c$ and CP reduced interior
angle $\kappa_c$ are inversely related to each other, as shown in
Fig.\ref{Figkappavslambda}. If the CP tensor chain form a closed
polyline, the CP curve is a circle, $\kappa_c<\kappa_h$ and
$\lambda_c>1$. If the CP tensor chain form an open polyline
which extends to the boundary, the CP curve is a hypercircle,
$\kappa_c>\kappa_h$ and $\lambda_c<1$.

Roughly speaking, a periodic tensor chain is unprotected if its
corresponding CCC has geodesic curvature $\lambda>\lambda_c$;
while it is protected if the corresponding CCC has
$\lambda<\lambda_c$ .

\begin{figure}
    \centering
    \includegraphics[width=200pt]{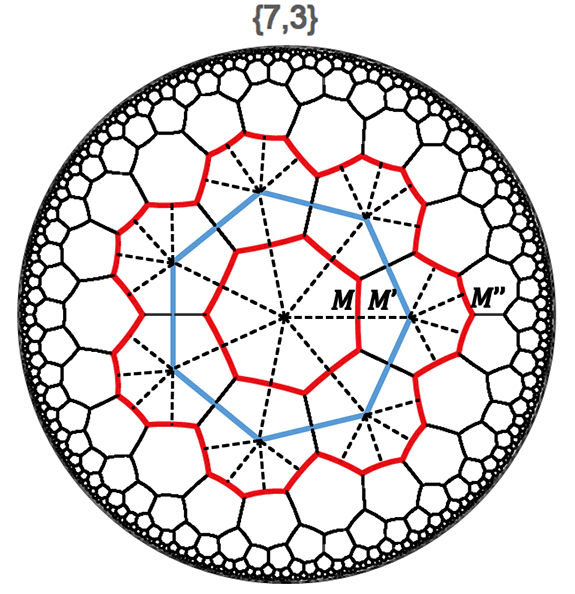}
    \caption{The minimal polyline $M$, the outer adjoint polyline $M'$ of polyline $M$ and the outer adjoint polyline $M''$ of  polyline $M'$ in the network with $\ke{7,3}$ tiling (solid line) and its Poincare dual (dashed line).}\label{FigOuterAdjointPolyline}
    \label{Poincaredual}\end{figure}

\section{Tensor networks and greedy algorithm}\label{SectionTNExamples}

So far we have established a framework to describe tensor
chains and tensor constraints. We will impose the central set $S_c$ to a tensor network in the sense that those constraints $M\in S_D$ derived from $S_c$ are valid, while those $M \notin S_D$ are not valid.

\subsection{Tensor networks with $\ke{7,3}$ tiling or $\ke{4,5}$ tiling}

Before analysing the role of critical protection in QEC and ES for general tensor
networks, we construct some specific examples of tensor networks
and provide an intuitive understanding on the notion of critical
protection.

The structure of a tensor network with $\ke{7,3}$ tiling is shown
in Fig.\ref{Fig73Tiling}. According to (\ref{kappah}), one has
$\kappa_h=1.28$. We impose the central set $S_c=\ke{\bbm 1\\ 2 \ebm, M_t}$ for some typical $M_t$ and discuss
the entanglement property of the tensor network. The structure
of $M_t$, $M_c$, and the values of $\kappa_c$ and $\lambda_c$ are
listed in Table.\ref{Table73}. The corresponding diagrams of
tensor constraints and CP tensor chains in the tiling are
illustrated in Fig.\ref{Fig7311}, \ref{Fig73101}, \ref{Fig731} and
\ref{Fig7310101} respectively.

\begin{table}
    \centering
\begin{tabular}{|c|c|c|c|c|c|c|}
    \hline
    $M_t$ & loop body of $M_c$ & $\kappa_c$ & $\lambda_c$ & CP curve & QEC & ES \\
    \hline
    $\bbm 1\\ 2 \ebm$ & $\bpm 0\\1 \epm$ & $1$ & 1.392 & circle & N & non-flat \\
    \hline
    $\bbm 1&0&1\\1&1&1 \\ \ebm$ & $\bpm 0&0&1\\1&1&0 \epm$ & $4/3$ & 0.869 & hypercircle & Y & non-flat \\
    \hline
    $\bbm 1&0&1&0&1\\1&1&0&1&1 \\ \ebm$ & $\bpm 0&0&1&0&1\\1&1&0&1&0 \epm$ & $7/5$ & 0.738 & hypercircle & Y & mixed \\
    \hline
    $\bbm 1&1\\1&1 \ebm$ & $\bpm 0&1\\1&0 \epm$ & $3/2$ & 0.398 & hypercircle & Y & flat \\
    \hline
\end{tabular}
\caption{Properties about CP and entanglement for different tensor
constraints in $\ke{7,3}$ tiling.}\label{Table73}
\end{table}

\begin{figure}
    \newcommand{\minipagewidth}{0.47\linewidth}
    \newcommand{\figwidth}{110pt}
    \newcommand{\figheight}{30pt}
\begin{minipage}[t]{\minipagewidth}
        \centering
    \includegraphics[height=\figwidth]{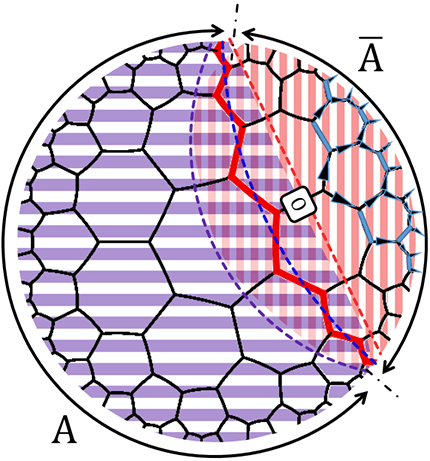}\\    \includegraphics[height=\figheight]{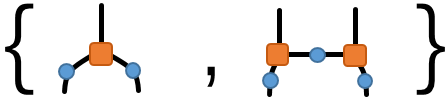}
\caption{The tensor network with $\ke{7,3}$ tiling and
$S_c=\ke{\bbm 1\\2 \ebm, \bbm 1&1\\1&1 \ebm }$. The boundary is
divided into two intervals $A$ and $\bar A$. Those tensors within
the shaded region with purple (red) strips are absorbed by the
greedy algorithm starting from $A$ ($\bar A$). The solid polyline
in red denote a CP tensor chain. The red dashed curve is the CP
curve (hypercircle); the blue dashed curve is its corresponding
geodesic; the purple dashed curve is the reflection of CP curve
with respect to the geodesic. An operator $O$ in the interior is
pushed to the boundary.
    }\label{Fig7311}
\end{minipage}
\hfill
\begin{minipage}[t]{\minipagewidth}
    \centering
  \includegraphics[height=\figwidth]{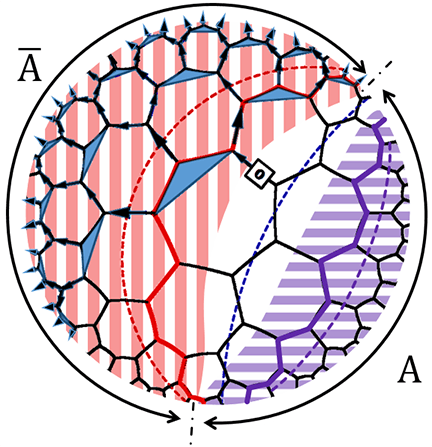}\\
  \includegraphics[height=\figheight]{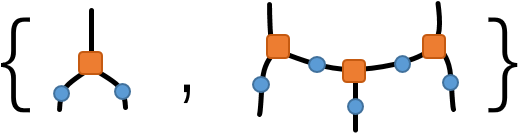}
  \caption{The tensor network with $\ke{7,3}$ tiling and $S_c=\ke{\bbm 1\\2 \ebm, \bbm 1&0&1\\1&1&1 \\ \ebm }$.  An operator is pushed to the boundary. The CP region is
enclosed by two CP curves.  }\label{Fig73101}
\end{minipage}
\vfill
\begin{minipage}[t]{\minipagewidth}
    \centering
    \includegraphics[height=\figwidth]{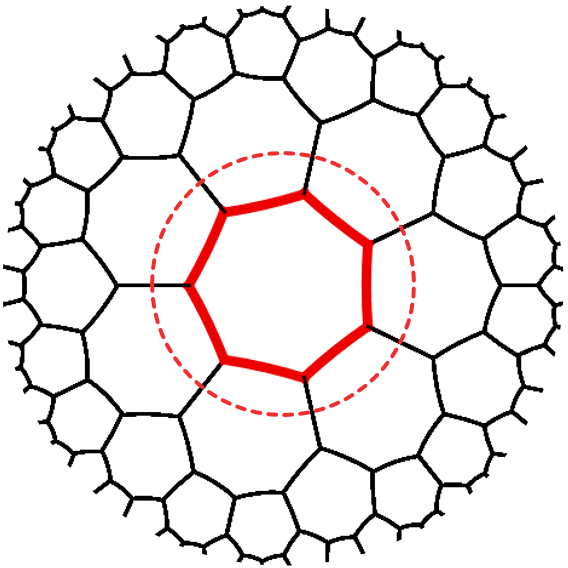}\\
    \includegraphics[height=\figheight]{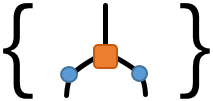}\\
    \caption{The tensor network with $\ke{7,3}$ tiling and $S_c=\ke{\bbm 1\\2 \ebm }$.
    }\label{Fig731}
\end{minipage}
\hfill
\begin{minipage}[t]{\minipagewidth}
    \centering
    \includegraphics[height=\figwidth]{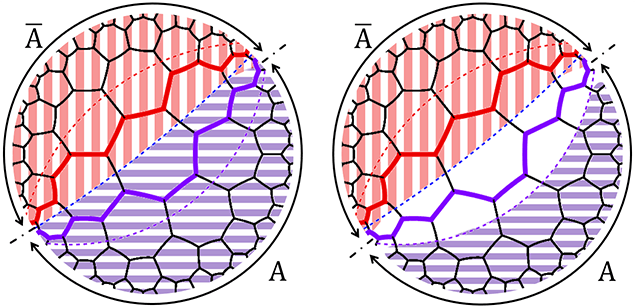}\\
    \includegraphics[height=\figheight]{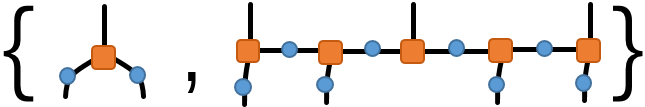}\\
    \caption{The tensor network with $\ke{7,3}$ tiling and $S_c=\ke{\bbm 1\\2 \ebm, \bbm 1&0&1&0&1\\1&1&0&1&1 \\ \ebm }$.
    }\label{Fig7310101}
\end{minipage}
\end{figure}

In parallel, a tensor network with $\ke{4,5}$ tiling is shown in
Fig.\ref{Fig45Tiling}. In this case, one has $\kappa_h=1.63$. The
entanglement properties of this tensor
network with different top tensor chains
are collected in Table.\ref{Table45}. The corresponding diagrams
of tensor constraints and the embedded CP tensor chains are
plotted in Fig.\ref{Fig4522}, \ref{Fig45111}, \ref{Fig451} and
\ref{Fig452} respectively.

\begin{table}
    \centering
    \begin{tabular}{|c|c|c|c|c|c|c|}
        \hline
        $M_t$ & loop body of $M_c$ & $\kappa_c$ & $\lambda_c$ & CP curve & QEC & ES \\
        \hline
        $\bbm 1\\ 4 \ebm$ & $\bpm 0\\3 \epm$ & $1$ & 1.188 & circle & N & non-flat \\
        \hline
        $\bbm 1&1&1\\3&2&3 \\ \ebm$ & $\bpm 0&1&1\\3&2&2 \epm$ & $5/3$ & 0.994 & hypercircle & Y & non-flat \\
        \hline
        $\bbm 2\\3 \\ \ebm$ & $\bpm 1\\2 \epm$ & $2$ & 0.824 & hypercircle & Y & mixed \\
        \hline
        $\bbm 2&2\\2&2 \ebm$ & $\bpm 1&2\\2&1 \epm$ & $5/2$ & 0.604 & hypercircle & Y & flat \\
        \hline
    \end{tabular}
    \caption{Properties about CP and entanglement for different tensor constraints in $\ke{4,5}$ tiling.}\label{Table45}
\end{table}

\begin{figure}
    \newcommand{\minipagewidth}{0.47\linewidth}
    \newcommand{\figwidth}{110pt}
    \newcommand{\figheight}{30pt}
    \begin{minipage}[t]{\minipagewidth}
        \centering
        \includegraphics[height=\figwidth]{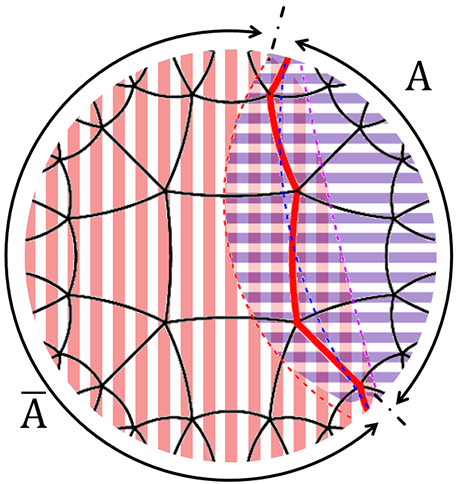}\\
        \includegraphics[height=\figheight]{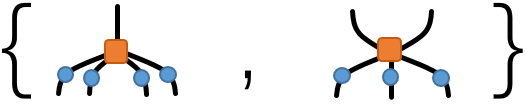}\\
        \caption{The tensor network with $\{4,5\}$ tiling and $S_c=\ke{\bbm 1\\4 \ebm, \bbm 2&2\\2&2 \ebm }$.
        }\label{Fig4522}
    \end{minipage}
    \hfill
    \begin{minipage}[t]{\minipagewidth}
        \centering
        \includegraphics[height=\figwidth]{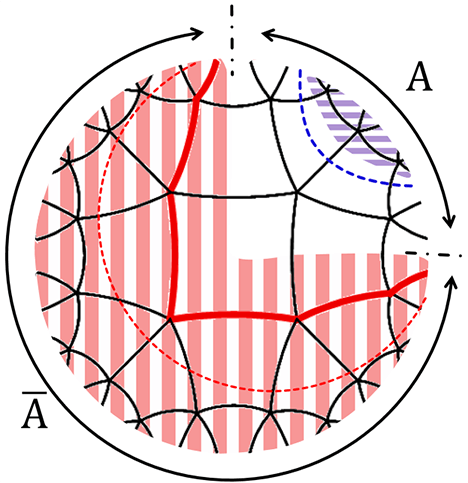}\\
        \includegraphics[height=\figheight]{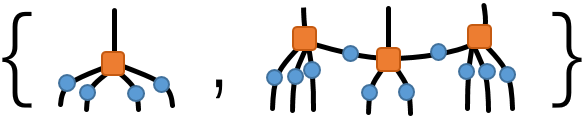}\\
        \caption{The tensor network with $\{4,5\}$ tiling and $S_c=\ke{\bbm 1\\4 \ebm, \bbm 1&1&1\\3&2&3 \ebm }$.
        }\label{Fig45111}
    \end{minipage}
\vfill
    \begin{minipage}[t]{\minipagewidth}
    \centering
    \includegraphics[height=\figwidth]{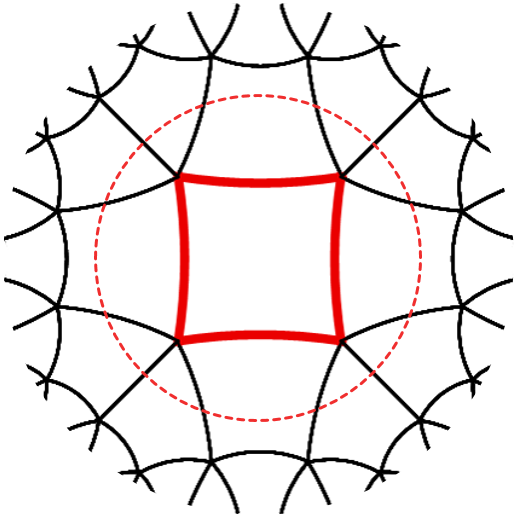}\\
    \includegraphics[height=\figheight]{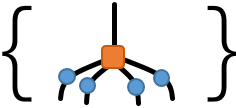}\\
    \caption{The tensor network with $\{4,5\}$ tiling and $S_c=\ke{\bbm 1\\4 \ebm}$.
    }\label{Fig451}
\end{minipage}
\hfill
\begin{minipage}[t]{\minipagewidth}
    \centering
    \includegraphics[height=\figwidth]{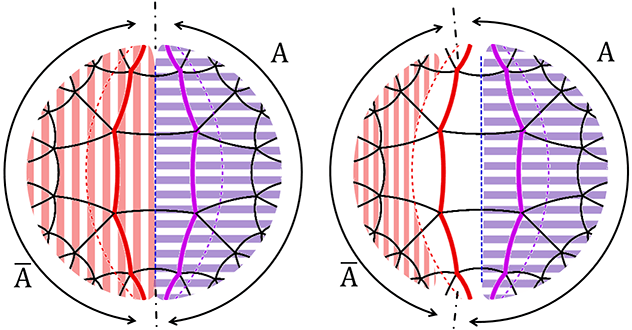}\\
    \includegraphics[height=\figheight]{452Sc.png}\\
    \caption{The tensor network with $\{4,5\}$ tiling and $S_c=\ke{\bbm 1\\4 \ebm, \bbm 2\\3 \ebm }$.
    }\label{Fig452}
\end{minipage}
\label{Fig45}
\end{figure}

In above figures, we divide the boundary of the tensor network
into two intervals $A$ and $\bar A$. The shaded region with
different colors presents the effect of the greedy algorithm
starting from $A$ and from $\bar A$ respectively, which will be
further discussed in next subsection in detail. CP tensor chains
are marked in each figure and its significance in greedy algorithm
will be stressed as well. In next two sections, we will further
take these figures as examples to disclose the relation
between greedy algorithm and quantum error correction as well
as entanglement spectrum.

\subsection{Greedy algorithm on tensor chains}\label{subsectionGA}

For a tensor network $\Psi$, we generalize the greedy algorithm in
\cite{Pastawski:2015qua}, based on the set $S_D$ derived from a
central set $S_c$. After choosing an interval $A$ on the boundary,
we consider a sequence of cuts $\ke{C_n}$ and a sequence of sub
tensor network $\ke{\Phi_n}$, where each $C_n$ is bounded by
$\partial A$ and each $\Phi_n$ consists of those tensors enclosed
by $C_n$ and $A$, shaded with strips. So each $\Phi_n$ is a
mapping from the Hilbert space on $C_n$ to the Hilbert space on
$A$. Let $C_1=A$, then $\Phi_1$ is an identity. Next one figures
out a tensor chain in the tiling which belongs to the set of $S_D$
and all of its lower indexes can be contracted with $\Phi_n$.
Then $\Phi_{n+1}$ is constructed by absorbing such $M_n$
into $\Phi_n$. The greedy algorithm stops when no such tensor
chain can be found. The way of iteration guarantees that each
$\Phi_n$ is proportional to an isometry. As explained in
\cite{Ling:2018qec}, above greedy algorithm for a tensor network
$\Psi$ is equivalent to the procedure of simplifying the
contraction of tensor chains in (\ref{GATC}), where $M$ is any
tensor chain embedded in the tensor network $\Psi$.

To describe the process of greedy algorithm precisely{, which is}
essential in the proofs for the properties of ES, we intend to
extend the notion of protection to a directed cut in greedy
algorithm.

One may notice that process of a greedy algorithm is not unique.
Actually, one may have a lot of ways to arrange the sequence of
absorbing tensor chains into the shaded region $\Phi_n$ such that
during the course of greedy algorithm, $C_n$ need not to be
connected. In the greedy algorithm starting from an interval $A$,
each cut $C_n$ is specified a direction such that its
corresponding $\Phi_n$ is on its right hand side. A cut may
consist of one or more connected components, as illustrated in
Fig.\ref{FigCut}. A connected component can be an open curve or a
closed curve. The open curve is bounded by $\partial A$, while the
closed curve need not. We define the sequence of nodes
corresponding to a connected component as follows.
\begin{definition}
Given a connected component of a directed cut, we find out those
nodes on its left hand side and define the sequence of them along
the direction of the connected component.  The sequence of nodes
given by an open curve is denoted as $[N_1,N_2,\cdots,N_l]$, where
$N_1$ and $N_l$ are located at the boundary. The sequence of nodes
given by a closed curve is denoted as $(N_1,N_2,\cdots,N_l)$,
where $N_1$ and $N_l$ are neighbor.
\end{definition}
Just for convenience, one may allow the sequence number of nodes
to start from any integer, such as $[N_{-1},N_0,N_1,N_2]$. But the
sequence number must be monotonically increasing with unit
step.
\begin{definition}
 Say a tensor chain $M$ is connected to a directed cut $C$, if $M$
lies on the left hand side of $C$ and all the edges associated with its lower indexes are cut by $C$ while all the edges
associated with its upper indexes are not cut by $C$.
\end{definition}
\begin{definition}
Say a directed cut $C$ is unprotected, if there exists
unprotected tensor chain which is connected to $C$. Otherwise, say
$C$ is protected.
\end{definition}
So a greedy algorithm progresses (stops) when the cut is
unprotected (protected).

\begin{figure}
    \centering
    \includegraphics[height=0.5\textheight]{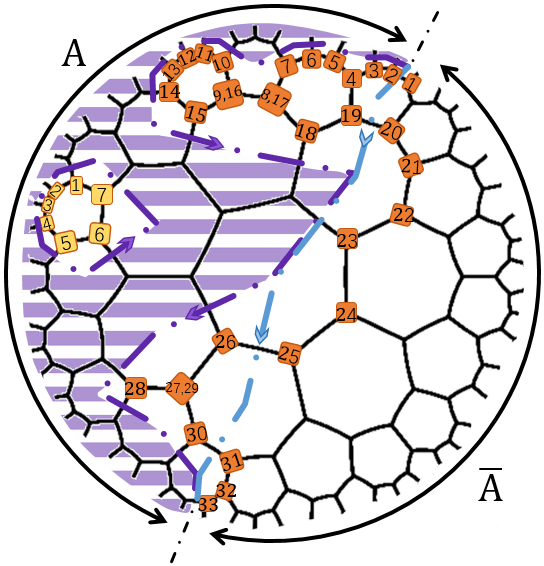}
\caption{A directed cut $C$ during the course of the greedy
algorithm starting from $A$ is marked by purple dot-dashed lines,
which consists of two connected components, an open curve and a
closed curve. The open curve is denoted by
$[N_1,N_2,\cdots,N_{33}]$, with $N_8=N_{17},\,N_9=N_{16}$ and
$N_{27}=N_{29}$. The closed curve is denoted by
$(N_1',N_2',\cdots,N_7')$. The sub tensor networks $\Phi$ absorbed
by the greedy algorithm is shaded with purple strips. A minimal
secant geodesic $G_m$ is marked by blue dot-dashed line.}
    \label{FigCut}
\end{figure}

A greedy algorithm can start from the interval $\bar A$ as well.
In Fig.\ref{Fig7311} \ref{Fig73101} \ref{Fig7310101} \ref{Fig4522}
\ref{Fig45111} \ref{Fig452}, we show the final results of greedy
algorithm on several tensor networks with specific intervals $A$
and $\bar A$, which are shaded with different colors respectively.
In Fig.\ref{Fig731} and \ref{Fig451}, all the tensor chains are
protected under the action of greedy algorithm such that no shaded
region presents in those networks.

In above plots one may notice that some CP tensor chains are
absorbed by greedy algorithm, which apparently conflicts with the
fact that CP tensor chain should be protected. We point out that
this phenomenon ascribes to the fact that the endpoints of CP
tensor chains belong to the interval $A$ or $\bar A$ on the
boundary as well. For instance, consider the greedy algorithm
starting from $\bar A$, as shown in Fig.\ref{FigBoundaryEffect}.
We define $\dot{\bar A}$ to be the interval between one endpoint
of a CP tensor chain, which is an uncontracted edge on the
boundary within $\bar A$, and the most neighboring endpoint of
$\bar A$. Generally, the width of $\dot{\bar A}$ is equal to the
geodesic distance between the CP curve and its axis, {\it i.e.}
$d_c=\text{arctanh}(\lambda_c)$. We firstly consider the greedy
algorithm starting from a sub-interval $\bar A - \dot{\bar A}$. At
this stage greedy algorithm stops before it touches the CP tensor
chain indeed. However, for practice when we calculate the reduced
density matrix $\rho_A$, the contraction on the endpoints of CP
tensor chain, namely uncontracted edges within $\dot{\bar A}$,
must be taken into account by definition. At this stage the CP
tensor chain may fail to be protected under the action of greedy
algorithm, as shown in Fig.\ref{Fig73101}. We refer it as the
boundary effect of greedy algorithm. Nevertheless, we remark that
this boundary effect is weak in the sense that it just absorbs
finite layers (most possibly, only one layer) of tensors, and we
will elaborate it when we study the ES of tensor networks in
Section \ref{SectionES}.

\begin{figure}
    \centering
    \includegraphics[width=0.7\linewidth]{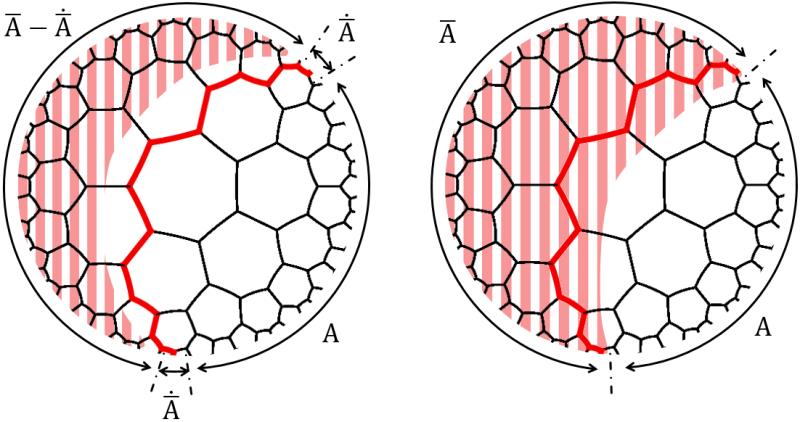}
    \caption{Boundary effect of greedy algorithm in the tensor network with $\ke{7,3}$ tiling and $S_c=\ke{\bbm 1\\2 \ebm, \bbm 1&0&1\\1&1&1 \ebm }$. Interval $\bar A$ is split into $\bar A - \dot{\bar A}$ and $\dot{\bar A}$. The CP tensor chain is denoted by the polyline in red.}
    \label{FigBoundaryEffect}
\end{figure}

\section{Quantum error correction (QEC)}
In this section we will concentrate on how to justify the
ability of QEC for a tensor network based on the properties of CP
tensor chain.

\subsection{Greedy algorithm and QEC}
The whole story of QEC on tensor networks is based on the Hilbert
space associated with uncontracted edges which introduce extra
degrees of freedom in the bulk and the corresponding code subspace
in the Hilbert space on the boundary. The correction to the code
subspace after erasing an interval $A$ on the boundary is
equivalent to pushing a bulk operator in the wedge of the interval
$\bar A$ to the interval $\bar A$ on the boundary
\cite{Almheiri:2014lwa}. Technically, following
\cite{Pastawski:2015qua}, the procedure of QEC involves in three
steps: (1) acting on an uncontracted index in the bulk with an
operator; (2) pushing the operator from this index in the bulk to
the indexes in the network; (3) pushing it to the boundary
further.

In our current work we will ignore the Hilbert space in the bulk
since our main purpose is to realize the algorithm of QEC on
tensor networks. We will skip step 1 and begin at step 2, by
directly inserting an operator into the contracted edges in the
interior. In the language of tensor chain, we can insert an
operator $O$ between tensor chain $M$ and $M'$ \be\label{insert}
\sum_B M^A_B M'^{B}_C \to \sum_{BD} M^A_B O^B_{D} M'^{D}_C. \ee No
matter which way one adopts to insert an operator into the
network, the latter processes of QEC are the same. Thanks to
tensor constraint, we can push an operator $O$ through a tensor
chain $M\in S_D$ and the output operator can be rewritten as $O'$,
namely \be\label{push} \sum_B O^A_BM^B_C \propto \sum_B M^A_B
O'^B_C, \quad O'^B_C= \sum_{AD} (M^B_A)^* O^A_D M^D_C. \ee

Specifically, without uncontracted indexes in the bulk, equation
(\ref{insert}) is just the reflection of step 2 above and equation
(\ref{push}) depicts step 3. After all, the terminology `QEC' in
this paper refers to the above interpretation.

All above operations can be demonstrated by diagrams. Taking
the tensor network with $\ke{7,3}$ tiling as an example. The
insertion of an operator is illustrated as
\be\label{73insert}
\includegraphics[height=40pt]{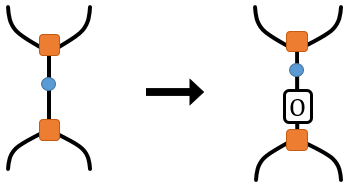}.
\ee
While employing tensor constraints (\ref{Fig7311}), the
process of pushing an operator through tensor chains can be
illustrated as
\bea
\includegraphics[height=40pt]{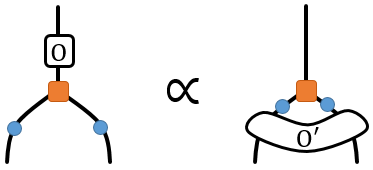},\quad\quad
\includegraphics[height=40pt]{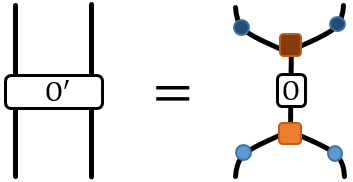}
\label{1to2}    \\
\includegraphics[height=40pt]{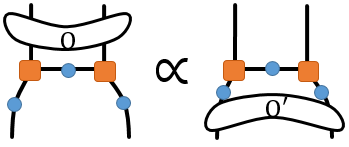},\quad\quad
\includegraphics[height=40pt]{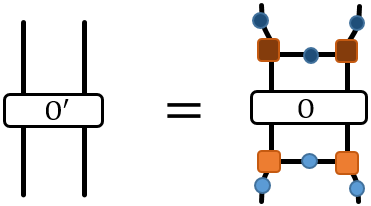}.
\label{11to11}
\eea

One can successively push operators through tensor chains in
$S_D$. Operators may be finally pushed to an interval on the
boundary or not, depending on the structure of tilings and tensor
constraints. Actually, tensor pushing is the reverse procedure of
greedy algorithm, where pushing an operator through tensor chain
$M\in S_D$ is reverse to the procedure of absorbing a tensor chain
$M$ into the shaded region of a tensor network.

\begin{definition}
We say that a tensor network enjoys QEC if any operator inserted
into the bulk of the network can be pushed to an interval on
the boundary.
\end{definition}

In Fig.\ref{Fig7311} and \ref{Fig73101}, an inserted operator $O$
is successfully pushed to $\bar A$. We remark that if the
operator is inserted into the region enclosed by the CP tensor
chain and interval $\bar A$, as illustrated in
Fig.\ref{Fig7311}, then it can be pushed to $\bar A$. In other
word, after erasing an interval $A$, most of those points in the
wedge of $\bar A$ can be recovered by QEC.

While if the operator is inserted into the region enclosed by the
CP tensor chain and the geodesic bounded by $\partial \bar A$, the
situation becomes subtle and it is not guaranteed that the
operator can always be pushed to $\bar A$. On one hand, if the
inserted operator is close to CP tensor chain, as illustrated
in Fig.\ref{Fig73101}, then it may still be pushed to a
subinterval in $\bar A$. While, now the bound of such
subinterval is approaching to $\partial \bar A$. In this
figure we notice that a lot of arrows, which denote the
trajectory of pushing the operator through, go across the geodesic
and then radiate out in a wide region, in contrast to the
process in Fig.\ref{Fig7311}. Such phenomenon indicates that the
information of an operator can only be recovered in a wide range
of the boundary, implying the function of QEC in
Fig.\ref{Fig73101} is weaker than that in the tensor network in
Fig.\ref{Fig7311}. It may be related to the approximate QEC
\cite{Schumacher:2002aqe,Almheiri:2014lwa}. On the other hand, if
the operator is rather close to the geodesic bounded by $\partial
\bar A$, it may not be pushed to $\bar A$ any more.

\subsection{CP curves and QEC}

The geometric description of CP tensor chain in Section
\ref{SectionGeometry} provides us a way to describe QEC over $H^2$
space as well. Given a subsystem $\bar A$ on the boundary, one may
ask whether an operator acting on point $x$ which locates inside
the wedge of $\bar A$ can be pushed to $\bar A$. For a simply
connected interval $\bar A$, we denote its two endpoints as $u$
and $v$, respectively. Then these two points together the point
$x$ can uniquely determine a hypercircle $H$ in $H^2$ space. The
sub network in region $\Omega$ enclosed by $H$ and $\bar A$
defines a mapping $\Phi$ from the Hilbert space associated with
the edges on $H$ to the Hilbert space associated with the edges on
$\bar A$. If and only if $\Phi$ is proportional to an isometry,
then the operator can be pushed to the boundary, thus implementing
QEC in an operator scenario. Otherwise the operator can not be
pushed into the region specified by $\bar A$, and the recovery of
such an operator will be prevented by erasing $A$ such that QEC
fails.

To check whether $\Phi$ is isometric or not, one needs to evaluate
the inner product $\Phi \Phi^\dagger$, which is directly
determined by the imposed tensor constraints $S_c$. During the
evaluation process, the most difficult step is to simplify the
contraction $M M^\dagger$, where $M$ is the boundary tensor chain
of $\Phi$ on $H$. So to figure out whether $\Phi$ is isometric or
not, our final task is to justify whether $M$ is protected or not
under tensor contractions which are subject to $S_c$.

Fortunately, our discussion in the section of critical
protection has provided an answer to this question. One can
justify this by comparing the geodesic curvature $\lambda$ of the
hypercircle $H$ with the the curvature of CP curve $\lambda_c$. If
$\lambda>\lambda_c$, then $M$ is unprotected; if
$\lambda<\lambda_c$, then $M$ is protected.

As a result, given a subsystem $A$ on the boundary, we find a geodesic connecting two end points of the subsystem $A$ and a CP curve between the geodesic and $\bar A$. Whether an
operator at $x$ can be pushed into $\bar A$ depends on the geodesic
curvature of the hypercircle passing through $x$. For those points
inside the region enclosed by boundary $\bar A$ and the CP curve, an operator can be recovered by QEC since the geodesic curvature of hypercircles is greater than $\lambda_c$; while for those points inside the region enclosed by the CP curve and the geodesic an operator can not be recovered by QEC since the
geodesic curvature is less than $\lambda_c$.

When a tiling of $H^2$ is specified, $\kappa_c$ is inversely
related to $\lambda_c$. Because $\kappa$ is more easily calculated
than $\lambda$, one can alternatively compare the average reduced
interior angle $\kappa$ of a hypercircle with the average reduced
interior angle of CP tensor chain $\kappa_c$.  For a given tiling,
recall that the tensor chain corresponding to a horocircle with
$\lambda_h=1$ has average reduced interior angle $\kappa_h$ in
(\ref{kappah}). Moreover, once $S_c$ is specified, then
$\lambda_c$ and $\kappa_c$ are determined as well. Whether a
tensor network enjoys QEC or not can be justified by comparing the
value of $\lambda_c$ with $\lambda_h$ or $\kappa_c$ with
$\kappa_h$, as described below.

If $\lambda_c\geq1$ or $\kappa_c\leq\kappa_h$, the CP curve is a
circle or a horocircle. The geodesic curvature of all hypercircles
must be less than $\lambda_c$, so no QEC can be implemented by
inserting an operator into any point in the bulk and such a tensor
network do not enjoy QEC. For instance, those tensor networks in
Fig. \ref{Fig731} and \ref{Fig451} belong to this class. It
matches the fact that greedy algorithm does not iterate in these
tensor networks.

Similarly, if $\lambda_c<1$ or $\kappa_c>\kappa_h$, the CP curve
is a hypercircle. An operator inserted into the region
enclosed by the CP tensor chain and $\bar A$ can be recovered by
QEC and such a tensor network enjoy QEC. For instance, all the
other tensor networks except Fig.\ref{Fig731} and \ref{Fig451} in
this paper belong to this class. Nevertheless, given an interval
$\bar A$, the region that can be recovered is different for
different constraints.

We summarize the above results about the function of QEC in a
network in Fig.\ref{FigLambdaKappa}.

\section{Entanglement spectrum (ES)}\label{SectionES}

Next we focus on the evaluation of entanglement spectrum for a
given tensor network, and argue that the flatness of ES can be
justified with the power of critical protection in general
cases.

\subsection{Reduced density matrix}

A tensor network $\Psi$ gives a state $\ket{\Psi}$ in the
Hilbert space defined on its uncontracted edges on the
boundary. Given an interval $A$ on the boundary, one can obtain
the reduced density matrix of $A$ by tracing out the
complementary region $\bar A$, namely
\be\label{rhoA0}
    \rho_A=\Tr_{\bar A}\ket{\Psi}\bra{\Psi}.
\ee

We are concerned with the issue whether the reduced density
matrix $\rho_A$ has a flat spectrum, which means that all the
non-zero eigenvalues of $\rho_A$ are identical. This statement
can be alternatively rephrased as the following propositions:
\begin{itemize}
    \item
All the orders of Renyi entropy
\be\label{RenyiEntropy}
S_{A,n}=-\frac1{n-1}\ln\frac{\Tr\kc{\rho_A^n}}{\kc{\Tr\rho_A}^n}
\ee
are identical, namely independent of $n$.
    \item
Reduced density matrix satisfies the relation
\be\label{flatES}
\rho_A^2\propto\rho_A.
\ee
\end{itemize}

As indicated at the beginning of this paper, $\rho_A$ of the
ground state of $CFT_2$ satisfies (\ref{CardyFormula}) and
exhibits a non-flat ES. The gravitational dual result of $AdS_3$
vacuum coincides with the above result as well. Now we would like
to check whether the ES of a tensor network state is flat or not.
For this purpose it is convenient to check the relation in
(\ref{flatES}) by manipulating tensor networks.

First we disclose the key role of CP tensor chain in identifying
the protected region in a tensor network. Recall the boundary
effect in greedy algorithm, we intend to separate the procedure of
taking trace on $\bar A$ into following two steps
 \be\label{rhoA}
\rho_A=\Tr_{\bar A}\ket{\Psi}\bra{\Psi}=\Tr_{\dot{\bar
A}}\Tr_{\bar A-\dot{\bar A}}\ket{\Psi}\bra{\Psi}. \ee

In the following, we will take tensor networks with $\ke{7,3}$
tiling as examples to demonstrate the evaluation of ES by
manipulating tensor networks. The results have previously been
collected in Table \ref{Table73}.

\subsubsection{Non-flat ES}

First of all, we point out the evaluation of ES depends on the
choice of the interval $A$ on the boundary. We will see that, for
constraints $S_c=\ke{\bbm 1\\2 \ebm, \bbm 1&0&1\\1&1&1 \ebm}$,
the ES of $\rho_A$ for any relatively large interval $A$ is
non-flat. We call the tensor network generally has a non-flat ES. The word ``generally'' means that the ES is always non-flat unless fine-tuning tensor $T$ and $E$. Throughout this paper when we say that a tensor network has a non-flat ES, we refer to the above statement.

Firstly, we trace out the degrees of freedom in $\bar A-\dot{\bar
A}$ to obtain the reduced density matrix. During this procedure
the structure of tensor network is simplified due to the tensor
constraints generated by $S_c$, see Fig.\ref{Fig73101GA} (a-c).
Specifically, those tensors in the wedge of $\bar A$ are
contracted into identity matrices, which is just the process of
the greedy algorithm starting from $\bar A - \dot{\bar A}$ in the
previous section. One can repeatedly consider this process until
it reaches a final stage that the network can not be simplified
any more, as shown in Fig.\ref{Fig73101GA}(c). The terminal
boundary forms a polyline in red as marked in
Fig.\ref{Fig73101GA}. As a matter of fact, such a polyline is
nothing but a CP tensor chain as we have defined in previous
section. From this figure we perceive that, before the trace of
uncontracted edges in $\dot{\bar A}$ is taken into account, the
operation induced by the greedy algorithm can not enter the region
enclosed by CP tensor chain and $A$, which is exactly the reason
why we call it critically protected tensor chain.

Now the next step is to evaluate $\Tr_{\dot{\bar A}}$, namely
tracing the degrees of freedom associated with uncontracted edges
on the boundary which are mostly neighboring to $A$. The process
is illustrated in Fig.\ref{Fig73101GA}(d)(e). We notice that the
network structure can be further simplified such that CP tensor
chains are absorbed into the shaded region at this step, which
is the boundary effect of greedy algorithm as we described in
previous section.

The boundary effect of greedy algorithm results from the
discretization of $H^2$ space, which may not appear in a
continuous geometry. In the context of tensor networks, however,
according to (\ref{rhoA}) the uncontracted edges in $\dot{\bar A}$
should be contracted. Sometime the contribution of this effect to
reduced density matrix becomes subtle, and we should cautiously
handle this effect. In other words, whether the ES is flat or not
can only be justified after the boundary effect is taken into
account.

Now with the reduced density matrix $\rho_A$ at hand, we can
compute $\rho_A^2$ by further contracting those uncontracted edges
in $A$. One can simplify $\rho_A^2$ by virtue of tensor
constraints, which is parallel to the above process on $\bar A$.
Boundary effect of greedy algorithm appears as well. Before the
boundary effect is taken into account, the greedy algorithm stops
at a CP tensor chain, which is the reflection of the CP tensor
chain appearing in the contraction on $\bar A$ about the geodesic
bounded by $\partial A$. Thus, the simplification of $\rho_A^2$ is
equivalent to applying the greedy algorithm to $A$ and $\bar A$
successively, as shown in Fig.\ref{Fig73101}.

The calculation of $\rho_A^2$ is demonstrated in
Fig.\ref{Fig73101ES}. Obviously from this diagram we find that
$\rho_A^2$ can not be simplified to be proportional to $\rho_A$
such that equation (\ref{flatES}) is not satisfied. Equivalently,
from Fig.\ref{Fig73101}, we notice that some tensors are not
absorbed by the greedy algorithm starting from $A$ and $\bar A$,
thus (\ref{flatES}) is not satisfied.

Given the above constraints, we point out that as long as $A$
is large enough, $\rho_A$ always gives rise to a non-flat ES,
independent of the choice of $A$. This assertion will be proved in
Subsection \ref{SubsectionCPES}. Right now we just conclude that
such a tensor network has a non-flat ES, in agreement with what is
found by explicitly computing the eigenvalues of the reduced
density matrix in \cite{Evenbly:2017htn}.

We remark that for the above constraints the corresponding CP
curve is a hypercircle. When the CP curve is a horocircle, it
approaches the boundary with single intersecting point. Or when
the CP curve is a circle, it does not reach the boundary. For both
cases one need not consider the boundary effect separately, and
the ES is usually non-flat for CP circles since the region
enclosed by the circle is protected.

\begin{figure}
    \newcommand{\length}{120pt}
    \newcommand{\distance}{0pt}
    \centering
    \subfigure[]{\label{Fig73101GA1}
        \includegraphics[height=\length]{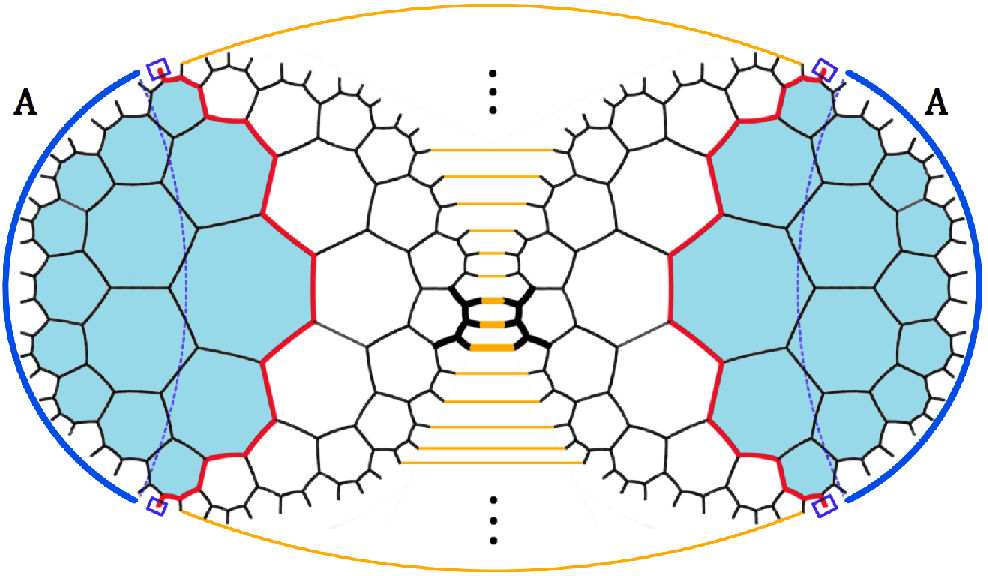}}
    \hspace{\distance}
    \subfigure[]{\label{Fig73101GA2}
        \includegraphics[height=\length]{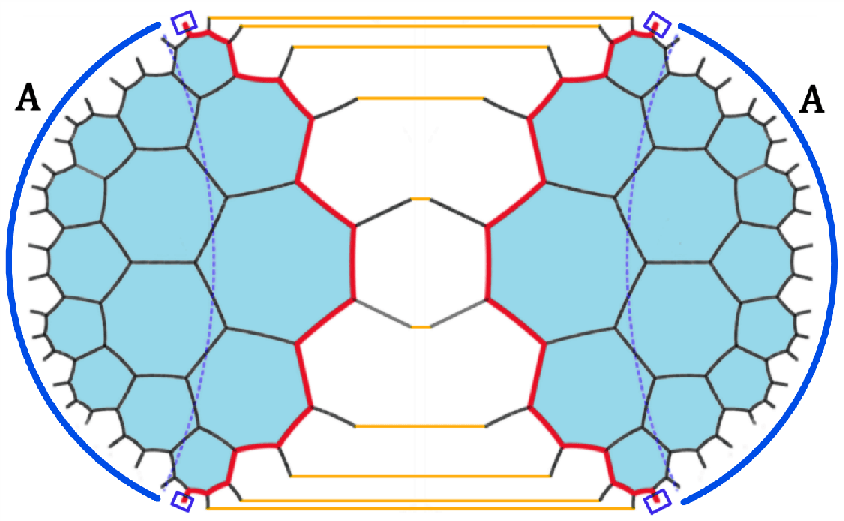}}
    \hspace{\distance}
    \subfigure[]{\label{Fig73101GA3}
        \includegraphics[height=\length]{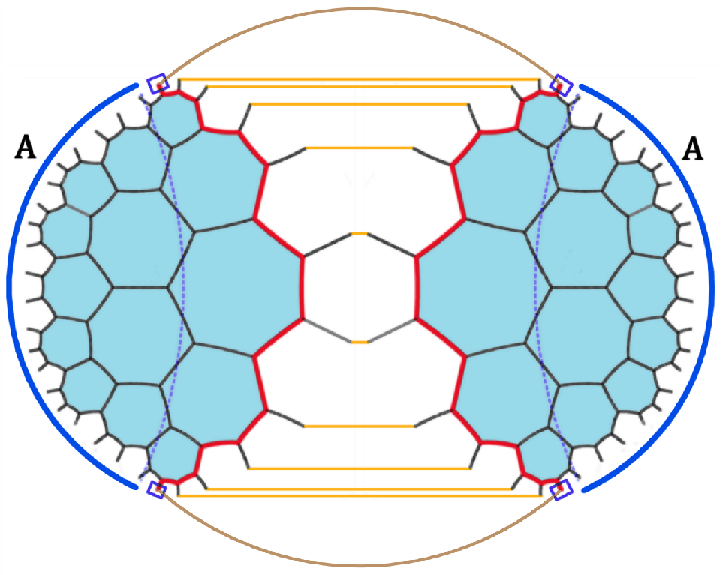}}
    \hspace{\distance}
    \subfigure[]{\label{Fig73101GA4}
        \includegraphics[height=\length]{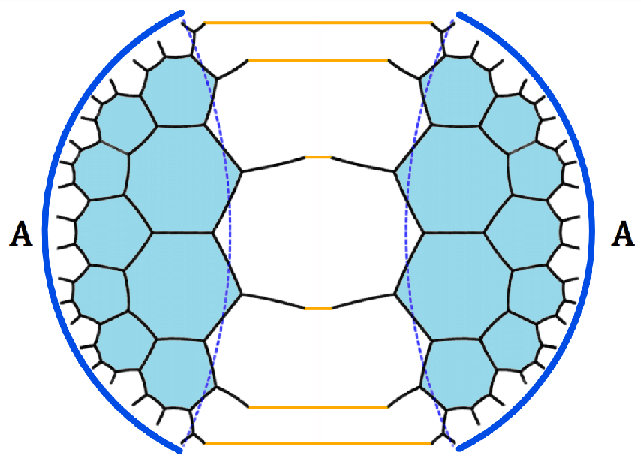}}
    \caption{The evaluation of $\rho_A$. The interval $A$ is marked by solid lines in blue. The geodesic bounded by $\partial A$ is plotted as a dashed line in blue. The CP tensor chain is marked by a solid line in red, and its two endpoints on the boundary are marked by two
        blue rectangles. (a) The contractions of indexes in $\bar A$
        are illustrated by orange lines. (b) Before the endpoints of
        CP tensor chain are contracted, CP tensor chains are not absorbed. (c)
        The contractions of the endpoints of CP tensor chain on the boundary is marked by brown lines. (d) CP tensor chains are absorbed at the final stage.}\label{Fig73101GA}
\end{figure}

\begin{figure}
    \includegraphics[height=100pt]{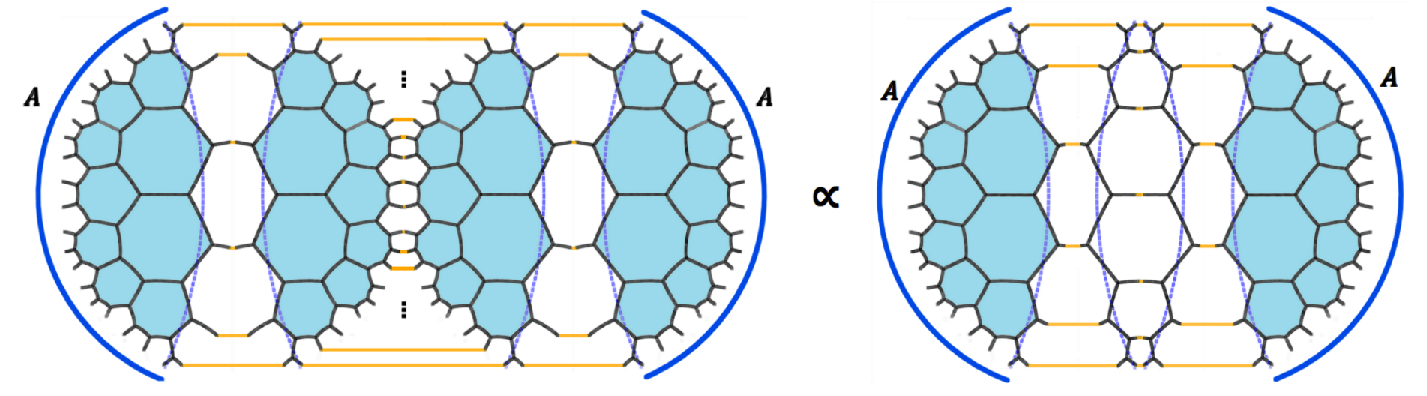}
    \caption{The diagram for the evaluation of $\rho_A^2$. (\ref{flatES}) is not satisfied.}\label{Fig73101ES}
\end{figure}

\subsubsection{Flat ES}

We have pointed out that one equivalent way to check the relation
in (\ref{flatES}) is to consider the greedy algorithm starting
from $A$ and from $\bar A$ successively. Let us take Fig.\ref{Fig7311}
as an example, where $S_c=\ke{\bbm 1\\2 \ebm, \bbm 1&1\\1&1
\ebm}$. We observe that the union of these two shaded regions
covers the whole tensor network, implying that all the tensors are
absorbed by the greedy algorithm. Therefore, (\ref{flatES}) is
satisfied and ES has to be flat. We call the tensor network has a
flat ES.

\subsubsection{Mixed ES}

From Fig.\ref{Fig7310101}, we know that, for $S_c=\ke{\bbm 1\\2
\ebm, \bbm 1&0&1&0&1\\1&1&0&1&1 \\ \ebm}$, the ES of $\rho_A$ can
be flat or non-flat, depending on the choice of $A$. We call
the tensor network has a mixed ES.

\subsection{Geometric point of view on ES}\label{SubSectionESCPCurve}

In the tensor network realization of AdS/CFT, a tensor network is
usually treated as the wavefunction $\Psi$ of the ground state.
Alternatively, when an interval $A$ on the boundary is given, we
notice that $\Psi$ can be understood as a mapping from the Hilbert
space on $A$ to the Hilbert space on $\bar A$. So $\Psi$ can be
regarded as a matrix $\Psi^A_{\bar A}$, where two indexes $A$ and
$\bar A$ represent the degrees of freedom on two subsystems $A$
and $\bar A$, respectively.

The notion of critical protection provides us an efficient way to
visualize the simplification of tensor networks under the tensor
contractions which are subject to tensor constraints. To make this
process more transparent, we firstly intend to decompose a network
into some sub networks. As seen in previous subsections, when the
indexes on $A$ or $\bar A$ are contracted, the greedy algorithm
will stop at some nodes. Let us firstly neglect the boundary
effect, then the skeletons of connecting those nodes will form two
CP tensor chains, which are neighboring to the geodesic bounded
by $\partial A$.

First of all, we point out that when $\lambda_c\geq1$ or
$\kappa_c\leq\kappa_h$, all the hypercircles are protected since
their geodesic curvatures are less than $\lambda_c$. So a non-flat
ES is guaranteed. In the following, we will focus on the
non-trivial case, $\lambda_c<1$ or $\kappa_c>\kappa_h$, where CP
curves are hypercircles.

We denote the CP curve close to $A$ or $\bar A$ as $H_A$ or
$H_{\bar A}$, respectively. The region enclosed by two CP curves
is called CP region $\Omega_c$. Those tensors in the CP region
form a sub tensor network $\Psi_c$, which is a mapping from $H_A$
to $H_{\bar A}$ and is denoted as $(\Psi_c){}^{H_A}_{H_{\bar A}}$.
Similarly, $H_A$ and $A$ enclose a sub tensor network $\Phi_A$,
which defines a mapping $(\Phi_A){}^A_{H_A}$; $H_{\bar A}$ and
$\bar A$ enclose a sub tensor network $\Phi_{\bar A}$, which
defines a mapping $(\Phi_{\bar A}){}^{\bar A}_{H_{\bar A}}$. Since
the tensors outside $H_A$ are not protected under the contraction
of $A$, the mapping $(\Phi_A){}^A_{H_A}$ from $H_A$ to $A$ should
be proportional to an isometry. Similarly, the mapping
$(\Phi_{\bar A}){}^{\bar A}_{H_{\bar A}}$ from $H_{\bar A}$ to
$\bar A$ is proportional to an isometry as well. It is denoted as
\be\label{PsiIso}
    \Phi_A^\dagger\Phi_A\propto I, \quad \Phi_{\bar A}^\dagger\Phi_{\bar A}\propto I',
\ee
where the indexes are abbreviated and $I$ ($I'$) is identity matrix on $A$ ($\bar A$).

Finally, the full matrix $\Psi^A_{\bar A}$ can be represented as the product of matrices
\be
    \Psi=\Phi_A\Psi_c\Phi_{\bar A}^\dagger.
\ee
Then it is easy to see
\bea
    \rho_A&=&\Psi\Psi^\dagger= \Phi_A\Psi_c\Phi_{\bar A}^\dagger\Phi_{\bar A}\Psi_c^\dagger\Phi_A^\dagger \propto \Phi_A\Psi_c\Psi_c^\dagger\Phi_A^\dagger , \\
    \rho_A^2&=&\Psi\Psi^\dagger\Psi\Psi^\dagger\propto \Phi_A\Psi_c\Psi_c^\dagger\Psi_c\Psi_c^\dagger \Phi_A^\dagger ,
\eea
where (\ref{PsiIso}) is used. A flat ES in (\ref{flatES}) means that
\be\label{flatESCP}
    \Psi_c\Psi_c^\dagger\Psi_c\Psi_c^\dagger    \propto     \Psi_c\Psi_c^\dagger .
\ee
We present a schematic diagram to demonstrate the decomposition of tensor network state as well as the calculation of $\rho_A$ and $\rho_A^2$ in Fig.\ref{FigrhoA2}. The condition for flat ES (\ref{flatESCP}) is illustrated in Fig.\ref{FigFES}.
This figure reveals that whether the ES is flat or not depends on the thickness of the CP region where the thickness of the CP region is defined by the distance between the two CP curves.

Equivalently, from above derivation we notice that the flatness of
ES may be checked by observing the result of the greedy algorithm
starting from $A$ and from $\bar A$ successively, which
figures out the region of isometry between tensor chains in
(\ref{PsiIso}). If all the tensors are absorbed by the greedy
algorithm, then (\ref{flatESCP}) is valid and the ES is flat, and
vice versa.

Once the boundary effect is considered, as we showed in previous
section, CP tensor chains on the boundary of the CP region
$\Omega_c$ are not protected any more under the greedy algorithm.
Nevertheless, only a finite thickness of the CP region will be
absorbed. In Fig.\ref{Fig73101ES}, since those tensors close to
the geodesic are not absorbed, the tensor network has a non-flat
ES.

\begin{figure}
\flushleft
  \includegraphics[height=70pt]{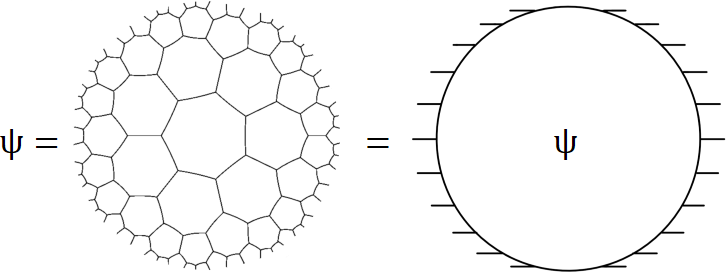}
  \includegraphics[height=70pt]{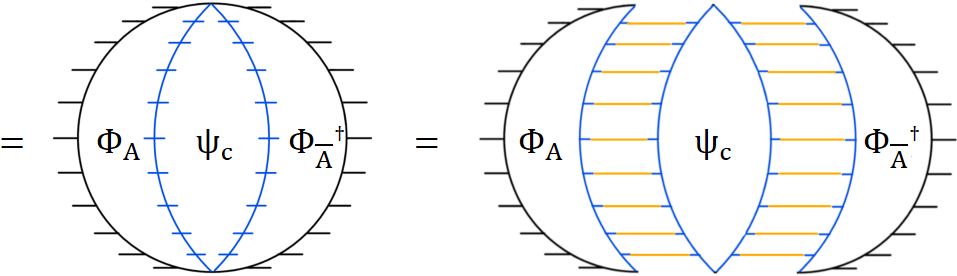}\\
  \includegraphics[height=70pt]{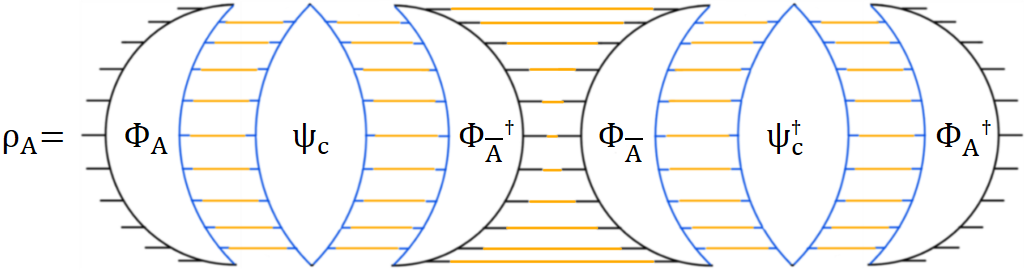}
  \includegraphics[height=70pt]{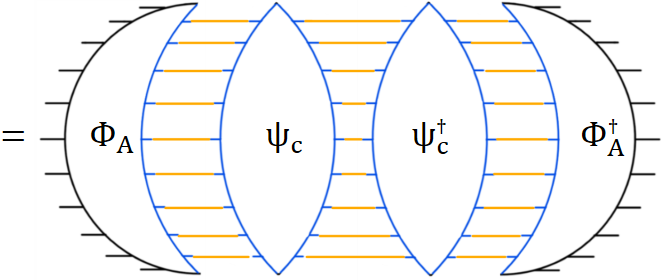}\\
  \includegraphics[height=68pt]{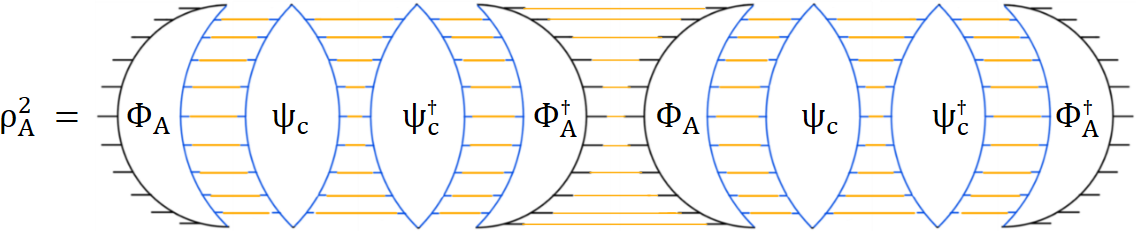}\\ \quad\,
  \includegraphics[height=70pt]{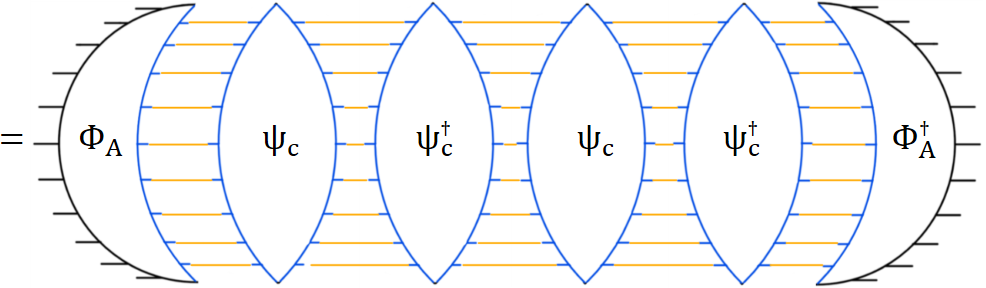}
  \caption{Calculating $\rho_A$ and $\rho_A^2$.}\label{FigrhoA2}
\end{figure}

\begin{figure}
  \centering
  \includegraphics[height=80pt]{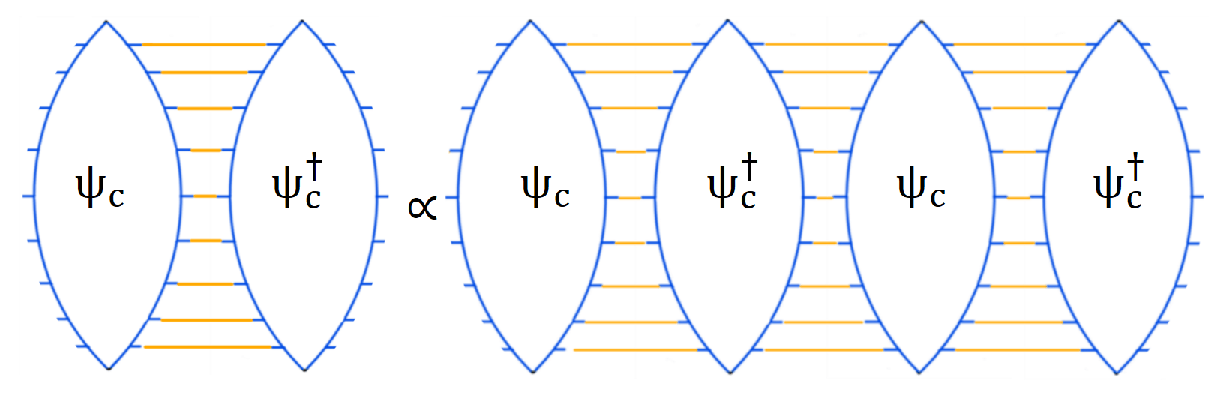}
  \caption{The condition of flat ES.}\label{FigFES}
\end{figure}

The experience one has gained from this picture is that the
thickness of CP region determines whether the ES is flat or not.
Without the boundary effect, the boundary of CP region is composed
of two CP curves, so its thickness is $2d_c$, where
$d_c=\text{arctanh}(\lambda_c)$ is the geodesic distance between
the CP curve (hypercircle) and its axis. Due to the boundary
effect, the CP tensor chain will not be protected any more and the
outer layer of the original CP region will be absorbed by greedy
algorithm. The thickness of such layers is proximately given by
$P$, which is the length of an edge (\ref{lengthp}). So the
thickness of CP region decrease to $2d_c-P$. Since $\Psi_c$ is
protected, (\ref{flatESCP}) is true only if the thickness of the
CP region vanishes.

The evaluation of the geodesic curvature $\lambda$ in a
general tensor network is difficult, which prevents us from
justifying the flatness of ES with CP curvature $\lambda_c$.
Alternatively, this job can be done by calculating CP reduced
interior angle $\kappa_c$, as described in the next subsection.

\subsection{The reduced interior angle of CP tensor chain and ES}\label{SubsectionCPES}

In previous subsections we have shown the relation between the
flatness of ES and the structure of CP tensor chains under the
action of greedy algorithm. In this subsection we show that the
flatness of ES can be justified based on the value of $\kappa_c$.
Specifically, we find that the bigger is $\kappa_c$, the stronger
is the ability of QEC while ES more easily becomes flat, as shown
in Fig.\ref{FigLambdaKappa}. In this figure we further introduce
three quantities which are determined by the $\{b,a\}$ tiling: \be
\kappa_h=\frac a2 - \frac{a-2}2\sqrt{\frac{ab-2a-2b}{ab-2a-2b+4}},
\quad \kappa_1=\frac{b}{b-2},\quad \kappa_0=\frac a2. \ee Because
of (\ref{abconstraint}), the relation $\kappa_h<\kappa_1<\kappa_0$
always holds. If $\kappa_c \in (1, \kappa_h)$, it turns out the
network is not able to implement QEC but has non-flat ES, as
indicated in Fig.\ref{Fig731} and \ref{Fig451}. If $\kappa_c \in
(\kappa_h,\kappa_1)$, then the network can implement QEC and has
non-flat ES, as shown in Fig.\ref{Fig73101} and \ref{Fig45111}. If
$\kappa_c \in [\kappa_1,\kappa_0)$, the ability of QEC will become
stronger but the ES will become ``mixed"
, as shown in Fig.\ref{Fig7310101} and \ref{Fig452}. Finally,
if $\kappa_c = \kappa_0$, the quality of QEC
becomes better but the ES has to be flat, which is exactly the
property of perfect tensors, as shown in Fig.\ref{Fig7311} and \ref{Fig4522}.

Correspondingly, we may propose a geometric quantity in $H^2$
space which plays a similar role as $\kappa_c$ in tensor network.
This quantity is the geodesic curvature $\lambda_c$ of CP curve.
Given a tiling, $\lambda_c$ can be calculated by using $\kappa_c$.
A schematic relation between $\lambda_c$ and QEC and ES is also
illustrated in Fig.\ref{FigLambdaKappa}. While, we do not have
general expressions for the bounds $\lambda_0$ and $\lambda_1$ so
far, which corresponds to $\kappa_0$ and $\kappa_1$, respectively.
\footnote{The main difficulty probably results from the
specification of an unique CP curve corresponding to a CP tensor
chain. Tensor chains are discrete, while curves are continuous. To
assign an unique curve, we have to impose more conditions such as
requiring that the CP curve has the maximal value of geodesic
curvature, which is difficult to handle in practice for a general
tiling.}

Until now, we have constructed a general framework for tensor
networks with tensor constraints, and developed a generalized
greedy algorithm to describe the property of critical protection.
In the remainder of this paper, we will provide detailed
proofs for the quantitative relation between CP tensor chain and
QEC as well as ES, and finally complete the classification of
tensor networks as illustrated in Fig.\ref{FigLambdaKappa}.

\begin{figure}
    \includegraphics[width=350pt]{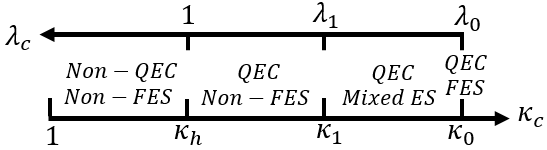}
    \caption{$\lambda_c$ and $\kappa_c$ can be used to classify the property of QEC and ES in tensor network.
    }\label{FigLambdaKappa}
\end{figure}

Those statements can be rephrased into following propositions:
\begin{itemize}
    \item If $\kappa_c=\frac a2$, then $\rho_A$ has flat ES for any choice of $A$;
    \item If $\kappa_c < \frac{a}{2}$, then $\rho_A$ may have non-flat ES for some choices of $A$;
    \item If $\kappa_c\geq\frac{b}{b-2}$, then $\rho_A$ may have flat ES for some choices of $A$;
    \item If $\kappa_c<\frac{b}{b-2}$, then $\rho_A$ has non-flat ES for any choice of {\it large}
    $A$.
\end{itemize}
Now we intend to prove these propositions separately.

\subsubsection{$\kappa_c=\frac a2\Rightarrow$ flat ES}

Based on the discussion in Subsection \ref{subsectionGA}, we
will prove the flatness of ES by showing that any directed cut
appearing in the process of greedy algorithm is unprotected such
that the greedy algorithm will not stop until all the tensors are
absorbed.

To prove a directed cut in the process of greedy algorithm is
unprotected, one need to find out an unprotected tensor chain
connected to the cut. Recall that those directed cuts in greedy
algorithm may have many disconnected components. We firstly
prove a lemma for a cut containing the structure of twigs or
loops, which will greatly simplify the rest of proofs.

\begin{lemma}\label{LemmaLoopCut}
Given a tensor network with $\ke{b,a}$ tiling and $\kappa_c\geq
\frac b{b-2}$, if a directed cut $C$ contains a connected
component whose corresponding sequence of nodes has a form as
$[\cdots,N_0,N_1,\cdots,N_k,\cdots]$ with $N_0=N_k$, or
$(\cdots,N_0,N_1,\cdots,N_k,\cdots)$ with $N_0=N_k$, or
$(N_1,N_2,\cdots,N_k)$, then the cut $C$ is unprotected.
\end{lemma}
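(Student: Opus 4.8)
The plan is to reduce all three listed forms to a single statement about a simple closed loop in the tiling. In the first two forms the hypothesis directly supplies a repeated node in the component's sequence of nodes; in the third, the closed curve $(N_1,\dots,N_k)$ is itself such a loop. In every case, passing to a pair of repeated nodes with minimal separation extracts a \emph{simple} closed polyline $M'$, say with $k'$ nodes, whose nodes are pairwise distinct. Because distinct nodes joined by edges of the $\{b,a\}$ tiling cannot create a self-crossing, $M'$ is an embedded closed curve; it is a closed tensor chain $M'=\bpm m_1'&\cdots&m_{k'}'\\ n_1'&\cdots&n_{k'}'\epm$ contained in the given component, so it lies on the left of $C$, its lower-index edges are cut by $C$, and its upper-index edges are not. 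Hence $M'$ is connected to $C$, and it suffices to show that $M'$ is unprotected. Since $H^2$ is simply connected, $M'$ bounds a disk region $R$, which is a union of $F\geq1$ elementary $b$-gons; for the cuts produced by the greedy algorithm $R$ is the not-yet-absorbed region, so it lies on the left of $M'$, which is exactly the side whose faces are counted by the reduced interior angles $s_i$ of $M'$.

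Next I would carry out a Gauss--Bonnet / Euler count on $R$. Let $V_{int}$ and $E_{int}$ be the vertices and edges of $R$ not lying on its boundary polyline. Euler's formula gives $V_{int}-E_{int}+F=1$, and counting face-edge and face-vertex incidences (each face is a $b$-gon; each interior vertex meets $a$ faces; each boundary node $N_i$ meets $s_i$ faces; the boundary is a simple cycle with $k'$ edges) gives $bF=2E_{int}+k'$ and $bF=aV_{int}+\sum_{i=1}^{k'}s_i$. Eliminating $V_{int}$ and $E_{int}$ yields
\be\label{proposalEuler}
\sum_{i=1}^{k'}s_i=\frac{ak'}{2}-a-\frac{(ab-2a-2b)F}{2},
\ee
with $ab-2a-2b>0$ by (\ref{abconstraint}); this is the discrete analogue of the relation behind (\ref{kappah}).

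To finish: connectedness of $R$ forces $E_{int}\geq F-1$, hence $k'=bF-2E_{int}\leq(b-2)F+2$. Combining this with the hypothesis $\kappa_c\geq\frac{b}{b-2}$ and the identity $\frac{ab-2a-2b}{b-2}=(a-2)-\frac{4}{b-2}<a-1$, the right-hand side of (\ref{proposalEuler}) is at most $k'\kappa_c-1$, so $\sum_{i=1}^{k'}s_i\leq k'\kappa_c-1$. Taking $h=k'$ in the closed-chain criterion of Theorem \ref{theoremkappac} (recalling that $a-1-n_i=s_i$ for a closed chain) then shows that $M'$ is unprotected, hence the cut $C$ is unprotected. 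I expect the Euler count to be routine; the step that needs real care is the first one, namely extracting the minimal simple loop and, especially, checking the orientation, since (\ref{proposalEuler}) holds only when the bounded disk $R$, and not its unbounded complement, sits on the left of $M'$; for the self-touching components this reduces to tracking which side of the loop carries the absorbed sub-network $\Phi$.
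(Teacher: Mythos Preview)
Your Euler/Gauss--Bonnet count on the bounded disk is essentially the paper's argument, and your identity reduces to the paper's inequality $\sum_i x_i\le\frac{b}{b-2}(k-2)$ once you use $F\ge\frac{k'-2}{b-2}$ (note that your bound $E_{int}\ge F-1$ is just $V_{int}\ge0$ in disguise, via $V_{int}-E_{int}+F=1$; no connectedness-of-the-dual argument is needed). The gap is in the reduction step, not in the count.

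For the first two forms, the closed chain $M'$ you extract is \emph{not} connected to $C$. Take the minimal repetition $N_0=N_k$ and form $M'$ on the distinct nodes $N_0,N_1,\dots,N_{k-1}$. At the repeated node $N_0$, the edges to $N_{-1}$ and to $N_{k+1}$ lie outside the enclosed region $R$, so they are lower-index edges of the closed chain $M'$; but neither is cut by $C$, since both neighbours sit on the left of $C$. Hence ``its lower-index edges are cut by $C$'' fails at $N_0$, and unprotectedness of $M'$ does not imply unprotectedness of $C$. The paper's fix is to drop the repeated node and work with the \emph{open} chain on $[N_1,\dots,N_{k-1}]$: the edges $N_1N_0$ and $N_{k-1}N_0$ then become upper-index edges at the two ends, and every lower-index edge really is cut by $C$. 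From $\sum_{i=1}^{k}x_i\le\frac{b}{b-2}(k-2)$ and $x_k\ge1$ one gets $\sum_{i=1}^{k-1}(a-1-n_i)=\sum_{i=1}^{k-1}x_i\le\kappa_c(k-1)-1$, and the open-chain criterion of Theorem~\ref{theoremkappac} finishes.

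You also need to handle the degenerate loops separately. If the minimal repetition is at distance two (a twig, $N_0=N_2$), the ``simple closed polyline'' collapses to a single edge traversed twice and encloses no face, so your count does not apply; here the step chain $M_s=\bbm 1\\a-1\ebm$ at the tip $N_1$ is connected to $C$ and already unprotected. Similarly a one-node closed component needs its own one-line check.
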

\begin{proof}
Denote the sequence of nodes $[N_1,\cdots,N_k]$ as $L_N$. We
assume that all these nodes in $L_N$ are distinct,
otherwise we just replace $N_1$ and $N_k$ by any two nodes which
are identical and the following proof is still valid.

When $k=1$, it is only possible that the connected component is a
single node $N_1$, then the tensor chain $\bbm 0\\a \ebm$ on
$N_1$ is connected to $C$. Since $\bbm 0\\a \ebm\in S_D$, $C$ is
unprotected.

When $k=2$, the shape of $L_N$ is a twig and $N_1$ is the endpoint
of the twig. Step tensor chain $\bbm 1\\a-1 \ebm$ on $N_1$ is
connected to $C$, so $C$ is unprotected. For instance, in
Fig.\ref{FigCut}, $N_{27}=N_{29}$ and the sequence
$L_N=[N_{28},N_{29}]$ forms a twig, $N_{28}$ is the endpoint and
$M_s$ at $N_{28}$ is connected to the cut.

When $k\geq3$, those edges between $N_i$ and $N_{i+1}$, and the
edge between $N_1$ and $N_k$ in $L_N$ form a closed polyline,
{\it e.g.,} see the sequence $[N_{10},N_{11},\cdots,N_{16}]$ in
Fig.\ref{FigCut}. We define the region enclosed by the polyline as
$Y$, which consists of $F$ elementary polygons, $E$ edges
 and $V$ nodes (vertices)
which satisfy Euler's formula
    \be\label{EulerFormula}
    F-E+V=1.
    \ee
    Let the reduced interior angle of $Y$ at $N_i$ be $x_i$ for $i\in\ke{1,2,\cdots,k}$. We have
    \bea
    \sum_{i=1}^{k} x_i +k +a(V-k)=2E,\\
    k\leq V, \\
    bF=2E-k.
    \eea
    From above four formulas, we have
    \be\label{AnglePolygon}
    \sum_{i=1}^k x_i \leq \frac b{b-2} (k-2).
    \ee
    Because $x_k\geq1$, we further have
    \be
    \sum_{i=1}^{k-1} x_i \leq \frac b{b-2} (k-2) - 1.
    \ee
Those nodes $[N_1,N_2,\cdots,N_{k-1}]$ form a tensor chain $M=\bbm
*&*&\cdots&*\\n_1&n_2&\cdots&n_{k-1} \ebm$ connected to $C$, where
$n_i=a-1-x_i$ for $i\in\ke{1,2,\cdots,k-1}$. Recall that
$\kappa_c\geq\frac b{b-2}$. Finally,
    \be
    \sum_{i=1}^{k-1} (a-1-n_i) \leq \kappa_c (k-2) - 1<\kappa_c(k-1)-1.
    \ee
From Theorem \ref{theoremkappac}, $M$ is unprotected and thus
$C$ is unprotected.

In conclusion, when $\kappa_c\geq \frac b{b-2}$, any cut $C$
containing twigs or loops must be unprotected.
\end{proof}

Obviously, $\kappa_c=\frac a2> \frac{b}{b-2}$. Lemma
\ref{LemmaLoopCut} is applicable to this case and those branches
forming twigs or loops in a cut will be absorbed by the greedy
algorithm. Taking the cut in Fig.\ref{FigCut} as an example,
we claim that those nodes in $\ke{N_8,N_9,\cdots,N_{17}},
\ke{N_{27},N_{28},N_{29}}$, and $\ke{N_1',N_2',\cdots,N_7'}$ will
be absorbed.

As a result, now we can focus on the case that the cut $C$ is
single connected and bounded by $\partial A$, which is denoted
as $C=[N_1,N_2,\cdots,N_l]$. Furthermore, these nodes in $C$ are
distinct.

With any choice of single interval $A$ on the boundary of a tensor
network $\Psi$, we apply the greedy algorithm starting from $A$
and from $\bar A$ simultaneously. So two cuts, $C_A$ and $C_{\bar
A}$, appear in $\Psi$ at the same time. We will prove with
mathematical induction that when $\kappa_c=\frac a 2$, either of
these two cuts is unprotected until all the tensors are absorbed.

\begin{figure}
    \centering
    \includegraphics[height=300pt]{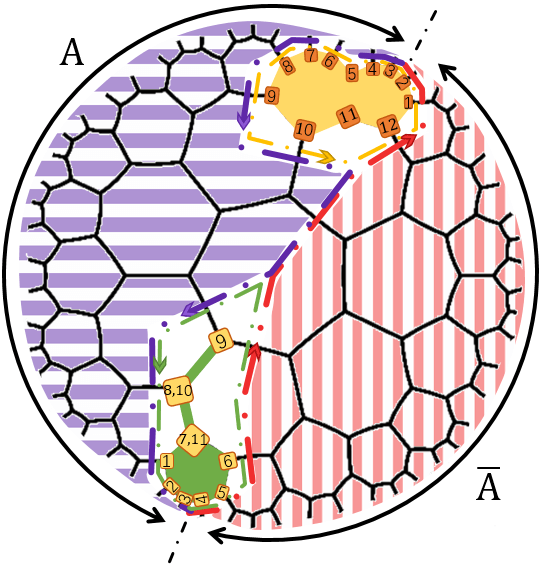}
\caption{The ``directed sum'' of two directed cuts $C_A$ (purple)
and $C_{\bar A}$ (red) consists of two directed closed curves
$C_I$ (green) and $C_I'$ (yellow). The region $I$ corresponding to
$C_I$ is filled in green.}\label{FigDirectedSum}
\end{figure}

Now we consider the configuration of $C_A$ and $C_{\bar A}$. Both
of them are connected to $\partial A$. Besides, they may overlap
at some place, where their directions are opposite, as illustrated
in Fig.\ref{FigDirectedSum}. Then, we define the ``directed sum''
of $C_A$ and $C_{\bar A}$ as the union of them but excluding their
overlapped parts. The directed sum consists of one or more closed
curves, as shown in Fig.\ref{FigDirectedSum}. Select one of them
and denote it as $C_I$, which is a directed cut as well. Set the
sequence of nodes corresponding to $C_I$ to be
$(N_1,N_2,\cdots,N_k)$. We connect these nodes
$(N_1,N_2,\cdots,N_k)$ with edges in order and enclose a region
$I$, which is a union of elementary polygons and edges. At node
$N_i$, let the reduced outer angles of $I$ be $y_i$ and let the
number of edges cut by $C_I$ be $n_i$. Obviously, $y_i=n_i+1$.
Gauss-Bonnet theorem tells that \be\label{GBT1} \sum_{i=1}^k
(\frac{2\pi}{a}y_i-\pi)-2\pi= \text{Area}(H)\geq0, \ee then
\be\label{GBT2} \sum_{i=1}^k n_i\geq \frac{a-2}2 k+a. \ee

Obviously, those edges cut by $C$ are divided into two parts, one
part is cut by $C_A$ and the other is cut by $C_{\bar A}$.
Without loss of generality, we suppose that $C_A$ runs from $N_1$
to $N_{u+1}$. Moreover, $l_1$ edges of $N_1$ are cut by $C_A$ and
$\bar l_{k+1}$ edges cut by $C_{\bar A}$. While for $N_{u+1}$,
$l_{u+1}$ edges are cut by $C_A$ and $\bar l_{u+1}$ edges are cut
by $C_{\bar A}$. Obviously, $l_1+\bar l_{k+1}=n_1$ and
$l_{u+1}+\bar l_{u+1}=n_{u+1}$. We further define that $l_i=n_i$
for $i=2,3,\cdots,u$ and $\bar l_i=n_i$ for $i=u+2,u+3,\cdots,k$.
Then we know that tensor chain $M_A=\bbm *&*&\cdots&*\\
l_1&l_2&\cdots&l_{u+1} \ebm$ is connected to $C_A$ and tensor
chain $M_{\bar A}=\bbm *&*&\cdots&*\\ \bar l_{u+1}&\bar
l_{u+2}&\cdots&\bar l_{k} \ebm$ is connected to $C_{\bar A}$. From
(\ref{GBT2}), \be \sum_{i=1}^{u+1} l_i+\sum_{i=u+1}^{k+1} \bar
l_i=\sum_{i=1}^k n_i \geq \frac{a-2}2 k+a = (u+1)\frac{a-2}{2}+1 +
(k-u+1)\frac{a-2}{2}+1. \ee So, \be \sum_{i=1}^{u+1} l_i\geq
(u+1)\frac{a-2}{2}+1 \quad\text{or}\quad \sum_{i=u+1}^{k+1} \bar
l_i\geq (k-u+1)\frac{a-2}{2}+1, \ee {\it i.e.} \be
\sum_{i=1}^{u+1} (a-1-l_i)\leq (u+1)\kappa_c - 1
\quad\text{or}\quad \sum_{i=u+1}^{k+1} (a-1-\bar l_i) \leq
(k-u+1)\kappa_c - 1. \ee From Theorem \ref{theoremkappac}, either
of $M_A$ or $M_{\bar A}$ is unprotected, so either of $C_A$ or
$C_{\bar A}$ is unprotected. Thus the greedy algorithm will keep
going on until $\text{Area}(H)=0$ at least, which means  two cuts
$C_A$ and $C_{\bar A}$ are overlapped such that all tensors are
absorbed. Then the ES is flat.

\subsubsection{$\kappa_c < \frac{a}{2}\Rightarrow\exists$ non-flat ES}

Next we intend to prove when $\kappa_c < \frac{a}{2}$, there
exists non-flat ES for some choices of single interval $A$ on the
boundary.

Thanks to Theorem \ref{theoremkappac}, when $a$ is odd, we can
specifically choose $A$ on the boundary such that the structure as
shown in Fig.\ref{MES12} is protected under the action of the
greedy algorithm starting from either side. A tensor network
with a special choice for $A$ is shown in Fig.\ref{MES11},
where the structures enclosed by dashed red circles are
protected and prevent the ES from being flat.

When $a$ is even, similarly one can choose $A$ appropriately
such that the structure as shown in Fig.\ref{MES13} is protected.
Then the ES is non-flat.

We remark that such kind of protected structures is common in
    tensor networks, especially when the network is large enough. So
we intend to argue that when $\kappa_c < \frac{a}{2}$, most
    choices of interval $A$ will lead to non-flat ES.

\begin{figure}
    \newcommand{\length}{120pt}
    \newcommand{\distance}{0pt}
    \centering
    \subfigure[]{\label{MES12}
        \includegraphics[height=\length]{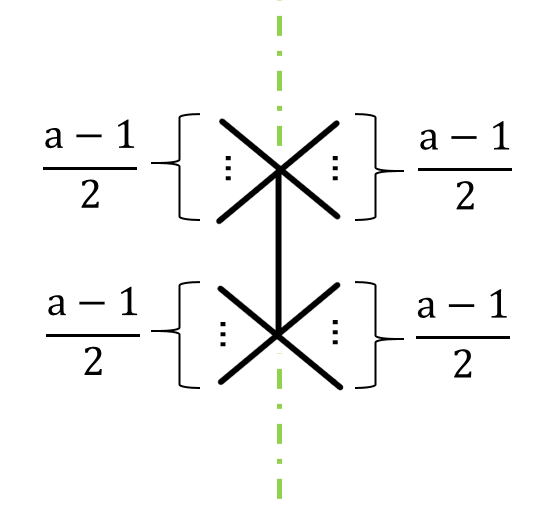}}
    \hspace{\distance}
    \subfigure[]{\label{MES13}
        \includegraphics[height=\length]{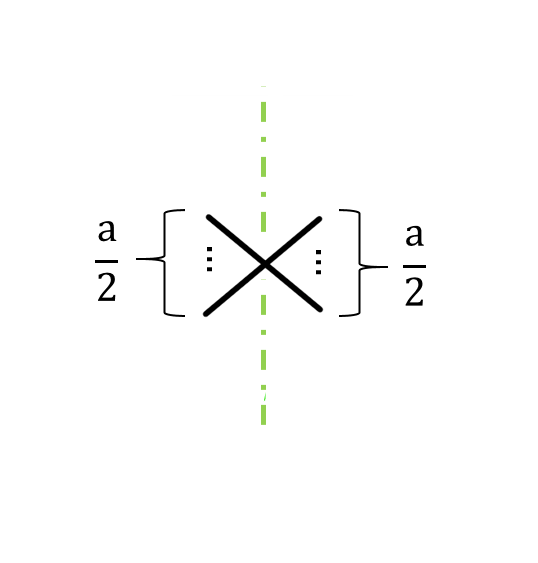}}
    \hspace{\distance}
    \subfigure[]{\label{MES11}
        \includegraphics[height=220pt]{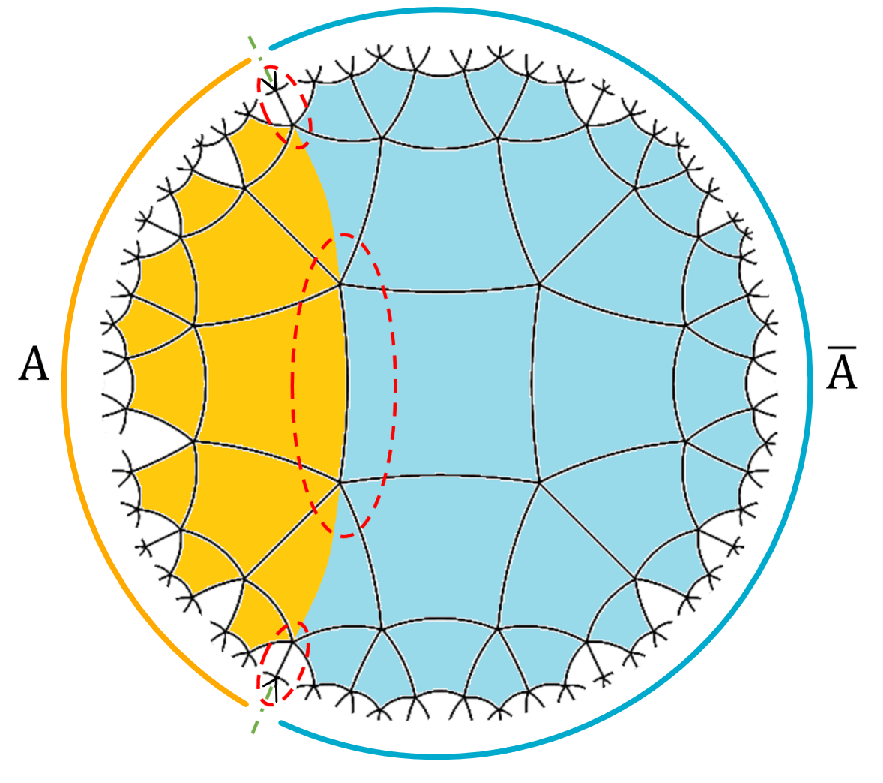}}
    \caption{The green dashed lines in (a) and (b) are boundaries which divide edges of the tensors into two parts; The green dashed lines in (c) are the boundaries which divide the tensor network into two parts. In (c) $\frac{a-1}{2}=2$ and the structure in the red circle cannot be contracted}
    \label{MES1*}
\end{figure}

\subsubsection{$\kappa_c\geq\frac{b}{b-2}\Rightarrow \exists$ flat ES}

In previous subsection we have learned that when $\kappa_c <
\frac{a}{2}$, ES being non-flat is a common phenomenon.
Nevertheless, we point out that when $\kappa_c\geq\frac{b}{b-2}$,
it is possible to construct single interval whose ES is flat.

Next we just prove the existence of flat ES by constructing a
specific interval $A$ with ``minimal secant geodesic'', which is
obtained by following steps (Fig.\ref{FigCut}). We start from the
midpoint of an edge between two uncontracted edges on the
boundary, then connect this point with the midpoint of another
edge in the polygon which has the farthest distance to this point.
Next we choose the neighboring polygon of this new midpoint in the
bulk and connect the midpoint with the other farthest midpoint in
this polygon. Repeat above steps until it reaches the boundary of
the network. The trajectory forms a geodesic called minimal secant
geodesic, denoted by $G_m$. It should be noticed that for a
polygon with odd edges, there are two middle points which are the
farthest from the specified midpoint, one to the left and the
other to the right, as shown in Fig.\ref{FigMES}. We need choose
these two midpoints by turn in above steps, as shown in
Fig.\ref{FigCut}.
\begin{figure}
    \centering
    \includegraphics[height=120pt]{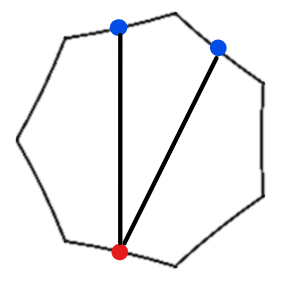}
    \caption{In $b=7$, so for a given red midpoint there are two blue midpoints which are the farthest.}
    \label{FigMES}
\end{figure}
A minimal secant geodesic $G_m$ divides the boundary of network into two
parts $A$ and $\bar A$, which almost have the same size.

We will show that for such a division, the corresponding ES is
flat by proving that the greedy algorithm starting from either
$A$ or $\bar A$ does not stop until the sequence of cuts reaches
$G_m$. The proofs for $A$ and $\bar A$ are parallel. So we only
prove the case for $A$.

Similarly, thanks to Lemma \ref{LemmaLoopCut}, we focus on the
case that the cut $C$ is single connected and connected to
$\partial A$, which is denoted as $C=[N_1,N_2,\cdots,N_l]$,
with distinct nodes.

We give $G_m$ a direction such that $\Phi$ is on its right
hand side. Then $G_m$ becomes a directed cut which is denoted as
$[N_1',N_2',\cdots,N_m']$. By definition, these nodes are
distinct.

When $C$ and $G_m$ are not overlapped, the edges connecting
those nodes in $C$ and $G_m$ at least form a polygon. In general,
they may enclose one or more polygons, as illustrated in
Fig.\ref{FigCut}.

We pick out any one of them and label it as $Y$. Let the set
of those nodes on the boundary of $Y$ to be the union of
$[N_{p+1},N_{p+2},\cdots,N_{p+u}]$ in $C$ and
$[N_{q+1}',N_{q+2}',\cdots,N_{q+v}']$ in $G_m$. $N_{p+1}$ and
$N_{q+1}'$ are neighboring to each other. $N_{p+u}$ and
$N_{q+v}'$ are neighboring to each other. We naturally have
$u\geq2$ after excluding the cases in Lemma \ref{LemmaLoopCut}.
Let the reduced interior angle of $Y$ at $N_{p+i}$ as $x_i$ for
$i\in\ke{1,2,\cdots,u}$ and the reduced interior angle of $Y$ at
$N_{q+j}$ as $x_j'$ for $j\in\ke{1,2,\cdots,v}$. Similar to the
relation in (\ref{AnglePolygon}) in the proof of Lemma
\ref{LemmaLoopCut}, for $Y$, we have
\be\label{EulerFormula2}
    \sum_{i=1}^{u} x_i + \sum_{j=1}^{v} x_j' \leq \frac
    b{b-2}(u+v-2).
\ee Suppose that the part $[N_{q+1}',N_{q+2}',\cdots,N_{q+v}']$
crosses $w$ elementary polygons. Due to the special construction
of $G_m$, we have the relation \be
    v < \frac w2 (b-2)+2.
\ee
So, we have
\be
    \sum_{j=1}^{v} x_j'
    =v+w-1
    > \frac b{b-2}(v-2)+1.
\ee
Plugging it into (\ref{EulerFormula2}), we obtain
\be\label{Cutleq}
    \sum_{i=1}^{u} x_i
    < \frac b{b-2}u-1
    \leq   \kappa_c u - 1.
\ee
On $[N_{p+1},N_{p+2},\cdots,N_{p+u}]$, tensor chain $M=\bbm
*&*&\cdots&*\\n_1&n_2&\cdots&n_u \ebm$  is connected to $C$, where
$n_i=a-1-x_i$ for $i\in\ke{1,2,\cdots,u}$. From (\ref{Cutleq}) and
Theorem \ref{theoremkappac}, we know $M$ is unprotected, thus $C$
is unprotected.

In conclusion, once $C\neq G_m$, $C$ is unprotected and the greedy
algorithm progresses. So those tensors between $A$ and $G_m$
will be absorbed. In parallel, those tensors between $\bar A$
and $G_m$ will be absorbed under the greedy algorithm starting
from $\bar A$. Finally, the sequence of cuts reaches $G_m$,
leading to a flat ES.

\subsubsection{$\kappa_c<\frac{b}{b-2}\Rightarrow$ non-flat ES}

Here we prove that when $\kappa_c<\frac{b}{b-2}$, the ES of a
single and large interval $A$ is non-flat. Perhaps this argument
is the most important part in this section because it supplies us
a quantitative criteria to justify if a tensor network has a
not-flat ES.

Consider a single interval $A$ and its complement $\bar A$ on the
boundary of a given tensor network. There exists a continuous
line, called $G$, connecting two ending points of $A$ with a
minimal cuttings on the edges of the network. The line $G$ divides
the whole network into two sub tensor networks (see Fig.\ref{FigNFES}).

It is noticed that the nearest neighboring tensors of line $G$
form two tensor chains. We call these two tensor chains as $M_A$
and $M_{\bar A}$, respectively. As an example, the skeletons of
these two tensor chains are marked in Fig.\ref{FigNFES}. We set
all the indexes associated with the edges cut by line $G$ as upper
indexes, while the other indexes are lower indexes.

Assume that $M_A$ has $k_A$ nodes, and $M_{\bar A}$ has $k_{\bar
    A}$ nodes. Set the number of elementary polygons crossed by
line $G$ to be $F$. Then we have two equations
\be\label{NES1}\begin{split}
    &\kappa(M_A)k_A + \kappa(M_{\bar A})k_{\bar A}  =  b F,\\
    &k_A + k_{\bar A}  =  (b-2) F + 2.
\end{split}\ee

Now we provide a proof by contradiction. We assume that the ES
would be flat, then $M_A,M_{\bar A}\in S_D$. According to Theorem
\ref{theoremkappaleq}, we have
\be\label{NES2} \kappa(M_A) \leq
\kappa_c,\quad \kappa(M_{\bar A}) \leq \kappa_c.
\ee
We substitute
(\ref{NES2}) into (\ref{NES1}) and get an inequality as
\be\label{NFES3}
\kappa _c\geq \frac{b F}{(b-2) F+2}.
\ee

To simulate real AdS spacetime, the number of layers in a network
is expected to be large enough. Then for large interval $A$,
$F\gg 1$. Since $\frac{2\kappa_c}{b-(b-2)\kappa_c}$ is a finite
number, \be\label{NFES4} F> \frac{2\kappa_c}{b-(b-2)\kappa_c}. \ee
From (\ref{NFES3}) and (\ref{NFES4}), we have
$\kappa_c\geq\frac{b}{b-2}$, contradictory to the initial
assumption. Thus, when $\kappa_c<\frac{b}{b-2}$, the ES of a
single interval $A$ must be non-flat in a network with large
layers.

\begin{figure}
    \centering
    \includegraphics[width=150pt]{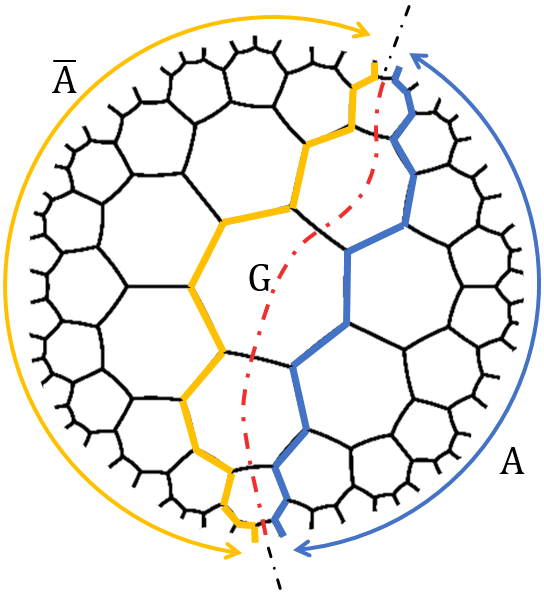}
\caption{The blue (orange) line is interval $A$ ($\bar A$). The
red dashed line is line $G$ with minimal cuttings. The blue
(orange) polyline corresponds to tensor chain $M_A$ ($M_{\bar
A}$).}\label{FigNFES}
\end{figure}

\section{Conclusions and outlooks}

In this paper we have presented a general framework for tensor
networks with tensor constraints based on the tiling of $H^2$
space. A notion of critical protection based on the tensor chain
has been proposed to describe the behavior of tensor networks
under the action of greedy algorithm. In particular, a criteria
has been developed with the help of the average reduced interior
angle of CP chain such that for a given tensor network the ability
of QEC and the flatness of ES can be justified in a quantitative
manner. We have also demonstrated a lot of examples of tensor
network and discussed their properties of QEC and ES. In general,
once the ability of QEC of a tensor network becomes stronger, then
its ES becomes flat more easily, and vice versa. By contrast, it
is fascinating to notice that AdS spacetime is endowed with these
two holographic features with perfect balance indeed. Currently it
is still challenging to construct tensor networks which could
capture all the holographic features of AdS spacetime. What we
have found in this paper may shed light on this issue. Firstly, we
have learned that the notion of critical protection provides a
description on the limit of information transmission with full
fidelity. In the case that the CP curve $H_c$ is a circle, {\it
    i.e.} $\lambda_c L^2>1$, the information in the interior of $H_c$
can be transmitted to its surface without loss, where we have
restored the AdS radius $L$. While, for a circle $H$ which is
larger than $H_c$, its interior information can not be transmitted
to its surface without loss. So we can say that $H_c$ is the
maximal boundary which can holographically store the interior
information \cite{Flammia:2016xvs,Jacobson:2015hqa}. Thus, for a
tensor network which captures the feature of QEC as AdS space, it
must not contain circular CP curves, which requires $\lambda_c
L^2\leq1$. Furthermore, if we intend to construct a single tensor
network which exhibits both QEC and non-flat ES, it seems that the
tensor networks with $\kappa_c \in (\kappa_h, \kappa_1)$ might
have more likelihood to approach this goal.

Next we address some open issues that should be crucial for one to
explore the role of tensor networks with constraints in
holographic approach. Firstly, because of the chain structure of
tensor constraint, in our present framework we have
    investigated QEC and ES only for a single interval on the
boundary, which is just similar to the setup for hyperinvariant
tensor network in \cite{Evenbly:2017htn}. It is an open question
whether QEC can be realized for multi-intervals on the boundary,
as investigated in network with perfect tensors or random tensors
\cite{Hayden:2016cfa,Yang:2015uoa,Pastawski:2015qua}. Actually,
our preliminary investigation reveals that if the number of
intervals is large enough, it would be very hard to realize QEC
with non-flat ES for multi-intervals, because it involves in
constructing tensor constraint with scales as large as the
entanglement wedge of the multi-intervals, which is rather
complicated. We would like to leave this issue for further
investigation.

Secondly, in order to simulate AdS space, it is desirable to send
the number of layers of tensor network to infinity. Then the area
of its boundary goes to infinity as well. Under this limit, the
treatment on the boundary effect of tensor constraints is subtle.
When the CP curve is a hypercircle with $\lambda_c L^2<1$, it has
a constant distance to the geodesic which is $d_c=L\,
\text{arctanh}(\lambda_c L^2)$. The CP curve is unprotected once
the boundary effect is considered, so the boundary effect scales
as $d_c$, which is independent of the number of the layers. When
$d_c/L$ is small, the boundary effect becomes negligible in this
limit comparing to the infinite area of the boundary. However,
when $d_c/L$ is very large, such as $\kappa_c\to\kappa_h+0$, the
boundary effect can not be neglected.

Finally, we are concerned with the issue how to reproduce the
Cardy-Calabrese formula of Renyi entropy (\ref{CardyFormula})
in the framework of tensor networks. It is known that Renyi
entropy depends not only on the tiling and tensor constraints, but
also on the matrix elements of tensors, such as the elements of
tensor $U$ and $Q$ in Appendix \ref{SectionExistence}. In
addition, we are interested in the possible relation between the
CP curve and the gravity dual of Renyi entropy. In
\cite{Dong:2016fnf}, the $n$th-order holographic Renyi entropy can
be calculated by the area of a cosmic brane$_n$ with tension
$T_n$, namely \be
n^2\partial_n\kc{\frac{n-1}{n}S_n}=\frac{\text{Area}(\text{Cosmic
        Brane}_n)}{4G_N}.
\ee The cosmic brane$_n$ backreacts to the geometry at order
$T_nG_N$ where $G_N$ is the Newton constant. However, if we simply
set $T_nG_N\to0$, all the cosmic branes become probe branes
\footnote{ Notice that the tension is $T_n=\frac{n-1}{4nG_N}$ and
    the product $T_nG_N$ is fixed in AdS/CFT
    \cite{Dong:2016fnf}. However, now the question we are asking is
    which tensor network can mimic AdS/CFT. So we loose $T_nG_N$. }. Then, for a given subsystem on the
boundary, those cosmic branes would have the same area and
flat entanglement spectrum appears. According to Subsection
\ref{SubSectionESCPCurve} in our paper, when $d_c/L$ is large, the
entanglement spectrum becomes non-flat, while when $d_c/L$ is
small, the entanglement spectrum becomes flat. It would be
interesting to explore the possible relation between $T_nG_N$ and
$d_c/L$ in the light of this observation.

\acknowledgments

We are grateful to Long Cheng, Glen Evenbly, Wencong Gan, Muxin
Han, Ling-Yan Hung, Hai Lin, Wei Li, Fuwen Shu, Yu Tian, Menghe
Wu, Xiaoning Wu and Hongbao Zhang for helpful discussions and
correspondence. This work is supported by the Natural Science
Foundation of China under Grant No. 11575195. Y.L. also
acknowledges the support from Jiangxi young scientists (JingGang
Star) program and 555 talent project of Jiangxi Province. Z. Y.
Xian is supported by National Postdoctoral Program for Innovative
Talents BX20180318.

\begin{appendix}

    \section{Hyperbolic geometry in 2 dimensional space}\label{SectionH2}

    \subsection{$SL(2,R)$}
    In
    this section we present a brief review on the geometric property
    of $H^2$ space. Without loss of generality, we choose the radius
    of $H^2$ to be 1. Then the scalar curvature of $H^2$ geometry is
    $-2$. The metric in Poincare coordinate $\{x\in R,0<z<\infty\}$ is
    \be
    ds^2=\frac{dx^2+dz^2}{z^2}.
    \ee
    We define new coordinate $\zeta=x+i z$ to rewrite the
    metric as
    \be
    ds^2=-\frac{4 d\zeta d\zeta^*}{(\zeta-\zeta^*)^2}.
    \ee
    The isometry of $H^2$ geometry is $SL(2,R)$, which means
    the form of the metric is unchanged under the coordinate
    transformation
    \be
    \zeta\to \frac{\alpha\zeta+\beta}{\gamma\zeta+\delta},
    \ee
    where real parameters $\alpha,\beta,\gamma,\delta$ satisfy $\alpha\delta-\beta\gamma=1$.

    \subsection{Curves of constant curvature}
    One key notion that we have frequently used in this paper is the
    curve of constant curvature (CCC) in $H^2$ space.  The geodesic
    curvature of a curve with an affine parameter $s$ is given by \be
    \lambda^\mu=\frac{d^2 x^\mu}{ds^2}+\Gamma^\mu_{\nu\rho} \frac{dx^\nu}{ds}
    \frac{dx^\rho}{ds}.
    \ee

    The curves with $\lambda^\mu=0$ are geodesics in $H^2$
    space. The geodesic distance of any two points with
    coordinates $(x_1,z_1)$ and $(x_2,z_2)$ can be derived as
    $d=\text{arccosh}\frac{(x_1-x_2)^2+z_1^2+z_2^2}{2z_1z_2}$.

    There are three kinds of CCC in $H^2$ space, namely, the
    circle, horocircle and hypercircle, as illustrated in
    Fig.\ref{Figcircles}.

    A circle is a curve whose geodesic distance to a given point
    (the center of the circle) is a constant $r$. The geodesic
    curvature of a circle with radius $r$ is $\lambda=\coth(r)$.

    A horocircle (or horocycle) is a curve whose normal geodesics
    all converge asymptotically to its center in the same
    direction, so it is also called limit circle. The geodesic
    curvature of a horocircle is equal to $1$.

    A hypercircle (or hypercycle) is a curve whose points have the
    same orthogonal distance $d$ from a given geodesic, so it is also
    called equidistant curve. The corresponding geodesic is called
    its axis. The geodesic curvature of a hypercircle is
    \be\label{distant}
    \lambda=\tanh(d).
    \ee Of course, a geodesic is a hypercircle with $d=0$.

    As a summary, one can classify all the CCCs in $H^2$ space by their geodesic curvature
    \be\label{ClassifyCCC}
    \begin{array}{ll}
        \lambda>1,   & \text{circle}    \\
        \lambda=1,   & \text{horocircle} \\
        0\leq\lambda<1,  & \text{hypercircle}    \\
    \end{array}
    \ee

    \begin{figure}
        \includegraphics[width=150pt]{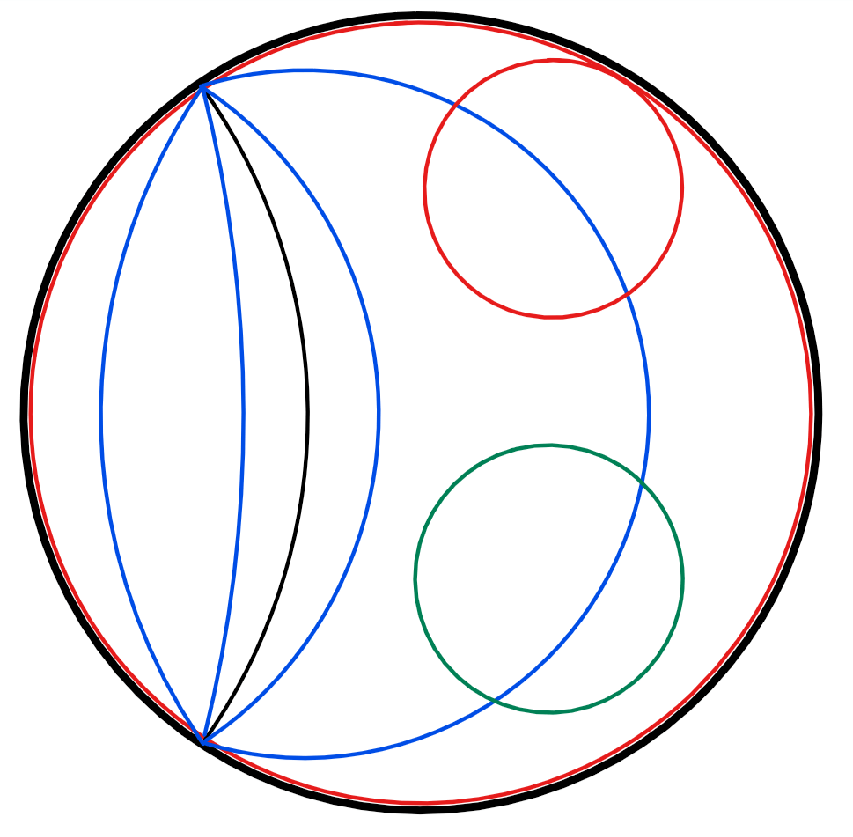}\\
        \caption{Circle (green), horocircles (red) and hypercircles (blue and black) in $H^2$ space. Those hypercircles share the same axis (black). They are symmetric under the reflection with respect to their axis.}\label{Figcircles}
    \end{figure}

    \begin{figure}
        \newcommand{\minipagewidth}{0.3\linewidth}
        \newcommand{\figheight}{110pt}
        \begin{minipage}[t]{\minipagewidth}
            \centering
            \includegraphics[height=\figheight]{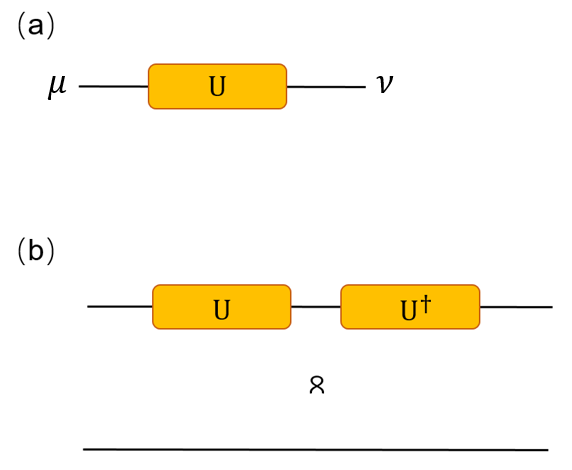}
            \caption{(a) Tensor $U$. (b) Tensor $U$ is proportional to an isometry.}\label{FigU}
        \end{minipage}
        \hfill
        \begin{minipage}[t]{\minipagewidth}
            \centering
            \includegraphics[height=\figheight]{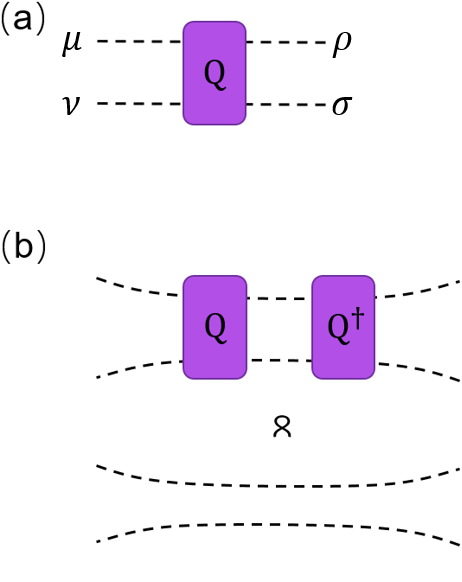}
            \caption{(a) Tensor $Q$, where two indexes on each side are grouped together. (b) Tensor $Q$ is proportional to the isometry between the two grouped indexes.}\label{FigQ}
        \end{minipage}
        \hfill
            \begin{minipage}[t]{\minipagewidth}
            \centering
            \includegraphics[height=\figheight]{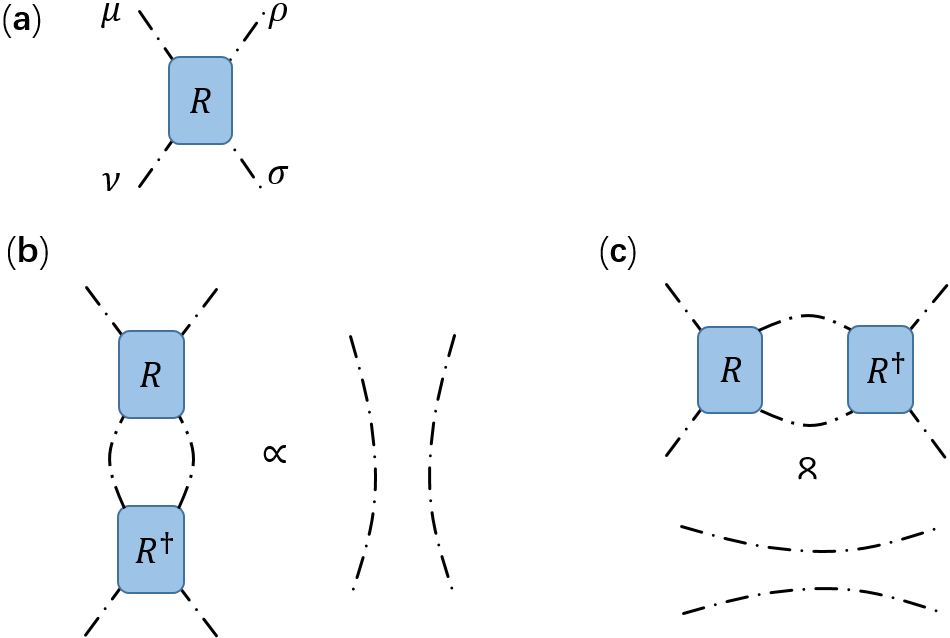}
            \caption{(a) Tensor $R$. (b) Tensor $R$ is proportional to isometries along two directions.}\label{FigR}
        \end{minipage}
    \end{figure}

    \begin{figure}
        \newcommand{\minipagewidth}{0.47\linewidth}
        \newcommand{\figwidth}{120pt}
        \newcommand{\figheight}{120pt}
        \begin{minipage}[t]{\minipagewidth}
            \centering
            \subfigure[]{
                \includegraphics[height=\figheight]{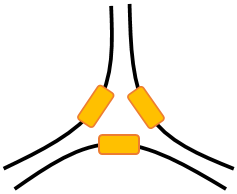}
            }
            \subfigure[]{
            \includegraphics[height=\figheight]{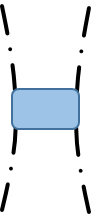}
            }
            \caption{(a) Tensor $T$ and (b) Tensor $E$ for the tensor network with $\ke{7,3}$ tiling and
                $S_c=\left\{\bbm 1\\3 \ebm,\bbm 1&1\\1&1 \ebm\right\}$.}\label{FigTE7311}
        \end{minipage}
        \hfill
        \begin{minipage}[t]{\minipagewidth}
            \centering
            \subfigure[]{
                \includegraphics[height=\figheight]{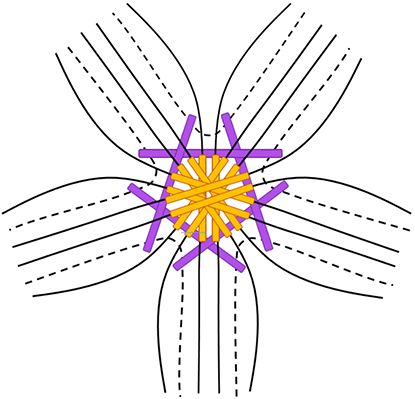}
            }
            \subfigure[]{
                \includegraphics[height=\figheight]{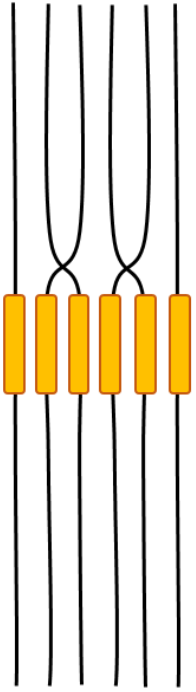}
            }
            \caption{(a) Tensor $T$ and (b) Tensor $E$ for the tensor network with $\{4,5\}$ tiling
                and $S_c=\left\{\bbm 1\\4 \ebm,\bbm 1&1&1\\3&2&3 \ebm
                \right\}$.}\label{FigTE45111}
        \end{minipage}
    \end{figure}

    \section{Specific construction of tensors subject to tensor constraints}\label{SectionExistence}

Analogous to the construction of tensors in \cite{Evenbly:2017htn}
and \cite{Ling:2018qec}, we define tensors $U$, $Q$ and $R$ as
the building blocks for tensors $T$ and $E$, as shown in
Fig.\ref{FigU}, \ref{FigQ} and \ref{FigR}. The elements of tensor
$U$ are $U_{\mu\nu}$, which satisfy
    \be
        U_{\mu\nu}=U_{\nu\mu}, \quad
        \sum_{\nu} U_{\mu\nu}U_{\rho\nu}^*\propto\delta_{\mu\rho}.
    \ee
    The elements of tensor $Q$ are $Q_{\mu\nu\rho\sigma}$ which satisfy
    \be
      Q_{\mu\nu\rho\sigma}=Q_{\rho\sigma\mu\nu}=Q_{\nu\mu\sigma\rho}, \quad
        \sum_{\rho\sigma} Q_{\mu\nu\rho\sigma}Q_{\mu'\nu'\rho\sigma}^*\propto\delta_{\mu\mu'}\delta_{\nu\nu'}.
    \ee
    where two indexes $\mu\nu$ ($\rho\sigma$) are grouped together.
    The elements of tensor $R$ are $R_{\mu\nu\rho\sigma}$, which satisfy
    \be
        R_{\mu\nu\rho\sigma}=R_{\rho\sigma\mu\nu}=R_{\nu\mu\sigma\rho}, \quad
        \sum_{\rho\sigma} R_{\mu\nu\rho\sigma}R_{\mu'\nu'\rho\sigma}^*\propto\delta_{\mu\mu'}\delta_{\nu\nu'}, \quad
        \sum_{\nu\sigma} R_{\mu\nu\rho\sigma}R_{\mu'\nu\rho'\sigma}^*\propto\delta_{\mu\mu'}\delta_{\rho\rho'}.
    \ee

For the tensor network with $\ke{7,3}$ tiling and $S_c=\left\{\bbm
1\\2 \ebm,\bbm 1&1\\1&1 \ebm\right\}$, and the tensor network with
$\{4,5\}$ tiling and $S_c=\left\{\bbm 1\\4 \ebm,\bbm 1&1&1\\3&2&3
\ebm \right\}$, we construct tensors $T$ and $E$ in
Fig.\ref{FigTE7311} and \ref{FigTE45111}, respectively.
The specific structures of the top tensor chain in these two
tensor networks are also shown in Fig.\ref{FigMt}.

    \begin{figure}
        \centering
        \subfigure[]{
            \includegraphics[width=0.37\linewidth]{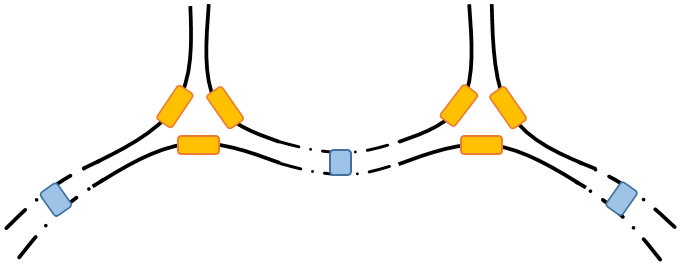}
        }
        \subfigure[]{
            \includegraphics[width=0.58\linewidth]{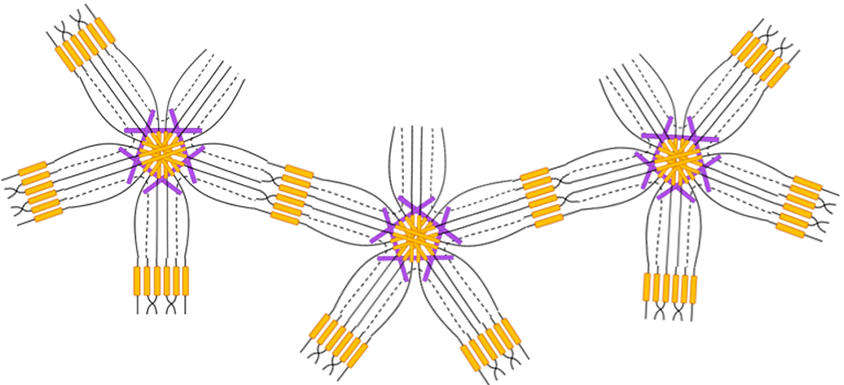}
        }
        \caption{(a) The top tensor chain $\bbm 1&1\\2&2 \ebm$ in the tensor network with $\ke{7,3}$ tiling and $S_c=\left\{\bbm 1\\2 \ebm,\bbm 1&1\\1&1 \ebm\right\}$. (b) The top tensor chain $\bbm 1&1&1\\3&2&3 \ebm$ in the tensor network with $\{4,5\}$ tiling and $S_c=\left\{\bbm 1\\4 \ebm,\bbm 1&1&1\\3&2&3 \ebm \right\}$.
        }\label{FigMt}
    \end{figure}

\end{appendix}

\end{document}